\def\full{1} 
\definecolor{darkgreen}{rgb}{0.1,0.8,0.1}
\newtheorem{theorem}{Theorem}
\newtheorem{lemma}{Lemma}
\newtheorem{proposition}{Proposition}
\newtheorem{definition}{Definition}
\newtheorem{corollary}{Corollary}
\newtheorem{fact}{Fact}[section]
\newtheorem{example}{Example}[section]
\newtheorem{assumption}{Assumption}[section]
\newcommand{\normf}{\dagger}
\newcommand{\transpose}{\mathrm{T}}
\newcommand{\reals}{\mathbf{R}}
\newcommand{\trace}{\mathrm{Tr}}
\newcommand{\Kld}{\mathrm{KL}}
\newcommand{\Rank}{\mathrm{rank}}
\newcommand{\calE}{\mathcal{E}}
\newcommand{\calR}{\mathcal{R}}
\newcommand{\bbB}{\mathbb{B}}
\newcommand{\calp}{\mathcal{P}}
\newcommand{\lt}{\underline{t}}
\newcommand{\ut}{\overline{t}}
\newcommand{\bbC}{\mathbb{C}}
\newcommand{\bbD}{\mathbb{D}}
\newcommand{\bbN}{\mathbb{N}}
\newcommand{\calN}{\mathcal{N}}
\newcommand{\bbS}{\mathbb{S}}
\newcommand{\bbU}{\mathbb{U}}
\newcommand{\mzero}{\mathbf{0}}
\newcommand{\mc}{\mathbf{c}}
\newcommand{\mP}{\mathbf{P}}
\newcommand{\mr}{\mathbf{r}}
\newcommand{\mmu}{\mathbf{u}}
\newcommand{\mmv}{\mathbf{v}}
\newcommand{\mmw}{\mathbf{w}}
\newcommand{\mx}{\mathbf{x}}
\newcommand{\my}{\mathbf{y}}
\newcommand{\mz}{\mathbf{z}}
\newcommand{\mX}{\mathbf{X}}
\newcommand{\mY}{\mathbf{Y}}
\newcommand{\mZ}{\mathbf{Z}}
\newcommand{\calI}{{\mathcal I}}
\newcommand{\calP}{{\mathcal P}}
\newcommand{\spans}{\mathrm{Span}}
\newcommand{\vx}{\mathbf{x}}
\newcommand{\vZ}{\mathbf{Z}}
\newcommand{\E}{\mathrm{I\! E}}
\providecommand{\mypara}[1]{\smallskip\noindent\emph{#1} }
\providecommand{\myparab}[1]{\smallskip\noindent\textbf{#1} }
\definecolor{myred}{HTML}{880000}
\definecolor{mygreen}{HTML}{008800}
\definecolor{myblue}{HTML}{000088}
\definecolor{linkblue}{HTML}{0000BB}
\title{Adaptive Reduced Rank Regression}
\newcommand\thankssymb[1]{\textsuperscript{{#1}}}
\author{
	Qiong Wu\thankssymb{}\thanks{\thankssymb{} Correspondence to: Qiong Wu <qwu05@email.wm.edu>. } \\
 William \& Mary \\
	\And
	Felix M. F. Wong \\ 
	Independent Researcher\thankssymb{}\thanks{\thankssymb{} Currently at Google.}
	\And
	Yanhua Li \\
	Worcester Polytechnic Institute \And
	Zhenming Liu \\
	 William \& Mary \\
	\And 
	Varun Kanade \\
	{University of Oxford}
}
\begin{document}
\maketitle

\ifnum\full=1
\vspace{-8mm}
\tableofcontents
\newpage
\fi

\begin{abstract}
We study the low rank regression problem $\my = M\mx + \epsilon$, where $\mx$ and $\my$ are $d_1$ and $d_2$ dimensional vectors respectively. We consider the extreme high-dimensional setting where the number of observations $n$ is less than $d_1 + d_2$. Existing algorithms are designed for settings where $n$ is typically as large as $\Rank(M)(d_1+d_2)$. This work provides an efficient algorithm which only involves two SVD, and establishes statistical guarantees on its performance. The algorithm decouples the problem by first estimating the precision matrix of the features, and then solving the matrix denoising problem. To complement the upper bound, we introduce new techniques for establishing lower bounds on the performance of any algorithm for this problem. Our preliminary experiments confirm that our algorithm often out-performs existing baselines, and is always at least competitive.
\end{abstract}
\section{Introduction}
We consider the regression problem $\my = M \mx + \epsilon$ in the high
dimensional setting, where $\mx \in \reals^{d_1}$ is the vector of features,
$\my \in \reals^{d_2}$ is a vector of responses, $M \in \reals^{d_2 \times
d_1}$ are the learnable parameters, and $\epsilon \sim N(0,
\sigma^2_{\epsilon}I_{d_2 \times d_2})$ is a noise term. 
High-dimensional setting refers to the case where the number of observations
$n$ is insufficient for recovery and  
hence regularization for estimation is
necessary~\cite{koltchinskii2011nuclear,negahban2011estimation,chen2013reduced}.
This high-dimensional model is widely used in practice, such as identifying
biomarkers~\cite{zhu2017low}, understanding risks associated with various
diseases~\cite{frank2015dietary,batis2016using}, image
recognition~\cite{rahim2017multi,fan2017hyperspectral}, forecasting equity
returns in financial
markets~\cite{polk2006cross,stock2011dynamic,medeiros2012estimating,bender2013foundations},
and analyzing social networks~\cite{wu2017generalized,ralescu2011spectral}. 

We consider the ``large feature size'' setting, in which the number of features $d_1$ 
is excessively large and can be even larger than the number of observations $n$. This setting frequently arises in practice because it is often straightforward to perform feature-engineering and produce a large number of potentially useful features in many machine learning problems. For example, in a typical equity forecasting model, $n$ is around 3,000 (i.e., using 10 years of market data), whereas the number of potentially relevant features can be in the order of thousands~\cite{polk2006cross,gu2018empirical,kelly2019characteristics,chen2019deep}. In predicting the popularity of a user in an online social network, $n$ is in the order of hundreds (each day is an observation and a typical dataset contains less than three years of data) whereas the feature size can easily be more than 10k~\cite{ravikumar2010high,bamman2014gender,sinha2013predicting}.

Existing low-rank regularization techniques (e.g.,
\ifnum\full=0
~\cite{anderson1951estimating, izenman1975reduced, koltchinskii2011nuclear,negahban2011estimation,ma2014adaptive}
\fi\ifnum\full=1
~\cite{koltchinskii2011nuclear,negahban2011estimation,ma2014adaptive}
\fi) are not optimized for the large feature size setting. These results assume that either the features possess the so-called restricted isometry property~\cite{candes2008restricted}, 
or their covariance matrix can be accurately estimated~\cite{negahban2011estimation}. Therefore, their sample complexity $n$ depends 
on either $d_1$ or the smallest eigenvalue value $\lambda_{\min}$ of $\mx$'s covariance matrix. For example, a mean-squared error (MSE) result that appeared in\ifnum\full=0
~\cite{negahban2011estimation} \else~\cite{negahban2011estimation}\fi  is of the form
$O\left(\frac{r(d_1+d_2)}{n \lambda^2_{\min}}\right)$.
When
$n \leq d_1/\lambda^2_{\min}$,
this result becomes trivial because the forecast $\hat \my = \mzero$ produces a comparable MSE. We design an efficient algorithm for the large feature size setting.  Our algorithm is a simple two-stage algorithm. Let $\mX \in \reals^{n \times d_1}$ be a matrix that stacks together all features and $\mY \in \reals^{n \times d_2}$ be the one that stacks the responses. In the first stage, we run a principal component analysis (PCA) on $\mX$ to obtain a set of uncorrelated features $\hat \mZ$. In the second stage, we run another PCA to obtain a low rank approximation of $\hat \mZ^{\transpose} \mY$ and use it to construct an output. 

While the algorithm is operationally simple, we show a powerful and generic result on using PCA to process features, a widely used practice for ``dimensionality reduction''~\cite{cao2003comparison,ghodsi2006dimensionality,friedman2001elements}. PCA is known to be effective to orthogonalize features by keeping 
only the subspace explaining large variations. But its performance can only be analyzed under the so-called factor model~\cite{stock2002forecasting,stock2011dynamic}. We show the efficacy of PCA \emph{without} the factor model assumption. 
Instead, PCA should be interpreted as a robust estimator of $\mx$'s covariance matrix. The empirical estimator $C = \frac 1 n \mX \mX^{\transpose}$ in the high-dimensional setting cannot be directly used because $n \ll d_1 \times d_2$, but it exhibits an interesting regularity: the leading eigenvectors of $C$ are closer to ground truth than the remaining ones. In addition,
the number of reliable eigenvectors grows as the sample size grows, so our PCA procedure
projects the features along reliable eigenvectors and 
\emph{dynamically} adjusts $\hat \mZ$'s rank to maximally utilize the raw features. 
Under mild conditions on the ground-truth covariance matrix $C^*$ of $\mx$, we show that it is always possible to decompose $\mx$ into a set of near-independent features and a set of (discarded) features that have an inconsequential impact on a model's MSE. 

When features $\mx$ are transformed into uncorrelated ones $\mz$, our original problem becomes $\my = N\mz + \epsilon$, which can be reduced to a matrix denoising problem~\cite{donoho2014minimax} and be solved by the second stage.  
Our algorithm guarantees that we can recover all singular vectors of $N$ whose associated singular values are larger than a certain threshold $\tau$. The performance guarantee can  be  translated into MSE bounds parametrized by commonly used variables (though, these translations usually lead to looser bounds). For example, when $N$'s rank is $r$, our result reduces 
the MSE from $O(\frac{r(d_1+d_2)}{n\lambda^{2}_{\min}})$  to 
 $O(\frac{rd_2}{n} + n^{-c})$ for a suitably small constant $c$. The improvement is most pronounced when $n \ll d_1$. 
 
We also provide a new matching lower bound. Our lower bound asserts that \emph{no algorithm} can recover a fraction of singular vectors of $N$ whose associated singular values are smaller than $\rho \tau$, where $\rho$ is a ``gap parameter''. Our lower bound contribution is twofold. First, we introduce a notion of ``local minimax'', which enables us to define a lower bound parametrized by the singular values of $N$. This is a stronger lower bound than those delivered by the standard minimax framework, which are often parametrized by the rank $r$ of $N$~\cite{koltchinskii2011nuclear}. Second, we develop a new probabilistic technique for establishing lower bounds under the new local minimax framework. Roughly speaking, 
our techniques assemble a large collection of matrices that share the same singular values of $N$ but are far from each other, so no algorithm can successfully distinguish these matrices with identical spectra.



\section{Preliminaries}

\vspace{-1mm}
\myparab{Notation.} Let $\mX \in \reals^{n \times d_1}$ and $\mY \in \reals^{n \times d_2}$ be data matrices with their $i$-th rows representing the $i$-th observation.
For matrix $A$, we denote
its singular value decomposition as $A = U^A \Sigma^A (V^A)^\transpose$ and $\mP_r(A)
\triangleq U_r^A \Sigma^A_r {V_r^A}^\transpose$ is the rank $r$ approximation obtained by
keeping the top $r$ singular values and the corresponding singular vectors. When the context is clear, we drop the superscript $A$ and use $U, \Sigma,$ and $V$ ($U_r$, $\Sigma_r$, and $V_r$) instead. Both $\sigma_i(A)$ and $\sigma^A_i$ are used to refer to $i$-th singular value of $A$. 
We use MATLAB notation when we refer to a specific row or column, e.g., $V_{1, :}$ is the first row of $V$ and $V_{:, 1}$ is the first column. $\|A\|_F$, $\|A\|_2$, and $\|A\|_{*}$ are Frobenius, spectral, and nuclear norms of $A$. In general, we use boldface upper case (e.g., $\mX$) to denote data matrices and boldface lower case (e.g., $\mx$) to denote one sample. Regular fonts denote other matrices. Let $C^* = \E[\mx \mx^{\transpose}]$ and $C = \frac 1 n \mX^{\transpose}\mX$ be the empirical estimate of $C^*$. Let $C^* = V^* \Lambda^* (V^*)^{\transpose}$ be the eigen-decomposition of the matrix $C^*$, and $\lambda^*_1
\geq \lambda^*_2, \ldots, \geq  \lambda^*_{d_1} \geq 0$ be the diagonal entries
of $\Lambda^*$. Let $\{\mmu_1, \mmu_2, \dots \mmu_{\ell}\}$ be an arbitrary set of column vectors, and $\spans(\{\mmu_1, \mmu_2, \dots, \mmu_{\ell}\})$ be the subspace spanned by it. 
An event happens with high probability means that it happens with probability $\geq 1 - n^{-5}$, where 5 is an arbitrarily chosen large constant and is not optimized. 
\begin{figure*}[th!]
\begin{subfigure}[t]{0.5\textwidth}
{
\begin{codebox}
\Procname{$\proc{Step-1-PCA-X}(\mX)$}
\li $[U, \Sigma, V] = \mathrm{svd}(\mX)$
\li $\Lambda = \frac 1 n (\Sigma^2)$; $\lambda_i = \Lambda_{i,i}$. 
\li \Comment \textbf{\small Gap thresholding.}
\li \Comment {\small $\delta = n^{-O(1)}$ is a tunable parameter.}
\li $k_1 = \max\{k_1: \lambda_{k_1} - \lambda_{k_1 + 1} \geq \delta\}$,
\li $\Lambda_{k_1}$: {\small diagonal matrix comprised of $\{\lambda_i\}_{i \leq {k_1}}$}. 
\li $U_{k_1}, V_{k_1}$: {\small $k_1$ leading columns of $U$ and $V$.}
\li $\hat \Pi = (\Lambda_{k_1})^{-\frac 1 2}V^{\transpose}_{k_1}$
\li $\hat \mZ_{+} = \sqrt n U_{k_1} (= X \hat \Pi^{\transpose})$. 
\li \Return $\{\hat \mZ_+, \hat \Pi\}$. 
\end{codebox}

}
\end{subfigure}
\begin{subfigure}[t]{0.5\textwidth}
{
\begin{codebox}
\Procname{$\proc{Step-2-PCA-Denoise}(\hat \mZ_+, \mY)$}
\li $\hat N^{\transpose}_+ \gets \frac 1 n\hat \mZ_+^{\transpose}\mY$. 
\li \Comment \textbf{\small Absolute value thresholding.}
\li \Comment {\small $\theta$ is a suitable constant; $\sigma_{\epsilon}$ is std. of the noise.}
\li $k_2 = \max \Bigl\{k_2: \sigma_{k_2}(\hat N_+) \geq \theta \sigma_{\epsilon}\sqrt{\frac{d_2}{n}}\Bigr\}$.
\li \Return $\mP_{k_2}(\hat N_+)$ 
\end{codebox}
\begin{codebox}
\Procname{$\proc{Adaptive-RRR}(\mX, \mY)$}
\li $[\hat Z_+, \hat \Pi] = \proc{Step-1-PCA-A}(\mX)$.
\li $\mP_{k_2}(\hat N_+) = \proc{Step-2-PCA-Denoise}(\hat Z_+, \mY)$. 
\li \Return $\hat M = \mP_{k_2}(\hat N_+) \hat \Pi$
\end{codebox}
}
\end{subfigure}
\vspace{-2mm}
\caption{Our algorithm ($\proc{Adaptive-RRR}$) for solving the regression $\my = M \mx + \epsilon$. 
}

\label{fig:pcarrr}
\vspace{-.6cm}
\end{figure*}

\myparab{Our model.} We consider the model $\my = M \mx + \epsilon$, where $\mx \in \reals^{d_1}$ is a multivariate Gaussian, $\my \in \reals^{d_2}$, $M\in\reals^{d_2 \times d_1}$, and $\epsilon \sim N(0, \sigma^2_{\epsilon}I_{d_2 \times d_2})$. We can relax the Gaussian assumptions on $\mx$ and $\epsilon$ for most results we develop. We assume a PAC learning framework, i.e., we observe a sequence $\{(\mx_i, \my_i)\}_{i \leq n}$ of independent samples and our goal is to find an $\hat M$ that minimizes the test error $\E_{\mx, \my}[\|\hat M \mx - M\mx\|_2^2]$. 
We are specifically interested in the setting in which $d_2 \approx n \leq d_1$.

The key assumption we make to circumvent the $d_1 \geq n$ issue is that the features are correlated. This assumption can be justified for the following reasons: \emph{(i)} In practice, it is difficult, if not impossible, 
to construct completely uncorrelated features. \emph{(ii)} When $n \ll d_1$, it is not even possible to \emph{test} whether the features are uncorrelated~\cite{bai2010spectral}. 
\emph{(iii)} When we indeed know that the features are independent, there are significantly simpler methods to design models. For example, we can build multiple models such that each model regresses on an individual feature of $\mx$, and then use a boosting/bagging method~\cite{friedman2001elements,schapire2013boosting} to consolidate the predictions. 

The correlatedness assumption implies that the eigenvalues of $C^*$ \emph{decays}. The only (full rank) positive semidefinite matrices that have non-decaying (uniform) eigenvalues are the identity matrix (up to some scaling). In other words, when $C^*$ has uniform eigenvalues, $\mx$ has to be uncorrelated.

We aim to design an algorithm that works \emph{even when} the decay is slow, such as when $\lambda_i(C^*)$ has a heavy tail. Specifically, our algorithm assumes $\lambda_i$'s are bounded by a heavy-tail power law series:

\begin{assumption} The $\lambda_i(C^*)$ series satisfies
$\lambda_i(C^*) \leq c \cdot i^{-\omega}$ for a constant $c$ and $\omega \geq 2$.  \end{assumption}
\vspace{-2mm}
We \emph{do not} make functional form assumptions on $\lambda_i$'s. This assumption also covers many benign cases, such as when $C^*$ has low rank or its eigenvalues decay exponentially. Many empirical studies report power law distributions of data covariance matrices~\cite{akemann2010universal,nobi2013random,urama2017analysis,clauset2009power}. Next, we make standard normalization assumptions. $\E\|\mx\|^2_2 = 1$, $\|M\|_2 \leq \Upsilon = O(1)$, and $\sigma_{\epsilon} \geq 1$.  
Remark that we assume only the spectral norm of $M$ is bounded, while its Frobenius norm can be unbounded. Also, we assume the noise $\sigma_{\epsilon} \geq 1$ is sufficiently large, which is more important in practice. The case when $\sigma_{\epsilon}$ is small can be tackled in a similar fashion. 
Finally, our studies avoid examining excessively unrealistic cases, so we assume $d_1 \leq d^3_2$. We examine 
the setting where existing algorithms fail to deliver non-trivial MSE, so we assume that $n \leq r d_1 \leq d^4_2$.

\section{Upper bound}

Our algorithm (see Fig.~\ref{fig:pcarrr}) consists of two steps.  
\ifnum\full=0
\else
\fi
\myparab{Step 1. Producing uncorrelated features.} We run a PCA to obtain a total number of $k_1$ orthogonalized features. See $\proc{Step-1-PCA-X}$ in Fig.~\ref{fig:pcarrr}. Let the SVD of $\mX$ be $\mX = U\Sigma(V)^{\transpose}$. Let $k_1$ be a suitable rank chosen by inspecting the gaps of $\mX$'s singular values (Line 5 in $\proc{Step-1-PCA-X}$). $\hat \mZ_+ =\sqrt n U_{k_1}$ is the set of transformed features output by this step. The subscript $+$ in $\hat \mZ_+$ reflects that a dimension reduction happens so the number of columns in $\hat \mZ_+$ is smaller than that in $\mX$. Compared to standard PCA dimension reduction, there are two differences: \emph{(i)} We use the left leading singular vectors of $\mX$ (with a re-scaling factor $\sqrt n$) as the output, whereas the PCA reduction outputs $\mP_{k_1}(\mX)$. \emph{(ii)} We design a specialized rule to choose $k_1$ whereas PCA usually uses a hard thresholding or other ad-hoc rules. 
\myparab{Step 2. Matrix denoising.} We run a second PCA on the matrix $(\hat N_+)^{\transpose} \triangleq \frac 1 n \hat{\mZ}_{+}^{\transpose} \mY$. The rank $k_2$ is chosen by a hard thresholding rule (Line 4 in $\proc{Step-2-PCA-Denoise}$). Our final estimator is $\mP_{k_2}(\hat N_+)\hat \Pi$, where $\hat \Pi = (\Lambda_{k_1})^{-\frac 1 2}V^{\transpose}_{k_1}$ is computed in $\proc{Step-1-PCA-X}(\mX)$.

\subsection{Intuition of the design}\label{sec:intuition}

While the algorithm is operationally simple, its design is motivated by carefully unfolding the statistical structure of the problem. We shall realize that applying PCA on the features  \emph{should not} be viewed as removing noise from a factor model, or finding subspaces that maximize variations explained by the subspaces as suggested in the standard literature~\cite{friedman2001elements,stock2002forecasting,stock2005implications}. 
Instead, it implicitly implements a robust estimator for $\mx$'s precision matrix, and the design of the estimator needs to be coupled with our objective of forecasting $\my$, thus resulting in a new way of choosing the rank.

\myparab{Design motivation: warm up.} We first examine a simplified problem $\my = N \mz + \epsilon$, where variables in $\mz$ are assumed to be uncorrelated. Assume $d = d_1 = d_2$
in this simplified setting. Observe that 
\begin{align}\label{eqn:denoise}
\frac 1 n \mZ^{\transpose} \mY  = \frac 1 n \mZ^{\transpose}(\mZ N^{\transpose} + E) =  (\frac 1 n \mZ^{\transpose} \mZ )N^{\transpose}+ \frac 1 n\mZ^{\transpose} E \approx
I_{d_1 \times d_1} N^{\transpose} + \frac{1}{n} \mZ^{\transpose} E =
N^{\transpose} +  \calE, 
\end{align}
where $E$ is the noise term and $\calE$ can be approximated by a matrix with independent zero-mean noises. 

\vspace{-2mm}
\mypara{Solving the matrix denoising problem.} Eq.~\ref{eqn:denoise} implies that when we compute $\mZ^{\transpose}\mY$, the problem reduces to an extensively studied matrix denoising problem~\cite{donoho2014minimax,gavish2014optimal}. 
We include the intuition for solving this problem for completeness. The signal $N^{\transpose}$ is overlaid
with a noise matrix $\calE$. $\calE$ will elevate all the singular values of $N^{\transpose}$ by an order of $\sigma_{\epsilon}\sqrt{d/n}$. 
We run a PCA to extract reliable signals: 
when the singular value of a subspace is $\gg \sigma_{\epsilon}\sqrt{d/n}$, the subspace contains significantly more signal than noise and thus we keep the subspace. Similarly, 
a subspace associated a singular value $\lesssim \sigma_{\epsilon} \sqrt{d/n}$ mostly contains noise. This leads to a hard thresholding algorithm that sets 
 $\hat N^{\transpose} = \mP_r(N^{\transpose} + \calE)$, where $r$ is the maximum index such that $\sigma_r(N^{\transpose} + \calE) \geq c\sqrt{d/n}$ for some constant $c$. 
In the general setting $\my = M \mx + \epsilon$, $\mx$ may not be uncorrelated. But when we set $\mz = (\Lambda^*)^{-\frac 1 2}(V^*)^{\transpose}\mx$, we see that $\E[\mz \mz^{\transpose}] = I$. This means knowing $C^*$ suffices to reduce the original problem to a simplified one. Therefore, our algorithm uses Step 1 to estimate $C^*$ and $\mZ$, and uses Step 2 to reduce the problem to a matrix denoising one and solve it by standard thresholding techniques.

\myparab{Relationship between PCA and precision matrix estimation.} In step 1, while we plan to estimate $C^*$, our algorithm runs a PCA on $\mX$. 
We observe that empirical covariance matrix $C = \frac 1 n \mX^{\transpose} \mX = \frac 1 n V (\Sigma)^2 (V)^{\transpose}$, 
i.e., $C$'s eigenvectors coincide with $\mX$'s right singular vectors. When we use the empirical estimator to construct $\hat \mz$, we obtain $\hat \mz = \sqrt n (\Sigma)^{-1}(V)^{\transpose}\mx$. When we apply this map to every training point and assemble the new feature matrix, we exactly get 
$\hat \mZ = \sqrt n \mX V (\Sigma)^{-1} = \sqrt{n} U$.  It means that using $C$ to construct $\hat \mz$ is the same as running a PCA in  $\proc{Step-1-PCA-X}$
with $k_1 = d_1$. 

\begin{wrapfigure}{l}{6.5cm}
\vspace{-6mm}
\includegraphics[width=\linewidth]{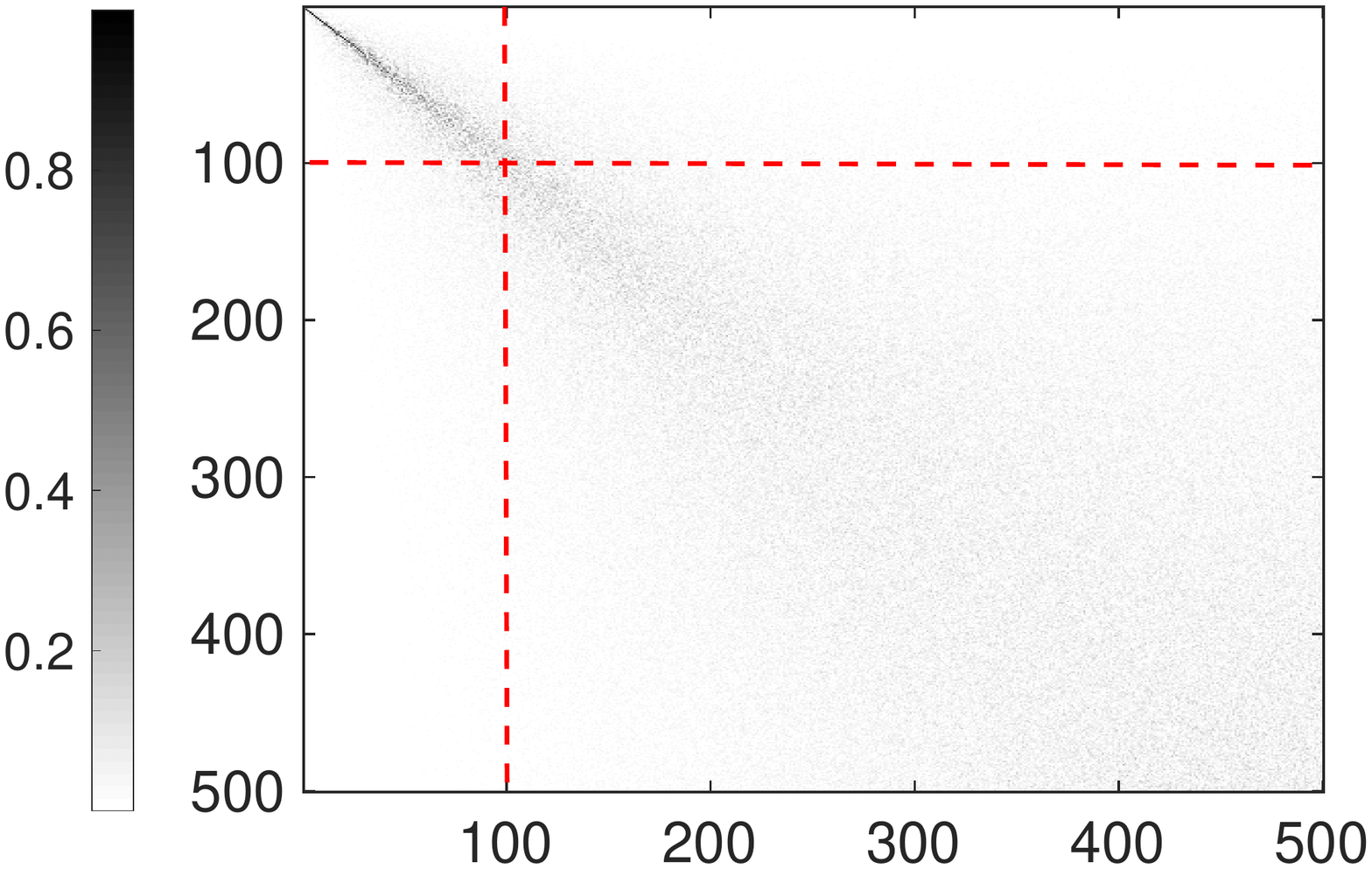}
\captionsetup{font=small}
\caption{The angle matrix between $C$ and $C^*$.}\vspace{-.3cm}
\label{fig:est_quality}
\vspace{-2mm}
\end{wrapfigure}

When $k_1 < d_1$, PCA uses a low rank approximation of $C$ as an estimator for $C^*$. We now explain why this is effective. First, note that $C$ is \emph{very far} from $C^*$ when $n \ll d_1$, therefore it is dangerous to directly plug in $C$ to find $\hat \mz$. Second, an interesting regularity of $C$ exists and can be best explained by a picture. In Fig.~\ref{fig:est_quality}, we plot the pairwise angles 
between eigenvectors of $C$ and those of $C^*$ from a synthetic dataset. Columns are sorted by the $C^*$'s eigenvalues in decreasing order. When $C^*$ and $C$ coincide, this plot would look like an identity matrix. When $C$ and $C^*$ are unrelated, then the plot behaves like a block of white Gaussian noise. We observe a pronounced pattern: the angle matrix can be roughly divided into two sub-blocks (see the red lines in Fig.~\ref{fig:est_quality}). The upper left sub-block behaves like an identity matrix, suggesting that the leading eigenvectors of $C$ are close to those of $C^*$. The lower right block behaves like a white noise matrix, suggesting that the ``small'' eigenvectors of $C$ are far from those of $C^*$. When $n$ grows, one can observe the upper left block becomes larger and this the eigenvectors of $C$ will sequentially get stabilized. Leading eigenvectors are first stabilized, followed by smaller ones. 
Our algorithm leverages this regularity by keeping only a suitable number of reliable eigenvectors from $C$ while ensuring not much information is lost when we throw away those ``small'' eigenvectors.

\myparab{Implementing the rank selection.} We rely on three interacting building blocks:
\vspace{-2mm}

\mypara{1. Dimension-free matrix concentration.} First, we need to find a concentration behavior of $C$ for $n \leq d_1$ to decouple $d_1$ from the MSE bound. We utilize a 
 dimension-free matrix concentration inequality~\cite{oliveira2010sums}\ifnum\full=1
 (also replicated as Lemma~\ref{lem:chernoff})
 \fi.
Roughly speaking, the concentration behaves as $\|C - C^*\|_2 \approx n^{-\frac 1 2}$.  
This guarantees that $|\lambda_i(C) - \lambda_i(C^*)| \leq n^{-\frac 1 2}$ by standard matrix perturbation results~\cite{kato1987variation}.

\vspace{-2mm}
\mypara{2. Davis-Kahan perturbation result.} However, the pairwise closeness of the $\lambda_i$'s does not imply the eigenvectors are also close.  
When $\lambda_i(C^*)$ and $\lambda_{i+1}(C^*)$ are close, the corresponding eigenvectors in $C$ can be ``jammed'' together. Thus, we need to identify 
an index $i$, at which $\lambda_i(C^*) - \lambda_{i + 1}(C^*)$ exhibits significant gap, and use a Davis-Kahan result to show that $\mP_i(C)$ is close to $\mP_{i}(C^*)$. 
On the other hand, the map $\Pi^* (\triangleq (\Lambda^*)^{-\frac 1 2}(V^*)^{\transpose})$ we aim to find depends on the square root of inverse $(\Lambda^*)^{-\frac 1 2}$, so we need 
additional manipulation to argue our estimate is close to $(\Lambda^*)^{-\frac 1 2}(V^*)^{\transpose}$. 

\vspace{-2mm}
\mypara{3. The connection between gap and tail.} Finally,  the performance of our procedure 
is also characterized by the total volume of signals that are discarded, i.e., $\sum_{i > k_1}\lambda_i(C^*)$, where $k_1$ is the location that exhibits the gap. 
The question becomes whether it is possible to identify a $k_1$ that simultaneously exhibits a large gap and ensures the tail after it is well-controlled, e.g., the sum of the tail is $O(n^{-c})$ for a constant $c$. 
We develop a combinatorial analysis to show that it is \emph{always possible} to find such a gap under the assumption that $\lambda_i(C^*)$ is bounded by a power law distribution with exponent $\omega \geq 2$. 
Combining all these three building blocks, we have:

\begin{proposition}\label{prop:closeness} Let $\xi$ and $\delta$ be two tunable parameters such that $\xi = \omega(\log^3 n / \sqrt n)$ and $\delta^3 = \omega(\xi)$. Assume that $\lambda^*_i \leq c \cdot i^{-\omega}$. Consider running $\proc{Step-1-PCA-X}$ in Fig.~\ref{fig:pcarrr}, with high probability, we have
(i) \text{Leading eigenvectors/values are close:} there exists a unitary matrix $W$ and a constant $c_1$ such that 
$ \|V_{k_1}(\Lambda_{k_1})^{-\frac 1 2} - V^*_{k_1}(\Lambda^*_{k_1})^{- \frac 1 2}W\| \leq \frac{c_1 \xi}{\delta^3}. $
(ii) \text{Small tail:}
 $   \sum_{i \geq k_1} \lambda^*_i \leq c_2 \delta^{\frac{\omega - 1}{\omega + 1}}$ for a constant $c_2$. 
\end{proposition}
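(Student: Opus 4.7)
The plan is to instantiate the three building blocks laid out in Section~\ref{sec:intuition}: dimension-free concentration, a combinatorial gap-tail argument, and Davis-Kahan with a careful rescaling.

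\textbf{Step 1 (Concentration of $C$).} First I would apply Oliveira's dimension-free matrix concentration inequality to $C = \frac{1}{n}\mX^{\transpose}\mX$, using the normalization $\E\|\mx\|^2_2 = 1$, to conclude $\|C - C^*\|_2 \leq \xi/2$ with high probability (the $\log^3 n / \sqrt{n}$ scaling matches the assumed growth of $\xi$). Weyl's inequality then yields $|\lambda_i - \lambda^*_i| \leq \xi/2$ for every $i$, so empirical gaps and true gaps differ by at most $\xi$.

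\textbf{Step 2 (Existence of a good gap; part (ii)).} Next I would prove a purely deterministic combinatorial lemma: under $\lambda^*_i \leq c\, i^{-\omega}$, there exists an index $k^* \leq K := \lceil c_0 \delta^{-1/(\omega+1)}\rceil$ such that $\lambda^*_{k^*} - \lambda^*_{k^*+1} \geq \delta + \xi$. The idea is pigeonhole on the telescoping sum: if all consecutive true gaps on $[1, K]$ were below $\delta + \xi$, then $\lambda^*_1 - \lambda^*_{K+1} < K(\delta+\xi) = O(\delta^{\omega/(\omega+1)})$, which combined with the power-law bound on $\lambda^*_{K+1}$ and case analysis on whether $\lambda^*_1$ is itself above or below $\delta^{\omega/(\omega+1)}$ yields the desired gap (either the gap is immediately at a small index, or it must occur within the window). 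By Step~1 this produces an empirical gap $\geq \delta$ at some index $\leq K$, so the algorithm's $k_1$ is at least $k^*$ (since the algorithm takes the \emph{maximum} such index), and in particular one can arrange $k_1 \geq c_0' \delta^{-1/(\omega+1)}$. The tail bound follows from the power law:
\begin{equation*}
\sum_{i > k_1} \lambda^*_i \;\leq\; \sum_{i > k_1} c\, i^{-\omega} \;\leq\; c'\, k_1^{-(\omega-1)} \;=\; O\bigl(\delta^{(\omega-1)/(\omega+1)}\bigr),
\end{equation*}
which is part (ii). Note this step also certifies that the true gap at the chosen $k_1$ is $\lambda^*_{k_1} - \lambda^*_{k_1+1} \geq \delta - \xi \geq \delta/2$, since $\xi = o(\delta)$ under the hypotheses.

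\textbf{Step 3 (Subspace closeness via Davis-Kahan).} For part (i), with a true gap of at least $\delta/2$ separating the top $k_1$ eigenvalues of $C^*$ from the rest and $\|C - C^*\|_2 \leq \xi/2$, Davis-Kahan produces a unitary $W$ on $\reals^{k_1}$ such that $\|V_{k_1} - V^*_{k_1} W\|_2 \leq c\, \xi/\delta$.

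\textbf{Step 4 (Rescaling by $\Lambda^{-1/2}$).} Finally I would decompose
\begin{equation*}
V_{k_1} \Lambda_{k_1}^{-1/2} - V^*_{k_1}(\Lambda^*_{k_1})^{-1/2} W = (V_{k_1} - V^*_{k_1} W)\Lambda_{k_1}^{-1/2} + V^*_{k_1}\bigl(W \Lambda_{k_1}^{-1/2} - (\Lambda^*_{k_1})^{-1/2} W\bigr).
\end{equation*}
The first term is bounded by $\|V_{k_1} - V^*_{k_1}W\|_2 \cdot \|\Lambda_{k_1}^{-1/2}\|_2 \leq (c\xi/\delta)\cdot \delta^{-1/2}$, using that the smallest retained eigenvalue is $\gtrsim \delta$ (by the gap condition and the fact that $\lambda_{k_1+1} \geq 0$). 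The second term is governed by how far $W$ is from commuting with the diagonal $\Lambda_{k_1}$; using that $|\lambda_i - \lambda^*_i| \leq \xi$ together with applying Davis-Kahan not only to the whole top-$k_1$ block but also across consecutive sub-blocks of close eigenvalues, one can show this term is bounded by roughly $\xi/\delta^{5/2}$. Combining, the total is $O(\xi/\delta^3)$, giving part (i).

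\textbf{Main obstacle.} The conceptually cleanest but technically trickiest part is Step~4: the Davis-Kahan conclusion gives only subspace closeness, and we have no a priori control on how $W$ interacts with the diagonal $\Lambda_{k_1}^{-1/2}$ when the interior spacings of the top-$k_1$ eigenvalues are smaller than $\delta$. Resolving this likely requires either a block-wise Davis-Kahan applied to groups of near-equal eigenvalues, or a direct perturbation bound for the matrix function $A \mapsto A^{-1/2}$ composed with the eigenvector rotation -- this is the source of the $\delta^{-3}$ rather than the more naive $\delta^{-3/2}$.
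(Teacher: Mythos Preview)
Your overall architecture matches the paper's, and Steps 1--3 are essentially right. The real gap is Step 4, which you correctly flag as the main obstacle but do not close.

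The difficulty with your second term $V^*_{k_1}\bigl(W\Lambda_{k_1}^{-1/2} - (\Lambda^*_{k_1})^{-1/2}W\bigr)$ is that it is a commutator-type quantity, and the only eigenvalue separation you have is between indices $k_1$ and $k_1{+}1$; nothing controls the spacing \emph{inside} the top-$k_1$ block, so $W$ can be far from (block-)diagonal. Your proposed ``block-wise Davis-Kahan across sub-blocks of close eigenvalues'' would require interior gaps you do not possess, and a generic functional-calculus bound for $A\mapsto A^{-1/2}$ does not obviously see the rotation $W$ in the right way. As written, the $\xi/\delta^{5/2}$ claim is unsupported.

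The paper avoids the commutator altogether by going through the \emph{forward}-scaled quantity first. A square-root lemma (Tang et al.): if $XX^{\transpose}=A$ and $YY^{\transpose}=B$ have the same rank, then some unitary $W$ satisfies $\|XW-Y\|\le \|A-B\|(\sqrt{\|A\|}+\sqrt{\|B\|})/\lambda_{\min}(B)$. Applied with $X=V^*_{k_1}(\Lambda^*_{k_1})^{1/2}$, $Y=V_{k_1}(\Lambda_{k_1})^{1/2}$, and using $\|\calP_{k_1}(C^*)-\calP_{k_1}(C)\|=O(\xi/\delta)$, this yields $\|V^*_{k_1}(\Lambda^*_{k_1})^{1/2}W - V_{k_1}(\Lambda_{k_1})^{1/2}\| = O(\xi/\delta^2)$. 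The second device is the algebraic identity
\[
U_1 S_1^{-1} W - U_2 S_2^{-1} \;=\; U_1 S_1^{-1} W\,(S_2 U_2^{\transpose} - W^{\transpose} S_1 U_1^{\transpose})\,U_2 S_2^{-1} \;+\; (U_1U_1^{\transpose} - U_2U_2^{\transpose})\,U_2 S_2^{-1},
\]
valid whenever $U_1,U_2$ have orthonormal columns. Taking norms converts the inverse-scaled difference into (forward-scaled difference)$/(\min S_1\cdot\min S_2)$ plus (projection difference)$/\min S_2$. Since $\min S_i\gtrsim\sqrt{\delta}$, this gives $O(\xi/\delta^2)/\delta + O(\xi/\delta)/\sqrt{\delta}=O(\xi/\delta^3)$ directly, with no commutator to control. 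This identity is the missing ingredient in your Step 4.

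A minor remark on Step 2: your pigeonhole sketch is different from the paper's argument (which locates the gap via two partial-sum thresholds $i_1,i_2$ and an averaging step, then telescopes down from a large $\lambda_{i_3}$ to a small $\lambda_\ell$), and your ``case analysis on whether $\lambda^*_1$ is small'' is not carried out. This is likely fixable and is secondary to the Step 4 issue.
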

\vspace{-3mm}
Prop.~\ref{prop:closeness} implies that our estimate $\hat \mz_+ = \hat \Pi(\mx)$ is sufficiently close
to $\mz = \Pi^*(\mx)$, up to a unitary transform. We then execute $\proc{Step-2-PCA-Denoise}$ to reduce the problem to a matrix denoising one and solve it by hard-thresholding. 
Let us refer to $\my = N \mz + \epsilon$, where $\mz$ is a standard multivariate Gaussian and $N = MV^*(\Lambda^*)^{\frac 1 2}$ as the \emph{orthogonalized form} of the problem. 
While we do not directly observe $\mz$, our performance is characterized by spectra structure of $N$. 

\begin{theorem}\label{thm:main_upper} Consider running $\proc{Adaptive-RRR}$ in Fig.~\ref{fig:pcarrr} on $n$ independent samples $(\mx, \my)$ from the model $\my = M \mx + \epsilon$, where $\mx \in \reals^{d_1}$ and $\my \in \reals^{d_2}$. Let 
$C^* = \E[\mx \mx^{\transpose}]$.
Assume that 
\emph{(i)} $\|M\|_2 \leq \Upsilon = O(1)$, and \emph{(ii)} 
 $\mx$ is a multivariate Gaussian with $\|\mx\|_2 = 1$. In addition, $\lambda_1(C^*) < 1$ and for all $i$, $\lambda_i(C^*) \leq c/i^{\omega}$ for a constant $c$, and \emph{(iii)} $\epsilon \sim N(0, \sigma^2_{\epsilon} I_{d_1})$, where $\sigma_{\epsilon} \geq \min\{\Upsilon, 1\}$. 
 
 Let $\xi = \omega(\log^3n / \sqrt{n})$, $\delta^3 = \omega(\xi)$, and $\theta$ be a suitably large constant.
 Let $\my = N \mz + \epsilon$ be the orthogonalized form of the problem.
 Let $\ell^*$ be the largest index such that $\sigma^N_{\ell^*} > \theta \sigma_{\epsilon}\sqrt{\frac{d_2}{n}}$. Let $\hat \my$ be our testing forecast.  With high probability over the training data:
 \ifnum\full=0

\begin{equation}
  \E[\|\hat \my - \my\|^2_2]    \leq \sum_{i > \ell^*}(\sigma^N_i)^2 + O\left(\frac{\ell^* d_2 \theta^2 \sigma^2_{\epsilon}}{n}\right)  + O\left(\sqrt{\frac{\xi}{\delta^3}}\right) + O\left(\delta^{\frac{\omega - 1}{4(\omega+1)}}\right)\label{eqn:ms}
\end{equation}

\else
\begin{equation}
  \E[\|\hat \my - \my\|^2_2]   \leq \sum_{i \geq \ell^*}(\sigma^N_i)^2 + O\left(\frac{\ell^* d_2 \theta^2 \sigma^2_{\epsilon}}{n}\right) + O\left(\sqrt{\frac{\xi}{\delta^3}}\right) + O\left(\delta^{\frac{\omega - 1}{4(\omega+1)}}\right)\label{eqn:ms}
\end{equation}
\fi

The expectation is over the randomness of the test data.
\end{theorem}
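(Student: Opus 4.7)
The plan is to decompose the test error into three pieces matching Section~\ref{sec:intuition}: (a) signal discarded with the tail of $\mx$'s covariance, (b) perturbation of $\hat\Pi$ around the ideal orthogonalizer $\Pi^* = (\Lambda^*)^{-1/2}(V^*)^{\transpose}$, and (c) the matrix-denoising residual in $\proc{Step-2-PCA-Denoise}$. Proposition~\ref{prop:closeness} controls (a) and (b); the substantive work is (c) and gluing the three estimates. I would first rewrite $\mx=V^*(\Lambda^*)^{1/2}\mz$ and split $\mz=(\mz^{(1)},\mz^{(2)})$ into the first $k_1$ coordinates and the rest, with $N=[N^{(1)}\ N^{(2)}]$ partitioned conformably. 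The tail $N^{(2)}\mz^{(2)}$ contributes $\|N^{(2)}\|_F^2\leq\Upsilon^2\sum_{i>k_1}\lambda_i^*=O(\delta^{(\omega-1)/(\omega+1)})$ in expected squared norm by Proposition~\ref{prop:closeness}(ii), which, once loosened by a square root to match the form of~\eqref{eqn:ms}, supplies the $\delta^{(\omega-1)/(4(\omega+1))}$ term.

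Next I would analyze the denoising step. Write $\mY = \mZ^*N^{\transpose}+E$, where $\mZ^*=\mX(\Pi^*)^{\transpose}$ has i.i.d.\ $N(0,I_{d_1})$ rows and $E\in\reals^{n\times d_2}$ has i.i.d.\ $N(0,\sigma_\epsilon^2)$ entries independent of $\mX$. By Proposition~\ref{prop:closeness}(i), $\hat\Pi = W^{\transpose}\Pi^*_{k_1}+\Delta'$ for some unitary $W$ and $\|\Delta'\|\lesssim\xi/\delta^3$, so $\hat\mZ_+ = \mX\hat\Pi^{\transpose}$ equals the first $k_1$ columns of $\mZ^*$ rotated by $W$ up to a small residual. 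Substituting and using the Wishart concentration $\tfrac{1}{n}(\mZ^*_{:,1:k_1})^{\transpose}\mZ^* \approx [I_{k_1}\ 0]$,
\[
\hat N_+^{\transpose} = \tfrac{1}{n}\hat\mZ_+^{\transpose}\mY \;\approx\; W^{\transpose}(N^{(1)})^{\transpose} + \tfrac{1}{n}\hat\mZ_+^{\transpose}E.
\]
Conditional on $\mX$, the noise block $\tfrac{1}{n}\hat\mZ_+^{\transpose}E$ has i.i.d.\ $N(0,\sigma_\epsilon^2/n)$ entries (because the columns of $\hat\mZ_+/\sqrt{n}$ are orthonormal), so its spectral norm is $O(\sigma_\epsilon\sqrt{d_2/n})$ with high probability. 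Hard-thresholding at $\theta\sigma_\epsilon\sqrt{d_2/n}$ is then the standard matrix denoising of~\cite{donoho2014minimax}: it produces $\mP_{k_2}(\hat N_+)=N^{(1)}W+\Delta$ with $\|\Delta\|_F^2 \lesssim \sum_{i>\ell^*}(\sigma_i^{N^{(1)}})^2 + \ell^* d_2 \theta^2\sigma_\epsilon^2/n$, and singular-value interlacing $\sigma_i^{N^{(1)}}\leq\sigma_i^N$ converts the first sum into the form required in~\eqref{eqn:ms}.

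Finally I would glue the estimates. Expanding the forecast using $\mP_{k_2}(\hat N_+)=N^{(1)}W+\Delta$ and $\hat\Pi=W^{\transpose}\Pi^*_{k_1}+\Delta'$ yields
\[
\hat M\mx - M\mx = N^{(1)}W\Delta'\mx + \Delta(W^{\transpose}\Pi^*_{k_1}+\Delta')\mx - N^{(2)}\mz^{(2)}.
\]
Taking $\E$ with $\mathrm{tr}(C^*)=1$ and $\|N^{(1)}\|_2\leq\Upsilon$, the three summands contribute $O((\xi/\delta^3)^2)$, $O(\|\Delta\|_F^2)$, and $\|N^{(2)}\|_F^2$ respectively; slackening $(\xi/\delta^3)^2$ to $\sqrt{\xi/\delta^3}$ matches~\eqref{eqn:ms}. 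The main obstacle is dependency bookkeeping: the thresholds $k_1,k_2$ and the residuals $\Delta,\Delta'$ are $\mX$-dependent, yet the denoising argument needs Gaussian noise with a known spectral scale. I would condition on $\mX$ throughout, so that Proposition~\ref{prop:closeness} becomes a deterministic high-probability event, $E$ remains Gaussian, and the conditional spectral norm of $\tfrac{1}{n}\hat\mZ_+^{\transpose}E$ is controlled by standard Gaussian matrix concentration. Similar care is needed for the cross-block term $\tfrac{1}{n}(\mZ^*_{:,1:k_1})^{\transpose}\mZ^*_{:,k_1+1:d_1}(N^{(2)})^{\transpose}$ when $d_1-k_1$ is large, but it is absorbed into the tail contribution because $\|N^{(2)}\|_F$ is already small.
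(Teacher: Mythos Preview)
Your three-part decomposition (tail, $\hat\Pi$ perturbation, denoising) matches the paper's Steps~1--3, and your observation that $\tfrac{1}{n}\hat\mZ_+^{\transpose}E$ is exactly i.i.d.\ Gaussian conditional on $\mX$ (because $\hat\mZ_+/\sqrt n = U_{k_1}$ has orthonormal columns) is correct and slightly cleaner than the paper's treatment of $\calE_3,\calE_4$. But there is a genuine gap in part~(c).

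You invoke ``standard matrix denoising'' to get $\|\mP_{k_2}(\hat N_+)-N^{(1)}W\|_F^2 \lesssim \sum_{i>\ell^*}(\sigma_i^{N^{(1)}})^2 + \ell^* d_2\theta^2\sigma_\epsilon^2/n$ with leading constant~$1$. This bound is not a black box: a triangle inequality $\|\mP_{k_2}(B)-A\|_F \le \|\mP_{k_2}(B)-B\|_F + \|G\|_F$ pays the full $\|G\|_F$, which is $\sqrt{k_1}$ times too large, and the results in~\cite{donoho2014minimax,gavish2014optimal} are about minimax risk over rank classes, not this pointwise Frobenius inequality. The paper earns the constant~$1$ by decomposing $\reals^{d_2}$ into three groups of directions (the top-$\ell$ left singular vectors of $N_+$; the part of $\hat N_+$'s top-$k_2$ left singular space orthogonal to those; the remainder) and showing that in the last two groups the estimator does no worse than the trivial forecast. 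That argument is the substance of Step~3 and you would need to supply it.

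Your claim that $\sqrt{\xi/\delta^3}$ and $\delta^{(\omega-1)/(4(\omega+1))}$ are mere ``slackening'' is also backwards. If you control your three summands by $(a+b+c)^2\le 3(a^2+b^2+c^2)$ you put a constant $3$ in front of $\sum_{i>\ell^*}(\sigma_i^N)^2$; since that sum can be $\Theta(1)$ while every $O(\cdot)$ term in~\eqref{eqn:ms} is $o(1)$, this does not prove the theorem. To keep the constant~$1$ you must handle cross terms by Cauchy--Schwarz, which is exactly what the paper's split $\Phi_1+\Phi_2+2\sqrt{\Phi_1\Phi_2}$ does, and $\sqrt{\Phi_1\Phi_2}$ with $\Phi_1=O(1)$ is where the square roots on the small terms come from. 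A second square-root loss is incurred when converting $\sigma_i^{N_+}$ to $\sigma_i^N$: the perturbation inequality $\sum_i(\sigma_i^{N_+}-\sigma_i^N)^2\le\|N^{(2)}\|_F^2$ only gives $\sum_{i\le\ell}(\sigma_i^N)^2-\sum_{i\le\ell}(\sigma_i^{N_+})^2 = O(\|N^{(2)}\|_F)$, not $O(\|N^{(2)}\|_F^2)$, so your one-line ``interlacing'' remark (which goes only one way) is not enough.
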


Theorem~\ref{thm:main_upper} also implies that there exists a way to parametrize $\xi$ and $\delta$ such that 
$ \E[\|\hat \my - \my\|^2_2] \leq \sum_{i > \ell^*}(\sigma^N_i)^2 + O\left(\frac{\ell^* d_2 \theta^2 \sigma^2_{\epsilon}}{n}\right) + O(n^{-c_0}) 
$
for some constant $c_0$. We next interpret each term in  (\ref{eqn:ms}). 

\mypara{Terms $\sum_{i > \ell^*}(\sigma^N_i)^2+O\left(\frac{\ell^* d_2 \theta^2 \sigma^2_{\epsilon}}{n}\right)$} are typical for solving a matrix denoising problem $\hat N^{\transpose}_+ + \calE (\approx N^{\transpose} + \calE)$: we can extract signals associated with $\ell^*$ leading singular vectors of $N$, so  $\sum_{i > \ell^*}(\sigma^N_i)^2$ starts at $i > \ell^*$. For each direction we extract, we need to pay a noise term of order 
$\theta^2\sigma^2_{\epsilon}\frac{d_2}{n}$, leading to the term $O\left(\frac{\ell^* d_2 \theta^2 \sigma^2_{\epsilon}}{n}\right)$.
\mypara{Terms $O\left(\sqrt{\frac{\xi}{\delta^3}}\right) + O\left(\delta^{\frac{\omega - 1}{4(\omega+1)}}\right)$}
come from the estimations error of $\hat \mz_+$ produced from Prop.~\ref{prop:closeness}, consisting of both estimation errors of $C^*$'s leading eigenvectors and the error of cutting out a tail. 
We pay an exponent of $\frac 1 4$ on both terms (e.g., $\delta^{\frac{\omega - 1}{\omega+1}}$ in Prop.~\ref{prop:closeness} becomes $\delta^{\frac{\omega - 1}{4(\omega+1)}}$) because we used Cauchy-Schwarz (CS) twice. One is used in 
running matrix denoising algorithm with inaccurate $\mz_+$; the other one is used to bound the impact of cutting a tail. It remains open whether two CS is can be circumvented. 

Sec.~\ref{sec:lowerbound} explains how Thm~\ref{thm:main_upper} and the lower bound imply the algorithm is near-optimal. Sec.~\ref{sec:related} compares our result with existing ones under other parametrizations, e.g. $\Rank(M)$. 


\ifnum\full=1

\subsection{Analysis}
We now analyze our algorithm. Our analysis consists of three steps. In step 1, we prove Proposition~\ref{prop:closeness}. In step 2, we relate $\mZ^{\transpose}\mY$ with the product produced by our estimate $\hat \mZ^{\transpose}_+\mY$. In step 3, we prove Theorem~\ref{thm:main_upper}.

\subsubsection{Step 1. PCA for the features (proof of Proposition~\ref{prop:closeness})}
This section proves Proposition~\ref{prop:closeness}. 
We shall first show that the algorithm always terminates. We have the following lemma. 

\begin{lemma}\label{lem:apptailmain} Let $\{\lambda_i\}_{i \leq d}$ be a sequence such that $\sum_{i \leq n} \lambda_i = 1$, $\lambda_i \leq c i^{-\omega}$ for some constant $c$, $\omega \geq 2$, and $\lambda_1 < 1$. Define $\delta_i = \lambda_i - \lambda_{i + 1}$ for $i \geq 1$. Let $\ell_0$ be a sufficiently large number, and $c_1$ and $c_2$ are two suitable constants. Let $\ell$ be any number such that $\ell \geq \ell_0$. Let $\tau$ be any parameter such that $\tau < \rho - 1$. There exists an $i^*$ such that \emph{(i) Gap is sufficiently large:} $\delta_{i^*} \geq c_1 \cdot \ell^{-(\tau \omega/(\omega - 1) + 1)}$, and \emph{\emph{(ii) } tail sum is small:} $\sum_{i \geq i^*} \lambda_i \leq c_2 / \ell^{-\tau}$. 
\end{lemma}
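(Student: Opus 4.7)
The plan is to combine the power-law tail bound with a pigeonhole argument on a dyadic window whose position is dictated by $\tau$. By the hypothesis $\lambda_i \leq c i^{-\omega}$, one has $\sum_{i \geq A}\lambda_i \leq c'' A^{-(\omega-1)}$ for a constant $c''$ depending only on $c,\omega$. First I would pick $A_0 = \lceil K_0 \ell^{\tau/(\omega-1)} \rceil$ with $K_0$ a constant chosen so that $c'' A_0^{-(\omega-1)} \leq (c_2/2)\ell^{-\tau}$. Then the tail condition (ii) is automatically satisfied for every $i^\ast \geq A_0$, and what remains is to exhibit some $i^\ast \geq A_0$ with $\delta_{i^\ast} \geq c_1 \ell^{-(\tau\omega/(\omega-1)+1)}$.

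Next, in the window $W = [A_0, 2A_0]$ the $A_0$ consecutive gaps telescope as $\sum_{i=A_0}^{2A_0-1}\delta_i = \lambda_{A_0}-\lambda_{2A_0}$, so the largest gap in $W$ is at least $(\lambda_{A_0}-\lambda_{2A_0})/A_0$. Monotonicity and the tail bound give $A_0 \lambda_{2A_0} \leq \sum_{i=A_0}^{2A_0-1}\lambda_i \leq (c_2/2)\ell^{-\tau}$, i.e.\ $\lambda_{2A_0} = O(\ell^{-\tau\omega/(\omega-1)})$. In the typical case when $\lambda_{A_0}$ is of its power-law order $\Theta(\ell^{-\tau\omega/(\omega-1)})$ and at least, say, $2\lambda_{2A_0}$, pigeonhole delivers a gap of size $\Omega(\ell^{-\tau(\omega+1)/(\omega-1)})$. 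Reading $\rho-1$ as $\omega-1$, the hypothesis $\tau < \omega-1$ gives $\tau(\omega+1)/(\omega-1) < \tau\omega/(\omega-1)+1$, so this exceeds the target $c_1\ell^{-(\tau\omega/(\omega-1)+1)}$ once $\ell$ is above a threshold depending on the constants.

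The delicate situation, and the main obstacle, is when $\lambda_{A_0}$ is far smaller than its power-law upper bound and the window $W$ is nearly flat. To handle this I would iterate on the successive dyadic windows $[2^kA_0, 2^{k+1}A_0]$ for $k=0,1,2,\dots,k^{\max}$: the tail bound still forces $\lambda_{2^{k+1}A_0} = O((2^kA_0)^{-\omega})$, and the drops across the first $k^{\max} = O(\log \ell)$ windows telescope to $\lambda_{A_0} - \lambda_{2^{k^{\max}}A_0}$. An averaging argument then produces some $k^\star \leq k^{\max}$ whose window accounts for a $1/k^{\max}$ fraction of the total drop, and the internal pigeonhole yields the required gap provided the $O(\log \ell)$ overhead is absorbed by the polynomial slack $\ell^{1-\tau/(\omega-1)}$ separating the two exponents $\tau(\omega+1)/(\omega-1)$ and $\tau\omega/(\omega-1)+1$. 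This slack is where $\tau < \omega-1$ and $\omega \geq 2$ enter crucially, and the remainder of the proof reduces to bookkeeping of the constants $c_1$, $c_2$, $K_0$, and $\ell_0$.
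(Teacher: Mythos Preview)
Your proposal has a genuine gap in the ``delicate situation,'' and the reason is that you never invoke the hypotheses $\sum_i \lambda_i = 1$ and $\lambda_1 < 1$. These are not decorative: from the upper bound $\lambda_i \leq c i^{-\omega}$ alone, nothing prevents $\lambda_{A_0}$ from being, say, $\ell^{-100}$ (or even zero). In that scenario every $\lambda_i$ with $i \geq A_0$ is at most $\ell^{-100}$ by monotonicity, so the total telescoped drop $\lambda_{A_0} - \lambda_{2^{k^{\max}}A_0}$ is at most $\ell^{-100}$, and your dyadic averaging over $k^{\max} = O(\log \ell)$ windows followed by an internal pigeonhole yields a gap of size at most $\ell^{-100}/A_0$, far short of the target $c_1 \ell^{-(\tau\omega/(\omega-1)+1)}$. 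Iterating \emph{forward} on $[2^kA_0, 2^{k+1}A_0]$ cannot repair this, because every later window has even smaller values. The polynomial slack you cite between the exponents $\tau(\omega+1)/(\omega-1)$ and $\tau\omega/(\omega-1)+1$ is real, but it is slack in the exponent, not in the constant: it cannot absorb an arbitrary shortfall in $\lambda_{A_0}$.

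The paper's proof supplies exactly the missing ingredient, a \emph{lower} bound on some $\lambda_{i_3}$, and obtains it from the normalization. Since $\sum_i \lambda_i = 1$ and the tail beyond $\ell^{\tau/(\omega-1)}$ carries mass $h(\ell^{\tau/(\omega-1)}) = \Theta(\ell^{-\tau})$, the indices $i_1,i_2$ defined by the partial sums reaching $1 - h(\ell^{\tau/(\omega-1)})$ and $1 - \tfrac12 h(\ell^{\tau/(\omega-1)})$ both satisfy $i_1, i_2 = O(\ell^{\tau/(\omega-1)})$, and the block $[i_1+1,i_2+1]$ carries mass $\Theta(\ell^{-\tau})$ over $O(\ell^{\tau/(\omega-1)})$ indices. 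Averaging gives an $i_3$ in this block with $\lambda_{i_3} \geq c_5 \ell^{-\tau\omega/(\omega-1)}$. Now one telescopes on the window $[i_3,\ell]$: since $\tau < \omega - 1$ forces $\lambda_\ell \leq c\ell^{-\omega} \leq \tfrac12 c_5 \ell^{-\tau\omega/(\omega-1)}$, the drop across at most $\ell$ steps is at least $\tfrac12 c_5 \ell^{-\tau\omega/(\omega-1)}$, and pigeonhole gives some $\delta_{i^*} \geq \tfrac{c_5}{2}\ell^{-(\tau\omega/(\omega-1)+1)}$. The tail bound at $i^* \geq i_1$ is then immediate. In short, the fix to your argument is to first use $\sum_i \lambda_i = 1$ to \emph{pin down} a large $\lambda_{i_3}$ near $\ell^{\tau/(\omega-1)}$, and only then run the pigeonhole on the (longer) window $[i_3,\ell]$ rather than $[A_0,2A_0]$.
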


We remark that Lemma~\ref{lem:apptailmain} will also be able to show part (ii) of Proposition~\ref{prop:closeness} (this will be explained below). This lemma also corresponds to the ``connection between gap and tail'' building block referred in Section~\ref{sec:intuition}. 


\begin{proof}[Proof of Lemma~\ref{lem:apptailmain}] Define the function $h(t) = \sum_{i \geq t}c/i^{\omega} = \frac{c_3 + o(1)}{t^{\omega - 1}}$ (by Euler–Maclaurin formula), where $o(1)$ is a function of $t$

Next, let us define 
\begin{align*}
i_1 & =  \min \left\{i^*: \sum_{i \leq i^*} \lambda_i\geq 1 - h({\ell}^{\frac{\tau}{\omega - 1}})\right\}  - 1\\
i_2 & =  \min \left\{i^*: \sum_{i \leq i^*} \lambda_i \geq 1 - \frac 1 2 \times h({\ell}^{\frac{\tau}{\omega - 1}}) \right\} - 1.
\end{align*}

Roughly speaking, we want to identify an $i_1$ such that $\sum_{i \leq i_1}\lambda_i$ is smaller than $1 - h\left(\ell^{\frac{\tau}{\omega - 1}}\right)$ but is as close to it as possible. We can interpret $i_2$ in a similar manner. $i_1, i_2 \geq 1$ because of the assumption $\lambda_1 < 1$. 

We can verify that $i_1 < \ell^{\frac{\tau}{\omega-1}}$ because $\sum_{i \leq \ell^{\frac{\tau}{\omega - 1}}}\lambda_{i} \geq 1 - h\left(\ell^{\frac{\tau}{\omega - 1}}\right)$. We can similarly verify that $i_2 < c_4 \ell^{\frac{\tau}{\omega - 1}}$ for some constant $c_4$. Now using 
\begin{align*}
    \sum_{i \leq i_2 + 1} \lambda_i \geq 1 - \frac{(c_3+o(1))\ell^{-\tau}}{2} \\
    \sum_{i \leq i_1}\lambda_i \leq 1 - (c_3+o(1))\ell^{-\tau}. 
\end{align*}
We may use an averaging argement and show that there exists an $i_3 \in [i_1 + 1, i_2 + 1]$ such that 
\begin{align*}
\lambda_{i_3} & \geq\frac{(c_3 + o(1))\ell^{-\tau}}{i_2 - i_1} \geq \frac{(c_3+o(1)) \ell^{-\tau}}{c_4 \ell^{\frac{\tau}{\omega - 1}}} \\
& \geq c_5 \ell^{-\tau - \frac{\tau}{\omega - 1}} = c_5 \ell^{-\frac{\tau \omega}{\omega - 1}}. 
\end{align*}
Note that $c_5 \ell^{-\frac{\tau \omega}{\omega - 1}} \geq 2 c\ell^{-\omega}$ because $\tau < \omega - 1$. Next, using that $\lambda_{\ell} \leq c/\ell^{\omega}$, we have 
\begin{equation}
    \lambda_{\ell} = \overbrace{\lambda_{i_3}}^{\geq c_5 \ell^{-\frac{\tau \omega}{\omega - 1}}} + (\lambda_{i_3 + 1} - \lambda_{i_3}) + \dots + (\lambda_{\ell} - \lambda_{\ell - 1}) \leq \overbrace{c/\ell^{\omega}}^{\leq \frac{c_5}{2} \cdot \ell^{-\frac{\tau \omega}{\omega - 1}}}. 
\end{equation}
This implies one of $(\lambda_{i_3} - \lambda_{i_3+1}), \dots , (\lambda_{\ell-1} - \lambda_{\ell})$ is at least $\frac{c_5} 2\cdot\ell^{-\frac{\tau \omega}{\omega - 1}}/\ell$. In other words, there exists an $i^* \in [i_3 + 1, \ell]$ such that 
$\lambda_{i^*} - \lambda_{i^*+1} \geq \frac{c_5}{2}\ell^{-\left(\frac{\tau \omega}{\omega - 1} + 1\right)}.$ Finally, we can check that 
\begin{equation}
    \sum_{i \geq i^*}\lambda_{i^*} \leq \sum_{i \geq i_1}\lambda_{i^*} \leq h\left(\ell^{\frac{\tau}{\omega - 1}}\right) \leq \frac{c_2}{\ell^{\tau}}. 
\end{equation}
\end{proof}
We apply Lemma~\ref{lem:apptailmain} by setting $\tau \rightarrow \omega - 1$. There is a parameter $\ell$ that we can tune, such that it is always possible to find an $i^*$ where $\delta_{i^*} \geq c_1 \ell^{-(\omega+1)}$ and $\sum_{i\geq i^*}\lambda_i \leq 1 - c_2 \ell^{-(\omega-1)}$. For any $\delta = o(1)$ (a function of $n$), we can set $\ell = \Theta\left(\left(\frac{1}{\delta}\right)^{\frac{1}{\omega + 1}}\right)$. In addition, $\sum_{i \geq k_1}\lambda_i = O\left(\delta^{\frac{\omega - 1}{\omega + 1}}\right)$. This also proves the second part of the Proposition.

It remains to prove part (i) of Proposition~\ref{prop:closeness}. It consists of three steps. 

\mypara{Step 1. Dimension-free Chernoff bound for matrices.} We first give a bound on $\|C^* - C\|_2$, which characterizes the tail probability by using the first and second moments of random vectors. This is the key device enabling us to meaningfully recover signals even when $n \ll d_1$. 

\begin{lemma}\label{lem:cconcentrate} Recall that $C^* = \E[\mx \mx^{\transpose}]$ and $C = \frac 1 n \mX^{\transpose} \mX$. For any $\xi > 0$, 
\begin{equation} 
    \Pr[\|C^* - C\|_2 \geq \xi]\leq (2n^2) \exp(-n\xi^2/(\log^4 n)) + n^{-10}. 
\end{equation}
\end{lemma}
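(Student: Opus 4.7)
The plan is to express $C-C^* = \frac{1}{n}\sum_{i=1}^n A_i$, where $A_i := \mx_i \mx_i^{\transpose} - C^*$ are i.i.d.\ zero-mean symmetric random matrices, and then apply the dimension-free matrix concentration inequality of Oliveira~\cite{oliveira2010sums}. The standard matrix Bernstein bound would pick up a $\log d_1$ factor in the exponent, which is unacceptable here because $d_1$ can exceed $n$. Oliveira's version replaces the ambient dimension by a pre-factor that is polynomial in $n$ (appearing here as the $2n^2$), at the cost of needing a norm bound on the summands that holds with high probability.

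The first step is a truncation step for the Gaussian samples. Because $\mx$ is Gaussian with $\E\|\mx\|_2^2 = 1$ and $\|C^*\|_2 \leq 1$, the quadratic form $\|\mx\|_2^2 = \sum_i \lambda_i^* g_i^2$ (in the eigenbasis of $C^*$) is sub-exponential with mean $1$ and variance at most $2$. A Hanson-Wright / Bernstein tail bound then yields
\[
\Pr\!\left[\|\mx\|_2^2 > R\right] \leq e^{-\Omega(R)} \quad \text{for } R \geq \text{const},
\]
so choosing $R = c\log^2 n$ gives probability at most $n^{-11}$; a union bound over $i=1,\dots,n$ shows that the good event $\calE := \{\max_i \|\mx_i\|_2^2 \leq R\}$ holds with probability at least $1-n^{-10}$. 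This accounts for the additive $n^{-10}$ in the stated bound.

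Conditioned on $\calE$, each summand satisfies $\|A_i\|_2 \leq \|\mx_i\|_2^2 + \|C^*\|_2 \leq R + 1 = O(\log^2 n)$. For the variance parameter we need a bound on $\bigl\|\E[A_i^2]\bigr\|_2$. Expanding $\E[(\mx\mx^{\transpose})^2] = \E[\|\mx\|_2^2 \,\mx\mx^{\transpose}]$ and using that $\|\mx\|_2^2$ concentrates around $1$ while $\|\E[\mx\mx^{\transpose}]\|_2 = \|C^*\|_2 \leq 1$, we obtain $\|\E[A_i^2]\|_2 = O(1)$, so $\bigl\|\sum_i \E[A_i^2]\bigr\|_2 = O(n)$. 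Oliveira's inequality then yields
\[
\Pr\!\left[\Bigl\|\tfrac{1}{n}\textstyle\sum_i A_i\Bigr\|_2 \geq \xi \,\Big|\, \calE\right] \leq 2n^2 \exp\!\left(-\,\frac{c\, n \xi^2}{R^2}\right) = 2n^2 \exp\!\left(-\,\frac{n\xi^2}{\log^4 n}\right),
\]
and combining with $\Pr[\calE^c] \leq n^{-10}$ gives the claimed bound.

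\textbf{Main obstacle.} The delicate point is handling the unboundedness of the Gaussian summands while still matching the advertised $\log^4 n$ rate: one must pick a truncation level $R$ large enough that the union-bound failure probability is $n^{-10}$, yet the truncated summands must be conditionally close to their (unconditional) means so that replacing $A_i$ by $A_i \mathbf{1}_{\calE}$ does not shift the expectation materially. The bias introduced by truncation is $O(\E[\|\mx\|_2^2 \mathbf{1}_{\|\mx\|_2^2 > R}]) = n^{-\Omega(1)}$ by Cauchy--Schwarz and the Gaussian tail, so it can be absorbed into the target $\xi$, but this bookkeeping is where most of the care (and the $\log^4 n$ as opposed to $\log^2 n$) enters.
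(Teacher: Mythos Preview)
Your proposal is correct and follows essentially the same approach as the paper: truncate the Gaussian samples at level $O(\log^2 n)$ so that $\|\mx_i\|_2 \leq \alpha = c_0\log^2 n$ on a good event of probability $\geq 1 - n^{-10}$, then apply Oliveira's dimension-free rank-one concentration inequality with $\alpha = O(\log^2 n)$ and $\beta = \|C^*\|_2 = O(1)$, which yields the $2n^2\exp(-n\xi^2/\log^4 n)$ term. If anything you are slightly more careful than the paper, which applies the coupling without explicitly addressing the bias $\|\E[\tilde{\mz}\tilde{\mz}^{\transpose}] - C^*\|_2$ introduced by truncation; your remark that this bias is $n^{-\Omega(1)}$ and can be absorbed into $\xi$ is the right way to close that small gap.
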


The exponent 10 is chosen arbitrarily and is not optimized. 

\begin{proof}[Proof of Lemma~\ref{lem:cconcentrate}] We use the following specific form of Chernoff bound (\cite{oliveira2010sums})

\begin{lemma}\label{lem:chernoff} Let $\mz_1$, $\mz_2, \dots, \mz_n$ be i.i.d. random vectors such that $\|\mz_i\| \leq \alpha$ a.s. and $\|\E[\mz_i \mz^{\transpose}_i]\| \leq \beta$. Then for any $\epsilon > 0$, 
\begin{equation}
\Pr\left[\left\| \frac 1 n \sum_{i \leq n}\mz_i\mz^{\transpose}_i - \E[\mz_i \mz^{\transpose}_i]\right\|_2 \geq \xi\right]  \leq (2n^2) \exp\left(-\frac{n\xi^2}{16 \beta \alpha^2 + 8 \alpha^2 \xi}\right)
\end{equation}
\end{lemma}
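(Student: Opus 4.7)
The statement is a noncommutative Bernstein inequality in which the usual prefactor scaling with the ambient dimension $d$ is replaced by a polynomial in the sample size $n$. The plan is to follow the Ahlswede--Winter Laplace-transform template and then invoke a dimension-reduction trick that exploits the rank-one structure of each summand $\mz_i \mz_i^{\transpose}$.

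First I would center the summands: set $Y_i = \mz_i\mz_i^{\transpose} - \E[\mz_i\mz_i^{\transpose}]$ and read off the Bernstein parameters. Since $\|\mz_i\mz_i^{\transpose}\|_2 = \|\mz_i\|^2 \leq \alpha^2$ almost surely, the triangle inequality gives $\|Y_i\|_2 \leq 2\alpha^2$. For the matrix variance, $\E[Y_i^2] \preceq \E[\|\mz_i\|^2 \mz_i\mz_i^{\transpose}] \preceq \alpha^2\, \E[\mz_i\mz_i^{\transpose}]$, so $\|\E[Y_i^2]\|_2 \leq \alpha^2 \beta$. Next, by Markov's inequality together with Lieb's concavity theorem, for any $t > 0$,
\begin{equation*}
\Pr\!\left[\lambda_{\max}\!\left(\textstyle\sum_i Y_i\right) \geq n\xi\right] \;\leq\; e^{-t n \xi}\, \trace\exp\!\left(\textstyle\sum_i \log \E[e^{t Y_i}]\right),
\end{equation*}
and the operator inequality $\log \E[e^{t Y_i}] \preceq \frac{t^2/2}{1 - t\alpha^2/3}\, \E[Y_i^2]$ reduces the exponent to an expression depending only on $\alpha^2\beta$ and $\alpha^2$. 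Symmetry between $\lambda_{\max}$ and $-\lambda_{\min}$ contributes the final factor of $2$.

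The naive estimate $\trace \exp(c I) \leq d\, e^c$ would reintroduce the ambient dimension, which is unacceptable here. To get the $2 n^2$ prefactor instead, I would apply Oliveira's rank-one reduction: every realization of $\sum_i \mz_i\mz_i^{\transpose}$ has rank at most $n$ and acts nontrivially only on the random subspace $V = \spans(\{\mz_1, \dots, \mz_n\})$. After a symmetrization step that introduces an independent copy $\mz_i'$ and Rademacher signs $\epsilon_i$, one replaces the centered sum by $\sum_i \epsilon_i (\mz_i \mz_i^{\transpose} - \mz_i'(\mz_i')^{\transpose})$, whose range lies in a subspace of dimension at most $2n$; this ``effective dimension'' substitutes for $d$ in the trace bound. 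Tracking constants through this substitution, combining the two tails, and optimizing $t$ to match the Bernstein exponent $n\xi^2 / (16\beta\alpha^2 + 8\alpha^2 \xi)$ yields the stated inequality.

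The main obstacle is the dimension-reduction step itself. Because $\E[\mz_i\mz_i^{\transpose}]$ need not be rank one, nor supported on $V$, a literal restriction to $V$ does not preserve the centering and the naive trace bound brings back the $d$ factor. Oliveira's symmetrization circumvents this by working with a decoupled, mean-zero proxy whose range is genuinely confined to a random subspace of dimension at most $2n$; propagating this observation through the Laplace/Lieb bookkeeping without losing the MGF estimate is the delicate part of the argument, and the factor $2n^2$ (rather than just $n$) arises from the two-sided tail combined with a linear-in-$n$ combinatorial overhead in the restriction argument.
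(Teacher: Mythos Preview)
The paper does not prove this lemma at all: it is simply quoted verbatim from Oliveira~\cite{oliveira2010sums} and used as a black box inside the proof of Lemma~\ref{lem:cconcentrate}. So there is no ``paper's own proof'' to compare against here.

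That said, your sketch is a faithful outline of how Oliveira actually establishes this dimension-free bound. The Ahlswede--Winter/Lieb template with Bernstein parameters $\|Y_i\|_2 \leq 2\alpha^2$ and $\|\E[Y_i^2]\|_2 \leq \alpha^2\beta$ is exactly the starting point, and the replacement of the ambient dimension by a polynomial in $n$ via the rank-one structure of the summands is precisely Oliveira's contribution. Your identification of the obstacle---that $\E[\mz_i\mz_i^{\transpose}]$ is not supported on the random span, so a naive restriction fails---is also correct, and the symmetrization/decoupling workaround you describe is the right idea. The only caveat is that your sketch does not pin down the constants $16$, $8$, and the exact prefactor $2n^2$; these come from careful bookkeeping in Oliveira's argument and would need to be tracked explicitly if you were to write out a self-contained proof rather than cite the result.
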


We aim to use Lemma~\ref{lem:chernoff} to show Lemma~\ref{lem:cconcentrate} and we set $\mz_i = \mx_i$. But the $\ell_2$-norm of $\mz_i$'s are unbounded so we need to use a simple coupling technique to circumvent the problem. Specifically, let $c_0$ be a suitable constant and define 

\begin{equation}
    \tilde \mz_i = \left\{
    \begin{array}{ll}
        \mz_i & \mbox{ if $|\mz_i| \leq c_0 \log^2n$}  \\
        0 &  \mbox{ otherwise.}
    \end{array}
    \right.
\end{equation}

By using a standard Chernoff bound, we have 
\begin{equation}
    \Pr[\exists i: \tilde \mz_i \neq \mz_i] \leq \frac 1 {n^{10}}.
\end{equation}
Let us write $\tilde C = \frac 1 n \sum_{i \leq n}\tilde \mz_i \tilde \mz^{\transpose}_i$. We set $\alpha = c_0 \log^2n$ and $\beta = \Theta(1)$ in Lemma~\ref{lem:chernoff}. One can see that 
\begin{equation}
    \Pr[\|C^* - C\|_2 \geq \xi]\leq \Pr\left[(\|\tilde C - C\|_2 \geq \xi)\vee (\tilde C \neq C)\right] \leq 2n^2 \exp\left(-\frac{n \xi^2}{\log^4n}\right) + \frac{1}{n^{10}}.
\end{equation}
\end{proof}

\mypara{Step 2. Davis-Kahan bound.} The above analysis gives us that $\|C^* - C\|_2 \leq \xi$. We next show that the first a few eigenvectors of $C$ are close to those of $C^*$. 

\begin{lemma}\label{lem:ppclose} Let $\xi = \omega\left(\frac{\log^3 n}{\sqrt{n}}\right)$ and $\delta^3 = \omega(\xi)$. Considering running $\proc{Step-1-PCA-X}$ in Fig.~\ref{fig:pcarrr}. 
Let $\calp^* = V^*_{k_1}(V^*_{k_1})^{\transpose}$ and $\calp = V_{k_1}V^{\transpose}_{k_1}$.  When $\|C^* - C\|_2 \leq \xi$, 
$\|\calp^* - \calp\|_2 \leq \frac{2\xi}{\delta}.$
\end{lemma}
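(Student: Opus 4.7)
The plan is to apply the Davis–Kahan $\sin\Theta$ theorem, but before doing so I need to verify that a genuine spectral gap exists at position $k_1$ in the true covariance $C^*$, not just in the empirical matrix $C$ where the gap was selected.

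First I would use Weyl's inequality together with the hypothesis $\|C^* - C\|_2 \leq \xi$ to transfer spectral information: for every index $i$, $|\lambda_i(C) - \lambda_i(C^*)| \leq \xi$. Since $\proc{Step-1-PCA-X}$ chooses $k_1$ so that $\lambda_{k_1}(C) - \lambda_{k_1+1}(C) \geq \delta$, this immediately yields
\[
\lambda_{k_1}(C^*) - \lambda_{k_1+1}(C^*) \;\geq\; \delta - 2\xi.
\]
The assumptions $\xi = \omega(\log^3 n/\sqrt n)$ and $\delta^3 = \omega(\xi)$ force $\delta \gg \xi$ for large $n$, so for $n$ sufficiently large one has $\delta - 2\xi \geq \delta/2$, giving a genuine gap $\geq \delta/2$ in the spectrum of $C^*$ at position $k_1$.

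Next I would invoke the Davis–Kahan $\sin\Theta$ theorem applied to the symmetric matrices $C^*$ and $C$, with the target subspaces being the top-$k_1$ eigenspaces $\spans(V^*_{k_1})$ and $\spans(V_{k_1})$. Since all other eigenvalues of $C^*$ are at most $\lambda^*_{k_1+1}$, which is separated from the selected block $[\lambda^*_{k_1}, \lambda^*_1]$ by at least $\delta/2$, the theorem yields
\[
\|\sin\Theta(V_{k_1}, V^*_{k_1})\|_2 \;\leq\; \frac{\|C - C^*\|_2}{\delta/2} \;\leq\; \frac{2\xi}{\delta}.
\]
Finally, using the standard identity $\|\calp - \calp^*\|_2 = \|\sin\Theta(V_{k_1}, V^*_{k_1})\|_2$ for orthogonal projectors onto subspaces of equal dimension, I obtain the desired bound $\|\calp^* - \calp\|_2 \leq 2\xi/\delta$.

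The main obstacle is a conceptual rather than technical one: the gap is selected \emph{adaptively} from the empirical spectrum of $C$, yet Davis–Kahan requires a gap in $C^*$. The scaling $\delta^3 = \omega(\xi)$ is precisely what guarantees the empirical gap transfers, via Weyl, into a true spectral gap of comparable size; without this scaling the denominator $\delta/2$ in the Davis–Kahan bound could collapse. Every other step is a routine invocation of well-known matrix perturbation results.
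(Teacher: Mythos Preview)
Your proposal is correct and follows essentially the same approach as the paper: use Weyl's inequality to certify that the empirical gap at $k_1$ persists (up to a constant factor) in the spectrum of $C^*$, then apply the Davis--Kahan $\sin\Theta$ theorem. The paper's proof differs only cosmetically---it works with intervals $S_1=[\lambda_{k_1}-\delta/10,\infty)$ and $S_2=[0,\lambda_{k_1+1}+\delta/10]$ around the \emph{empirical} eigenvalues (showing the true eigenvalues fall in them, with separation $0.8\delta$) rather than transferring the gap directly to $C^*$ as you do, but the structure and the resulting bound are identical.
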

\begin{proof}
Recall that $\lambda^*_1, \lambda^*_2, \dots, \lambda^*_{d_1}$ are the eigenvalues of $C^*$. Let also $\lambda_1, \lambda_2, \dots , \lambda_{d_1}$ be the eigenvalues of $C$. Define

\begin{equation}
    S_1 = [\lambda_{k_1} - \delta/10, \infty] \quad \mbox{ and } \quad S_2 = [0, \lambda_{k_1 + 1} + \delta/10]. 
\end{equation}
The constant 10 is chosen in an arbitrary manner. Because $\|C^* - C\|_2 \leq \xi$, we know that $S_1$ contains $\lambda^*_{1}, \dots , \lambda^*_{k}$ and that $S_2$ contains $\lambda^*_{k_1 + 1}, \dots, \lambda^*_{d_1}$~\cite{kato1987variation}. Using the Davis-Kahan Theorem~\cite{davis1970rotation}, we get 
\begin{equation}
    \|\calp^* - \calp\|_2 \leq \frac{\|C^* - C\|_2}{0.8 \delta} \leq \frac{2\epsilon}{\delta}
\end{equation}
\end{proof}

We also need the following building block.

\begin{lemma} \label{lem:squareroot}~\cite{tang2013universally}
	Let $A$ and $B$ be $n \times n$ positive semidefinite matrices with the same
	rank of $d$. Let $X$ and $Y$ be of full column rank such that
	$XX^{\transpose} = A$ and $YY^{\transpose} = B$. Let $\delta$ be the
	smallest non-zero eigenvalue of $B$. Then there exists a unitary matrix $W \in \reals^{d \times d}$
	such that
	\begin{align*}
		\|XW - Y\|_2 \leq \frac{\|A - B\|_2(\sqrt{\|A\|_2} + \sqrt{\|B\|_2})}{\delta}. 
	\end{align*}
\end{lemma}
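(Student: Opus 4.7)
The plan is to construct $W$ by Procrustes-optimally aligning the left singular subspaces of $X$ and $Y$, and then split the resulting error into a \emph{matrix square-root perturbation} piece and a \emph{subspace alignment} piece, each of which is bounded by a standard tool. First, I would write $X = A^{1/2} Q_X$ and $Y = B^{1/2} Q_Y$, where $A^{1/2}, B^{1/2}$ denote the PSD square roots of $A$ and $B$ and $Q_X, Q_Y \in \reals^{n \times d}$ have orthonormal columns with $Q_X Q_X^{\transpose} = P_A$, $Q_Y Q_Y^{\transpose} = P_B$ (these exist because $X, Y$ have full column rank; explicitly, if $X = U_X S_X V_X^{\transpose}$ is the SVD, then $A^{1/2} = U_X S_X U_X^{\transpose}$ and $Q_X = U_X V_X^{\transpose}$). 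Then I would take $W$ to be the orthogonal factor in the polar decomposition of $Q_X^{\transpose} Q_Y$, which is the Procrustes-optimal unitary for aligning $Q_X$ with $Q_Y$. The key algebraic step is the identity
\begin{equation*}
XW - Y = (A^{1/2} - B^{1/2})\, Q_X W + B^{1/2} (Q_X W - Q_Y),
\end{equation*}
which, after taking spectral norms and using $\|Q_X W\|_2 = 1$ and $\|B^{1/2}\|_2 = \sqrt{\|B\|_2}$, yields
\begin{equation*}
\|XW - Y\|_2 \leq \|A^{1/2} - B^{1/2}\|_2 + \sqrt{\|B\|_2}\,\|Q_X W - Q_Y\|_2.
\end{equation*}

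For the subspace alignment term, the Procrustes choice of $W$ together with a standard $\sin\Theta$ inequality gives $\|Q_X W - Q_Y\|_2 \leq \sqrt{2}\,\|P_A - P_B\|_2$. Applying Davis--Kahan to $B$ viewed as a perturbation of $A$, with spectral gap $\delta$ separating the $d$ nonzero eigenvalues of $B$ from the $n - d$ zero ones, yields $\|P_A - P_B\|_2 \leq \|A - B\|_2/\delta$. This accounts for the $\sqrt{\|B\|_2}/\delta$ contribution in the stated bound. For the matrix square-root perturbation term, I would apply the inequality $\|A^{1/2} - B^{1/2}\|_2 \lesssim \|A - B\|_2/\sqrt{\delta}$, valid on subspaces where both matrices are bounded below by $\delta$; since $1/\sqrt{\delta} \leq \sqrt{\|A\|_2}/\delta$ whenever $\delta \leq \|A\|_2$ (the regime of interest), this absorbs into the desired $\sqrt{\|A\|_2}/\delta$ term.

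The main obstacle I anticipate is the square-root perturbation step. Because $A$ and $B$ are only PSD with possibly distinct ranges, the classical inequality $\|A^{1/2} - B^{1/2}\|_2 \leq \|A - B\|_2 / (\sqrt{\lambda_{\min}(A)} + \sqrt{\lambda_{\min}(B)})$ does not apply directly---indeed, $(A^{1/2} - B^{1/2})$ can be as large as $\sqrt{\|A - B\|_2}$ on $\mathrm{null}(B)$ by the Powers--Stormer bound. To circumvent this, one must exploit that the operator only acts on $Q_X W$, whose range equals $\mathrm{range}(A)$, and then further split the input into its components along and orthogonal to $\mathrm{range}(B)$: on the along-component the refined inequality applies with denominator $\sqrt{\delta}$, while the orthogonal component has norm already controlled by $\|A - B\|_2/\delta$ via Davis--Kahan. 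Handling this circular dependence cleanly---perhaps through a regularization-and-limit argument on $A + \varepsilon I$, $B + \varepsilon I$, or via an integral representation of the square root---is the crux of the proof. Once settled, summing the two contributions and absorbing constants immediately yields the stated bound.
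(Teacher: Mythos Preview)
The paper does not prove this lemma; it is quoted verbatim from \cite{tang2013universally} and used as a black box. So there is no ``paper's own proof'' to compare against. Let me instead assess your proposal on its own terms.

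Your overall strategy---Procrustes alignment plus Davis--Kahan for the subspace term plus a square-root perturbation bound for the remaining term---is sound, and you correctly isolate the genuine difficulty: the operator $A^{1/2}-B^{1/2}$ acts on vectors in $\mathrm{range}(A)$, but neither square root respects the other's range, so the usual Lipschitz bound $\|A^{1/2}-B^{1/2}\|\lesssim\|A-B\|/\sqrt{\delta}$ does not apply directly. Your proposed regularization $A+\varepsilon I$, $B+\varepsilon I$ will not rescue this: the bound you get scales like $\|A-B\|/\sqrt{\varepsilon}$ and blows up as $\varepsilon\to 0$.

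The cleanest fix is to reorganize the argument so that the rank-deficiency is handled \emph{before} the square-root step, not after. Concretely: write $X=U_B\tilde X+X_\perp$ with $\tilde X=U_B^{\transpose}X\in\reals^{d\times d}$ and $X_\perp=(I-P_B)X$. The off-range part satisfies $\|X_\perp\|=\|(I-P_B)P_AX\|\leq\|P_A-P_B\|\,\|X\|\leq(\|A-B\|/\delta)\sqrt{\|A\|}$ by Davis--Kahan. For the in-range part, set $\tilde A=U_B^{\transpose}AU_B$ and $\tilde B=U_B^{\transpose}BU_B$; these are $d\times d$ with $\tilde B\succeq\delta I$, and (assuming without loss $\|A-B\|<\delta$, else the bound is trivial) also $\tilde A\succ 0$. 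Now take polar decompositions $\tilde X=\tilde A^{1/2}R_X$, $U_B^{\transpose}Y=\tilde B^{1/2}R_Y$, set $W=R_X^{\transpose}R_Y$, and you are reduced to bounding $\|\tilde A^{1/2}-\tilde B^{1/2}\|$ for genuinely positive-definite $d\times d$ matrices, where the integral representation (or the Sylvester identity $\tilde A-\tilde B=\tilde A^{1/2}(\tilde A^{1/2}-\tilde B^{1/2})+(\tilde A^{1/2}-\tilde B^{1/2})\tilde B^{1/2}$) gives $\|\tilde A^{1/2}-\tilde B^{1/2}\|\leq\|A-B\|/\sqrt{\delta}\leq\|A-B\|\sqrt{\|B\|}/\delta$. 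Summing the two pieces yields the claim. This is essentially your ``split along and orthogonal to $\mathrm{range}(B)$'' idea, but applied to $X$ itself rather than to the argument of $A^{1/2}-B^{1/2}$, which sidesteps the circularity you flagged.
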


\mypara{Step 3. Commuting the unitary matrix.} Roughly speaking, Lemma~\ref{lem:ppclose} and Lemma~\ref{lem:squareroot} show that there exists a unitary matrix $W$ such that $\|V_{k_1}W - V^*_{k_1}\|_2$ is close to $0$. Standard matrix perturbation result also shows that $\Lambda_{k_1}$ and $\Lambda^*_{k_1}$ are close. This gives us that $V_{k_1}W \Lambda^{-\frac 1 2}_{k_1}$ and $V^*_{k_1}(\Lambda^*_{k_1})^{-\frac 1 2}$ are close, whereas we need that $V_{k_1}\Lambda^{-\frac 1 2}_{k_1}W$ and $V^*_{k_1}(\Lambda^*_{k_1})^{-\frac 1 2}$ are close. The unitary matrix $W$ is not in the right place. This is a standard technical obstacle for analyzing PCA based techniques~\cite{tang2013universally,fan2016overview,li2017world}. We develop the following lemma to address the issue.

\begin{lemma}\label{lem:aux-inv-app}
	Let $U_1, U_2$ be $n \times d$ matrices such that $U_1^\top U_1 = U_2^\top U_2 = I$. Let $S_1$, $S_2$ be diagonal matrices with strictly positive entries, and let $W \in \reals^{d \times d}$ be a unitary matrix. Then,
	\begin{align*}
		\| U_1 S_1^{-1} W - U_2 S_2^{-1} \| \leq \frac{\| U_1 S_1 W - U_2 S_2 \|}{\min\{(S_1)_{ii} \} \cdot \min\{(S_2)_{ii} \}} +  \frac{\|U_1 U_1^\top - U_2 U_2^\top\|}{\min\{(S_2)_{ii} \}}
	\end{align*}
\end{lemma}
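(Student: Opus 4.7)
The plan is to isolate the two sources of error on the right—the residual $E := U_1 S_1 W - U_2 S_2$ and the subspace gap $P_1 - P_2$, where $P_i := U_i U_i^\transpose$—via a single algebraic decomposition. Let $Q := U_1^\transpose U_2$. Writing $U_2 = U_1 Q + (I - P_1) U_2$ and using that $P_2 U_2 = U_2$, so $(I - P_1) U_2 = (P_2 - P_1) U_2$, I would obtain
\begin{equation*}
U_1 S_1^{-1} W - U_2 S_2^{-1} \;=\; U_1\bigl(S_1^{-1} W - Q S_2^{-1}\bigr) \;+\; (P_1 - P_2)\,U_2 S_2^{-1}.
\end{equation*}
The second summand has operator norm at most $\|P_1 - P_2\|/\min\{(S_2)_{ii}\}$, which matches the second term of the claim.

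The real work is bounding $U_1(S_1^{-1} W - Q S_2^{-1})$, which is exactly the obstacle flagged in the paragraph preceding the lemma: one must commute $W$ past the diagonal $S_1^{-1}$. The key identity is to multiply $E = U_1 S_1 W - U_2 S_2$ on the \emph{left} by $U_2^\transpose$, which yields $Q^\transpose S_1 W - S_2 = U_2^\transpose E$. Transposing (using symmetry of $S_1$) and then successively left-multiplying by $W$ and $S_1^{-1}$ solves for $Q = S_1^{-1} W S_2 + S_1^{-1} W E^\transpose U_2$; dividing on the right by $S_2$ produces the clean identity
\begin{equation*}
S_1^{-1} W - Q S_2^{-1} \;=\; -\, S_1^{-1} W\, E^\transpose\, U_2\, S_2^{-1}.
\end{equation*}
Since $\|U_i\| = \|W\| = 1$, taking spectral norms gives $\|U_1(S_1^{-1} W - Q S_2^{-1})\| \leq \|E\|/(\min\{(S_1)_{ii}\}\cdot \min\{(S_2)_{ii}\})$, matching the first term of the claim. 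A single application of the triangle inequality then finishes the proof.

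The main difficulty is spotting on which side to hit $E$. A direct pseudoinverse approach—writing $U_i S_i^{-1} W = (U_i S_i W)(W^\transpose S_i^{-2} W)$ and applying the identity $A^\dagger - B^\dagger = A^\dagger(B - A)B^\dagger + \cdots$—produces a strictly weaker bound scaling like $(\|S_1\| + \|S_2\|)\|E\|/(\min\{(S_1)_{ii}\}\cdot\min\{(S_2)_{ii}\}^2)$, precisely because the diagonal factors $S_i$ and the unitary $W$ fail to commute and force an extra factor of $S_2^{-1}$. The trick in the identity above is that left-multiplying $E$ by $U_2^\transpose$ already lines $W$ up adjacent to $S_1$, so exactly one inverse on each side suffices. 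Once this observation is made, no further perturbation theory or Davis--Kahan style argument is needed beyond the given structural assumptions on $U_1, U_2, S_1, S_2, W$.
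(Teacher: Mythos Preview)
Your proof is correct and arrives at exactly the same algebraic identity as the paper's: writing $E = U_1 S_1 W - U_2 S_2$, both you and the paper establish
\[
U_1 S_1^{-1} W - U_2 S_2^{-1} \;=\; -\,U_1 S_1^{-1} W\, E^\transpose\, U_2 S_2^{-1} \;+\; (U_1 U_1^\transpose - U_2 U_2^\transpose)\,U_2 S_2^{-1},
\]
from which the bound follows by taking norms. The paper simply states this identity in one line (with $-E^\transpose$ written as $S_2 U_2^\transpose - W^\transpose S_1 U_1^\transpose$), whereas you derive it via the intermediate decomposition through $Q = U_1^\transpose U_2$; the content is identical.
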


\begin{proof}
	Observe that,
	\begin{align*} 
		U_1 S_1^{-1} W - U_2 S_2^{-1} &= U_1 S_1^{-1} W (S_2 U_2^\top - W^\top S_1 U_1^\top )U_2 S^{-1}_2  + U_1 U_1^\top U_2 S_2^{-1} - U_2 U_2^\top U_2 S_2^{-1}. 
	\end{align*}
	The result then follows by taking spectral norms of both sides, the triangle inequality and the sub-multiplicativity of the spectral norm.
\end{proof}

Results from Step 1 to Step 3 suffice to prove the first part of Proposition~\ref{prop:closeness}. First, we use Lemma~\ref{lem:ppclose} and Lemma~\ref{lem:squareroot} (adopted from ~\cite{tang2013universally}) to get that 
\begin{equation}
    \|V^*_{k_1} (\Lambda^*_{k_1})^{\frac 1 2} W - V_{k_1}(\Lambda_{k_1})^{\frac 1 2}\|_2 \leq \frac{c_0 \xi}{\delta^2}. 
\end{equation}
Next, observe that { $\lambda_{k_1}, \lambda^*_{k_1} = \Omega(\delta)$}. By applying Lemma~\ref{lem:aux-inv-app}, with $U_1 = V^*_{k_1}$ and $S_1 = (\Lambda^*_{k_1})^{\frac 1 2}$, $U_2 = V_{k_1}$ and $S_2 = (\Lambda_{k_1})^{\frac 1 2}$, we obtain 

\begin{align*}
	\| V_{k_1}^{ } \Lambda_{k_1}^{-\frac 1 2}W - V_{k_1}^* \Lambda_{k_1}^{*\,-\frac 1 2} \|_2 &\leq \frac{\| V_{k_1}^* \Lambda_{k_1}^{*\, \frac 1 2} W - V_{k_1}^{} \Lambda_{k_1}^{\frac{1}{2}} \|_2}{\delta} + \frac{\| \calP^* - \calP \|_2}{\delta} \leq \frac{c_1 \xi}{\delta^3} 
\end{align*}

This completes the proof of Proposition~\ref{prop:closeness}.

\subsubsection{Step 2. Analysis of $\mZ^{\transpose} \mY$}
\begin{proposition}
Consider running $\proc{Adaptive-RRR}$ in Fig.~\ref{fig:pcarrr} 
to solve the regression problem $\my = M\mx + \epsilon$. Let $\hat \mZ_+$ be the output of the first stage $\proc{Step-1-PCA-X}$. Let $W$ be the unitary matrix specified in Proposition~\ref{prop:closeness}. Let $\hat N_+ = \hat \mZ^{\transpose}_+ \mY$. We have with high probability (over the training data), 
\begin{align*}
    \hat N^{\transpose}_+ = W^{\transpose}N^{\transpose} + \calE_L + \calE_T, 
\end{align*}
where 
\begin{align*}
    \| \calE_L\|_2 \leq 2.2 \sigma_{\epsilon}\sqrt{\frac{d_2}{n}} \quad \mbox{ and } \quad \| \calE_T\|_F = O(\epsilon/\delta^3). 
\end{align*}
\label{unitary_noise}
\end{proposition}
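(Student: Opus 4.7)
The strategy is to expand $\hat N_+^{\transpose} = \tfrac{1}{n}\hat \mZ_+^{\transpose}\mY$ by substituting two things: a regression-model expansion of $\mY$, and the decomposition of $\hat\mZ_+$ supplied by Proposition~\ref{prop:closeness}. Introduce the ``ideal whitened design'' $\mZ^{k_1} \triangleq \mX V^*_{k_1}(\Lambda^*_{k_1})^{-1/2}$ and its complement $\mZ^{\perp}$ built from the remaining eigencomponents of $C^*$, so that under the Gaussian assumption on $\mx$ the rows of $[\mZ^{k_1},\mZ^{\perp}]$ are i.i.d.\ $N(0,I_{d_1})$ and the model rewrites as $\mY = \mZ^{k_1}(N^{k_1})^{\transpose} + \mZ^{\perp}(N^{\perp})^{\transpose} + E$, where $N^{k_1}, N^{\perp}$ are the corresponding column-blocks of $N = MV^*(\Lambda^*)^{1/2}$. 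Proposition~\ref{prop:closeness} gives a residual $\Delta \triangleq V_{k_1}(\Lambda_{k_1})^{-1/2} - V^*_{k_1}(\Lambda^*_{k_1})^{-1/2}W$ with $\|\Delta\|_2 \leq c_1 \xi/\delta^3$, hence $\hat\mZ_+ = \mZ^{k_1}W + \mX\Delta$. Expanding the product and identifying the main term as $W^{\transpose}(N^{k_1})^{\transpose}$ (the $k_1$-truncation of the claimed $W^{\transpose}N^{\transpose}$), I would set
\begin{align*}
\calE_L &= W^{\transpose}\, \tfrac{1}{n}(\mZ^{k_1})^{\transpose}E, \\
\calE_T &= W^{\transpose}\bigl(\tfrac{1}{n}(\mZ^{k_1})^{\transpose}\mZ^{k_1} - I_{k_1}\bigr)(N^{k_1})^{\transpose} + W^{\transpose}\,\tfrac{1}{n}(\mZ^{k_1})^{\transpose}\mZ^{\perp}(N^{\perp})^{\transpose} + \Delta^{\transpose}\tfrac{1}{n}\mX^{\transpose}\mY.
\end{align*}

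The individual bounds reduce to standard random-matrix facts. For $\calE_L$: since $\mZ^{k_1}$ and $E$ are independent Gaussian blocks of sizes $n\times k_1$ and $n\times d_2$ with rows $N(0,I)$ and $N(0,\sigma_\epsilon^2 I_{d_2})$ respectively, concentration of the product gives $\|\tfrac{1}{n}(\mZ^{k_1})^{\transpose}E\|_2 \leq \sigma_\epsilon(\sqrt{d_2}+\sqrt{k_1}+o(1))/\sqrt n$ with high probability; unitary invariance of $W^{\transpose}$ together with $k_1 \ll d_2$ (recall $k_1 \lesssim \delta^{-1/(\omega+1)}$ with $\delta = n^{-O(1)}$) collapses this to $\|\calE_L\|_2 \leq 2.2\sigma_\epsilon\sqrt{d_2/n}$. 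For the three summands of $\calE_T$: Wishart concentration gives $\|\tfrac{1}{n}(\mZ^{k_1})^{\transpose}\mZ^{k_1} - I_{k_1}\|_2 = \tilde O(\sqrt{k_1/n})$, and combined with $\|N^{k_1}\|_2 \leq \Upsilon$ this contributes $\tilde O(\Upsilon\sqrt{k_1/n})$; for the cross term, independence of the two blocks of $\mZ$ yields $\|\tfrac{1}{n}(\mZ^{k_1})^{\transpose}\mZ^{\perp}\|_2 = \tilde O(1)$, while $\|N^{\perp}\|_F \leq \Upsilon\sqrt{\sum_{i>k_1}\lambda^*_i} = O(\delta^{(\omega-1)/(2(\omega+1))})$ by Proposition~\ref{prop:closeness}(ii); finally, a crude bound $\|\tfrac{1}{n}\mX^{\transpose}\mY\|_2 = O(1)$ (splitting it as $\tfrac{1}{n}\mX^{\transpose}\mX M^{\transpose} + \tfrac{1}{n}\mX^{\transpose}E$) combined with $\|\Delta\|_2 \leq c_1\xi/\delta^3$ yields $\|\Delta^{\transpose}\tfrac{1}{n}\mX^{\transpose}\mY\|_F = O(\sqrt{d_2}\,\xi/\delta^3)$. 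Under the hypotheses $\xi = \omega(\log^3 n/\sqrt n)$ and $\delta^3 = \omega(\xi)$, the last contribution dominates and delivers the claimed $\|\calE_T\|_F = O(\xi/\delta^3)$.

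The chief care point is pinning down the constant $2.2$ in the $\calE_L$ bound: it requires the sharp $(\sqrt{k_1}+\sqrt{d_2})/\sqrt n$ rate for the product of independent Gaussian blocks rather than a looser $\sqrt{k_1 d_2/n}$ bound, together with the observation that the algorithm's $k_1$ is strictly dominated by $d_2$ in the parameter regime. A secondary worry is that $W$ is data-dependent (it arises as the minimizer in Lemma~\ref{lem:squareroot} applied to random inputs), but this is sidestepped by unitary invariance of the spectral norm, so the concentration bound on $\tfrac{1}{n}(\mZ^{k_1})^{\transpose}E$ transfers to $\calE_L$ with no coupling argument needed.
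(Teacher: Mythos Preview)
Your overall plan---expanding $\frac{1}{n}\hat\mZ_+^{\transpose}\mY$ via Proposition~\ref{prop:closeness} and the block decomposition $\mZ = [\mZ^{k_1},\mZ^{\perp}]$, $N = [N^{k_1},N^{\perp}]$---is exactly the paper's. The paper's $\calE_3 = W^{\transpose}\tfrac{1}{n}(\mZ^{k_1})^{\transpose}E$ is the term you call $\calE_L$, and your sharp $1.1\sigma_\epsilon(\sqrt{k_1}+\sqrt{d_2})/\sqrt n$ bound matches the paper's Lemma~\ref{lem:ze}. But your allocation of the remaining error terms and two of your Frobenius estimates are off.

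\emph{Cross term.} The claim $\|\tfrac{1}{n}(\mZ^{k_1})^{\transpose}\mZ^{\perp}\|_2 = \tilde O(1)$ is false in the regime $d_1 > n$: since $\mZ^{\perp}$ is $n\times(d_1-k_1)$ standard Gaussian, $\|\mZ^{\perp}\|_2 = \Theta(\sqrt{d_1})$, and the product has spectral norm $\Theta(\sqrt{d_1/n})$, which can be large. Multiplying by $\|N^{\perp}\|_F = O(\delta^{(\omega-1)/(2(\omega+1))})$ does not bring this down to $O(\xi/\delta^3)$. The paper avoids $d_1$ altogether by computing $\E[\|\mZ_+^{\transpose}\mZ_-N_-^{\transpose}\|_F^2] = k_1 n\|N_-\|_F^2$ directly, obtaining $\|\calE_2\|_F = o(\sqrt{k_1/n})$; it then puts $\calE_2$ into $\calE_L$, where only the spectral norm (trivially dominated by $2.2\sigma_\epsilon\sqrt{d_2/n}$) is needed.

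\emph{The $\Delta$-term.} You derive $O(\sqrt{d_2}\,\xi/\delta^3)$ from $\|\tfrac{1}{n}\mX^{\transpose}\mY\|_2 = O(1)$ and then assert $\|\calE_T\|_F = O(\xi/\delta^3)$; the $\sqrt{d_2}$ is not accounted for. The paper separates this into $\calE_4 = \tfrac{1}{n}\Delta_+^{\transpose}E$ (absorbed into $\calE_L$) and $\calE_5 = \tfrac{1}{n}\Delta_+^{\transpose}\mZ N^{\transpose}$, and for the latter exploits the key fact $\|N\|_F^2 = \E[\|M\mx\|^2] \leq \|M\|_2^2\,\E[\|\mx\|^2] = O(1)$, so $\|\mZ N^{\transpose}\|_F = O(\sqrt n)$ and $\|\calE_5\|_F = O(\xi/\delta^3)$ with no dimension factor. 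Your route is salvageable if you bound $\|\tfrac{1}{n}\mX^{\transpose}\mY\|_F$ rather than its spectral norm, using the same $\|N\|_F = O(1)$ observation.

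In short, the paper's split places the Wishart centering, cross term, and $\tfrac{1}{n}\Delta_+^{\transpose}E$ into $\calE_L$ (spectral), keeping only $\tfrac{1}{n}\Delta_+^{\transpose}\mZ N^{\transpose}$ in $\calE_T$ (Frobenius). Your split shifts more into $\calE_T$, which is workable in principle but requires the direct second-moment Frobenius arguments above rather than the spectral-then-rank conversions you use.
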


The rest of this Section proves Proposition~\ref{unitary_noise}. Recall that $\mY = \mZ N^{\transpose} + E$ is the orthogonalized form of our problem. Let us split $N = [N_+, N_-]$, where $N_+ \in \reals^{d_2 \times k_1}$ consists of the $k_1$ leading columns of $N$ and 
$N_- \in \reals^{d_2 \times (d_1 - k_1)}$ consists of the remaining columns. Similarly, let $\mz = [\mz_+, \mz_-]$, where $\mz_+ \in \reals^{n \times k_1}$ and $\mz_- \in \reals^{n \times (d_1 - k_1)}$. Let $\mZ = [\mZ_+, \mZ_-]$, where $\mZ_+ \in \reals^{n \times k_1}$ and $\mZ_- \in \reals^{n \times (d_1 - k_1)}$. Finally, when we refer to estimated features of an individual instance produced from Step 1, we use $\hat \mz_+$. 

We have $\mY = \mZ_+ N^{\transpose}_+ + \mZ_- N^{\transpose}_- + E$. We let 
\begin{align*}
\delta_+ & = \hat \mz_+ - W^{\transpose} \mz_+ \\
\Delta_+ & = \hat \mZ_+  - \mZ_+ W, 
\end{align*}
where $\|\delta_+\|_2 = O(\epsilon/\delta^3)$ and $\|\Delta_+\|_2 = O(\sqrt n \epsilon/\delta^3)$.  We have 
\begin{align*}
\frac 1 n \hat \mZ^{\transpose}_+  \mY & = \frac 1 n (\Delta_+ + \mZ_+W)^{\transpose}(\mZ_+ N^{\transpose}_+ + \mZ_- N^{\transpose}_- + E)  \\
& =  W^{\transpose} N^{\transpose}_+ + W^{\transpose}\left(\frac 1 n \mZ^{\transpose}_+\mZ_{+} - I_{k_1 \times k_1}\right)N^{\transpose}_+ + \frac 1 n W^{\transpose}\mZ^{\transpose}_+\mZ_- N^{\transpose}_- + \frac 1 n W^{\transpose}\mZ^{\transpose}_+E \\
& \quad \quad +  \frac 1 n \Delta^{\transpose}_+ (\mZ_+ N^{\transpose}_+ + \mZ_- N^{\transpose}_- + E). 
\end{align*}

We shall let 
\begin{equation}
\hat N^{\transpose}_+ = W^{\transpose} N^{\transpose} + \calE,  
\end{equation}
where 
\begin{align*}
\calE & = \calE_1 + \calE_2 + \calE_3 + \calE_4 + \calE_5 \\
\calE_1 & = W^{\transpose}\left(\frac 1 n \mZ^{\transpose}_+ \mZ_+ - I_{k_1 \times k_1}\right)N^{\transpose}_+ \\
\calE_2 & = \frac 1 n W^{\transpose}\mZ^{\transpose}_+ \mZ_- N^{\transpose} \\
\calE_3 & = \frac 1 n W^{\transpose} \mZ^{\transpose}_+ E \\
\calE_4 & = \frac 1 n \Delta^{\transpose}_+ E \\
\calE_5 & = \frac 1 n \Delta^{\transpose}_+ (\mZ_+N^{\transpose}_+ + \mZ_- N^{\transpose}_-). 
\end{align*}

We next analyze each term. We aim to find bounds in either spectral norm or Frobenius norm. In some cases, it suffices to use $\|\calE_i\|_2 \cdot \Rank(\calE_i)$ to upper bound $\|\calE_i\|_F$. So we bound only $\calE_i$'s spectral norm. On the other hand, in the case of analyzing $\calE_5$, we can get a tight Frobenius norm bound but we cannot get a non-trivial spectral bound. 

From time to time, we will label the dimension of matrices in complex multiplication operations to enable readers to do sanity checks. 

\myparab{Bounding $\calE_1$.} We use the following Lemmas. 

\begin{lemma}\label{lem:zzp} Let $\mZ \in \reals^{n \times k_1}$, where $k_1 < n$. 
Let each entry of $\mZ$ be an independent standard Gaussian. 
We have 
\begin{equation}
    \left\|\frac 1 n \mZ^{\transpose} \mZ - I \right\| \leq \max \left\{\frac{10 \log^2n}{\sqrt n}, 4 \sqrt{\frac{k_1} n}\right\}
\end{equation}
\end{lemma}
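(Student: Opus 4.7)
Since $\frac{1}{n}\mZ^{\transpose}\mZ - I$ is symmetric with eigenvalues $\frac{\sigma_i(\mZ)^2}{n}-1$, its spectral norm equals $\max_i \bigl|\frac{\sigma_i(\mZ)^2}{n}-1\bigr|$. The plan is to reduce the claim to a sharp concentration bound on the singular values of the Gaussian matrix $\mZ$, and then convert such a bound on $\sigma_i(\mZ)$ into one on $\sigma_i(\mZ)^2/n$.

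\textbf{Key steps.} First, I invoke the Davidson--Szarek (Gordon) concentration inequality for rectangular Gaussian matrices: for every $t\geq 0$, with probability at least $1-2e^{-t^2/2}$,
\[
\sqrt{n}-\sqrt{k_1}-t \;\leq\; \sigma_{\min}(\mZ) \;\leq\; \sigma_{\max}(\mZ)\;\leq\;\sqrt{n}+\sqrt{k_1}+t.
\]
Take $t=c_0\log n$ for a small constant $c_0$ so that this holds with probability at least $1-n^{-10}$. Second, set $a=\sqrt{k_1/n}+c_0\log n/\sqrt{n}$; then $\sigma_i(\mZ)/\sqrt{n}\in[1-a,1+a]$ (when $a\leq 1$, else the claim is vacuous), which yields
\[
\left|\frac{\sigma_i(\mZ)^2}{n}-1\right|\;\leq\; 2a+a^2\;\leq\;3a \;=\; 3\sqrt{k_1/n}+3c_0\,\log n/\sqrt{n}.
\]
Third, I conclude with a short case split. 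If $k_1\geq \log^4 n$, then $\log n/\sqrt{n}\leq \sqrt{k_1/n}/\log n$, so the right-hand side is absorbed into $4\sqrt{k_1/n}$. Otherwise $\sqrt{k_1/n}< \log^2 n/\sqrt{n}$ and both terms are absorbed into $10\log^2 n/\sqrt{n}$. Either way the maximum of the two bounds in the lemma statement dominates, and taking the max over $i$ gives the spectral-norm bound.

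\textbf{Main obstacle.} The analytic content reduces entirely to Davidson--Szarek plus a bit of constant-tuning, so no genuinely new ingredient is required. The one subtlety is that a naive attempt to reuse Lemma~\ref{lem:chernoff} (truncating each row $\mz_i$ of $\mZ$ at norm $\sqrt{k_1}+O(\log n)$ in the style of Lemma~\ref{lem:cconcentrate}, using $\alpha^2=O(k_1+\log^2 n)$ and $\beta=1$) only gives $\xi=O\bigl(\sqrt{(k_1+\log^2 n)\log n/n}\bigr)$, which loses a $\sqrt{\log n}$ factor relative to the advertised $4\sqrt{k_1/n}$ term. Routing through the singular-value concentration bound is therefore the cleanest way to obtain the sharp constants stated in the lemma.
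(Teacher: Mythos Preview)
Your proposal is correct and follows essentially the same route as the paper: the paper also invokes the Gaussian singular-value concentration bound $\sqrt n-\sqrt{k_1}-t\le\sigma_{\min}\le\sigma_{\max}\le\sqrt n+\sqrt{k_1}+t$ (citing \cite{rudelson2010non}), squares to control $\frac1n\mZ^\transpose\mZ-I$, and case-splits on whether $\sqrt{k_1}$ or a polylog term dominates. The only cosmetic difference is that the paper sets $t=\max\{\sqrt{k_1}/10,\log^2 n\}$ up front and splits on which term realizes the max, whereas you take $t=c_0\log n$ and split afterward on $k_1\gtrless\log^4 n$; either bookkeeping yields the stated constants.
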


\begin{proof}[Proof of Lemma~\ref{lem:zzp}] 
 We rely on the Lemma~\cite{rudelson2010non}: 

\begin{lemma} Let  $S \in \reals^{n \times k}$ ($n > k$) be a random matrix so that each $S_{i,j}$ is an independent standard Gaussian random variable. Let $\sigma_{\max}(S)$ be the maximum singular value of $S$ and $\sigma_{\min}(S)$ be the minimum singular value of it. We have 
\begin{equation}
    \Pr[\sqrt n - \sqrt k - t \leq \sigma_{\min}(S) \leq \sigma_{\max}(S) \leq \sqrt n + \sqrt k + t] \geq 1 - 2 \times \exp(-t^2/2). 
\end{equation}
\end{lemma}

We set $ t= \max\left\{\frac{\sqrt{k_1}}{10}, \log^2n\right\}$. 
Let us start with considering the case $\frac{\sqrt{k_1}}{10} > \log^2 n$. We have
\begin{equation}
    \sigma_{\min}(\mZ^{\transpose}\mZ) \geq n - 2.2 \sqrt{n k_1} + 1.21 k_1 \geq n - 2.2 \sqrt{n k_1}. 
\end{equation}
and 

\begin{equation}
    \sigma_{\max}(\mZ^{\transpose} \mZ)\leq n + 2.2 \sqrt{nk_1} + 1.21 k_1 \leq n + 4 \sqrt{n k_1}. 
\end{equation}

The case $\frac{\sqrt{k_1}}{10} \leq  \log^2 n$ can be analyzed in a similar fashion so that we can get 
\begin{equation}
    \left\|\frac 1 n \mZ^{\transpose}\mZ - I \right\| \leq \max \left\{\frac{10 \log^2 n}{\sqrt n}, 4 \sqrt{\frac{k_1}{n}}\right\}.
\end{equation}
\end{proof}

Therefore, we have 
\begin{align*}
\|\calE_1\|_2 \leq  \max \left\{\frac{10 \log^2 n}{\sqrt n}, 4 \sqrt{\frac{k_1}{n}}\right\} \|N^{\transpose}_+\|_2 = \Upsilon \max  \left\{\frac{10 \log^2 n}{\sqrt n}, 4 \sqrt{\frac{k_1}{n}}\right\}. 
\end{align*}

\myparab{Bounding $\calE_2$.} Observe that  $\E[\|\mZ_- N^{\transpose}_-\|^2_F] = n\|N^{\transpose}_-\|^2_F$. Also, 
\begin{align*}
\E[\|\underbrace{\mZ^{\transpose}_+}_{k_1 \times n} \underbrace{\mZ_-}_{ n \times (d_1 - k_1)} \underbrace{N^{\transpose}_-}_{(d_1-k_1)\times d_2}\|^2_F \mid \mZ_- N^{\transpose}_-] = k_1 \|\mZ_- N^{\transpose}_-\|^2_F. 
\end{align*}

Therefore, 
\begin{equation}\label{eq:cale2}
\E[\mZ^{\transpose}_+ \mZ_- N^{\transpose}_-] = k_1 n \|N^{\transpose}_-\|_F.
\end{equation}

We next bound $\|N_-\|_F$. 

\begin{lemma}\label{lem:Nminus} Let $N$ be the learnable parameter in normalized form $N = [N_+, N_-]$, where $N_+ \in \reals^{d_2 \times k_1}$ and $N_- \in \reals^{d_2 \times (d_1 - k_1)}$, and $k_1$ is determined by 
$\proc{Step-1-PCA-X}$. We have  $\|N_-\|_F = O\left(\delta^{\frac{\omega - 1}{\omega+1}}\right) = o(1)$.
\end{lemma}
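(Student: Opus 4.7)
The plan is to unpack the definition of $N_-$ in the orthogonalized form and then reduce the Frobenius norm to the tail sum of eigenvalues already controlled by Proposition~\ref{prop:closeness}(ii). Recall from the definition of the orthogonalized problem that $N = M V^* (\Lambda^*)^{\frac 1 2}$. Partitioning $V^* = [V^*_+, V^*_-]$ with $V^*_+ \in \reals^{d_1 \times k_1}$ consisting of the leading $k_1$ eigenvectors of $C^*$, and writing $\Lambda^*_-$ for the diagonal block of the trailing eigenvalues $\lambda^*_{k_1+1},\dots,\lambda^*_{d_1}$, we have the clean expression $N_- = M V^*_- (\Lambda^*_-)^{\frac 1 2}$.

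Next, I would compute the Frobenius norm via the trace identity and move $M$ inside using cyclicity of the trace:
\begin{align*}
\|N_-\|_F^2 \;=\; \trace\!\bigl(M V^*_- \Lambda^*_- (V^*_-)^{\transpose} M^{\transpose}\bigr) \;=\; \trace\!\bigl(\Lambda^*_- (V^*_-)^{\transpose} M^{\transpose} M V^*_-\bigr).
\end{align*}
Since both $\Lambda^*_-$ and $(V^*_-)^{\transpose} M^{\transpose} M V^*_-$ are positive semidefinite, the standard inequality $\trace(AB) \leq \|B\|_2 \cdot \trace(A)$ (applied via $\trace(AB) = \trace(A^{\frac 1 2} B A^{\frac 1 2})$) gives
\begin{align*}
\|N_-\|_F^2 \;\leq\; \bigl\|(V^*_-)^{\transpose} M^{\transpose} M V^*_-\bigr\|_2 \cdot \trace(\Lambda^*_-) \;\leq\; \|M\|_2^2 \sum_{i > k_1} \lambda^*_i \;\leq\; \Upsilon^2 \sum_{i > k_1}\lambda^*_i,
\end{align*}
where I used that $V^*_-$ has orthonormal columns so the spectral norm of its quadratic form with $M^{\transpose}M$ is at most $\|M\|_2^2$, and the hypothesis $\|M\|_2 \leq \Upsilon = O(1)$.

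Finally, I would plug in Proposition~\ref{prop:closeness}(ii), which guarantees $\sum_{i \geq k_1}\lambda^*_i = O\!\bigl(\delta^{(\omega-1)/(\omega+1)}\bigr)$ under the power-law assumption $\lambda^*_i \leq c\cdot i^{-\omega}$ with $\omega \geq 2$. Since $\delta = o(1)$ as a function of $n$, this gives $\|N_-\|_F^2 = O\!\bigl(\delta^{(\omega-1)/(\omega+1)}\bigr) = o(1)$, as claimed. There is no real obstacle here: the only non-trivial ingredient is the trace-norm bound that lets us separate the operator $M$ from the diagonal tail, and all remaining work has been done upstream in Proposition~\ref{prop:closeness}.
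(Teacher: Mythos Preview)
Your proposal is correct and follows essentially the same route as the paper: write $N_- = M V^*_- (\Lambda^*_-)^{1/2}$, separate off $\|M\|_2^2$ from the Frobenius norm so that what remains is $\sum_{i>k_1}\lambda^*_i$, and then invoke Proposition~\ref{prop:closeness}(ii). The only cosmetic difference is that the paper applies the submultiplicative bound $\|N_-\|_F^2 \leq \|M\|_2^2\|V^*\|_2^2\|(\Lambda^*_-)^{1/2}\|_F^2$ directly, whereas you phrase it via the trace inequality $\trace(AB)\leq \|B\|_2\trace(A)$; both yield the same bound $\|N_-\|_F^2 = O(\delta^{(\omega-1)/(\omega+1)})$.
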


\begin{proof}[Proof of Lemma~\ref{lem:Nminus}]
Recall that 
$$N = \underbrace{M}_{d_2 \times d_1} \underbrace{V^*}_{d_1 \times d_1} \underbrace{ (\Lambda^*)^{\frac 1 2}}_{d_1 \times d_1}.$$ 
We let $\Lambda^{*} = [\Lambda^*_+, \Lambda^*_-]$, where $\Lambda^*_+ \in \reals^{d_1 \times k_1}$ and $\Lambda^*_- \in \reals^{d_1 \times (d_1 - k_1)}$. We have $N_- = MV^* (\Lambda^*_-)^{\frac 1 2}$. Therefore, 
\begin{align}
\|N_-\|^2_F \leq \|M\|^2_2 \|V^*\|^2_2 \left\|(\Lambda^*)^{\frac 1 2}\right\|^2_F = O\left(\Upsilon \delta^{\frac{\omega - 1}{\omega+1}}\right) = o(1). \label{eqn:tailo}
\end{align}
Here, we used the assumption $\|M\|_2 = O(1)$ and the last equation holds because of Proposition~\ref{prop:closeness}.
\end{proof}

By (\ref{eq:cale2}), (\ref{eqn:tailo}), and a standard Chernoff bound, we have whp
\begin{align*}
\|\calE_2\|_2 \leq \|\calE_2\|_F  \leq 2 \sqrt{\frac{k_1}{n}}\|N_-\|_F= o\left(\sqrt{\frac{k_1}{n}}\right).
\end{align*}

\mypara{Bounding $\calE_3$. } We have the following Lemma. 

\begin{lemma}\label{lem:ze} Let $\mZ \in \reals^{n \times k_1}$ so that each entry in $\mZ$ is an independent standard Gaussian and 
$E \in \reals^{n \times d_2}$ so that each entry in $E$ is an independent Gaussian $N(0, \sigma^2_{\epsilon})$. For sufficiently large $n$, $k_1$, and $d_2$, where $k_1 \leq d_2$, 
we have 
$$\left\|\frac 1 n \mZ^{\transpose}E\right\| \leq \frac{1.1\sigma_{\epsilon}}{\sqrt n}(\sqrt{k_1} + \sqrt{d_2}).$$
\end{lemma}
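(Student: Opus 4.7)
The plan is to decouple the randomness in $\mZ$ from that in $E$ by conditioning. Since $\mZ$ and $E$ are independent and the columns of $E$ are i.i.d.\ $N(\mzero, \sigma_\epsilon^2 I_n)$, conditional on $\mZ$ each column of $\mZ^\transpose E$ is an independent draw from $N(\mzero, \sigma_\epsilon^2 \mZ^\transpose \mZ)$. Equivalently, there is a $k_1 \times d_2$ matrix $G$ with i.i.d.\ standard Gaussian entries, independent of $\mZ$, such that
\begin{equation*}
\mZ^\transpose E \;\stackrel{d}{=}\; \sigma_\epsilon \, (\mZ^\transpose \mZ)^{1/2} \, G.
\end{equation*}
Taking spectral norms gives $\|\tfrac{1}{n} \mZ^\transpose E\|_2 \le \tfrac{\sigma_\epsilon}{n}\,\|(\mZ^\transpose \mZ)^{1/2}\|_2 \,\|G\|_2$.

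Next, I would bound the two factors separately, each with high probability. For the first factor, Lemma~\ref{lem:zzp} already gives $\|\tfrac{1}{n} \mZ^\transpose \mZ - I\|_2 \le \max\{10\log^2 n / \sqrt n,\, 4\sqrt{k_1/n}\} = o(1)$, so $\|(\mZ^\transpose \mZ)^{1/2}\|_2 \le \sqrt n \,(1 + o(1))$. For the second factor, the standard Davidson--Szarek tail bound for the largest singular value of a Gaussian matrix (the same inequality invoked in the proof of Lemma~\ref{lem:zzp}) gives
\begin{equation*}
\Pr\!\left[\sigma_{\max}(G) \le \sqrt{k_1} + \sqrt{d_2} + t\right] \ge 1 - 2\exp(-t^2/2),
\end{equation*}
and choosing $t = \log^2 n$ (much smaller than $\sqrt{k_1} + \sqrt{d_2}$ when $d_2, k_1$ are large, and only adding lower-order terms otherwise) yields $\|G\|_2 \le (\sqrt{k_1} + \sqrt{d_2})(1 + o(1))$ with high probability.

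Multiplying the two bounds and dividing by $n$ gives $\|\tfrac{1}{n}\mZ^\transpose E\|_2 \le \tfrac{\sigma_\epsilon}{\sqrt n}(\sqrt{k_1} + \sqrt{d_2})(1+o(1))$, and for $n$ sufficiently large the $(1+o(1))$ factor is bounded by $1.1$, as claimed. The main subtlety, more bookkeeping than obstacle, is verifying that the conditioning step is valid (which it is, because $\mZ \perp E$ and $E$'s columns are i.i.d.\ spherical Gaussians), and then merging the two high-probability events via a union bound so the final statement holds with probability $\ge 1 - n^{-5}$.
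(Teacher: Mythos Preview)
Your proof is correct and follows essentially the same strategy as the paper: condition on $\mZ$, exploit the rotational invariance of $E$ to reduce to a $k_1\times d_2$ standard Gaussian matrix, and then separately control the singular values of $\mZ$ (via Lemma~\ref{lem:zzp}) and the Gaussian matrix (via the Davidson--Szarek/Marchenko--Pastur bound). The paper writes out the SVD $\mZ = U^Z\Sigma^Z(V^Z)^\transpose$, notes that $(U^Z)^\transpose E/\sigma_\epsilon$ is i.i.d.\ standard Gaussian, and splits $\Sigma^Z/\sqrt n = I + \Delta^Z$ into a main term plus correction; your factorization $\mZ^\transpose E \stackrel{d}{=} \sigma_\epsilon(\mZ^\transpose\mZ)^{1/2}G$ is the same decomposition packaged in one step, and is arguably a little cleaner since it avoids the two-term split.
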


\begin{proof}[Proof of Lemma~\ref{lem:ze}] Let $t =   \max \left\{\frac{10 \log^2 n}{\sqrt n}, 4 \sqrt{\frac{k_1}{n}}\right\}$. By Lemma~\ref{lem:zzp}, with high probability $\|\frac 1 n \mZ^{\transpose} \mZ - I\| \leq t$. This implies that the eigenvalues of $\mZ^{\transpose}\mZ$ are all within the range $n(1\pm t)$. Note that for $0 < \eta < 1/3$, if $\xi \in [1 - \eta, 1 + \eta]$, then $\sqrt{\xi} \in [1 - 2\eta, 1 + 2\eta]$. This implies that the singular values of $\vZ$ are within the range $\sqrt{n} (1 \pm 2 t)$.

Let $\Sigma^Z/\sqrt n = I + \Delta^Z$, where $\|\Delta^Z\|\leq 2t$. 
We have 
\begin{align}
	\frac 1 n \mZ^{\transpose} E & = 
	V^Z\left(\frac{\Sigma^Z}{\sqrt n}\right)(U^Z)^{\transpose}\frac{E}{\sqrt n} = V^Z(I+\Delta^Z)(U^Z)^{\transpose} \frac{E}{\sqrt n} \nonumber\\
	& = \underbrace{V^Z}_{k_1 \times k_1}\underbrace{(U^Z)^{\transpose}}_{k_1 \times n} \underbrace{\frac{E}{\sqrt n}}_{n \times d_2} + V^Z \Delta^Z (U^Z)^{\transpose} \frac{E}{\sqrt n}. \label{eqn:Zeps2}
\end{align}
Using the fact that the columns of $U^Z$ are orthonormal vectors, $V^Z$ is a unitary matrix, and $k_1 \leq d_2$, we see that $V^Z(U^Z)^{\transpose} E/\sqrt n$ is a matrix with i.i.d. Gaussian entries with standard deviation $\sigma_{\epsilon}/\sqrt n$. 

Let $B = V^Z(U^Z)^{\transpose} E/\sigma_{\epsilon}$ and $\tilde B =  (U^Z)^{\transpose} E/\sigma_{\epsilon}$. Then, from \eqref{eqn:Zeps2}, we have
\begin{equation}\label{eqn:couple}
	\frac 1 n \mZ^{\transpose} E  = \frac{\sigma_{\epsilon}}{\sqrt n}\left(B + V^Z \Delta^Z \tilde B\right). 
\end{equation}

The entries in $B$ ($\tilde B$) are all i.i.d Gaussian. By Marchenko-Pastar's law (and the finite sample bound of it~\cite{rudelson2010non}), we have with high probability
$\|\tilde B\|, \|B\|= \sqrt{k_1} + \sqrt{d_2} + o(\sqrt{k_1} + \sqrt{d_2})$.
Therefore, with high probability:
\begin{align*}
\left\|\frac 1 n \mZ^{\transpose}E \right\|_2 \leq \frac{1.1\sigma_{\epsilon}}{\sqrt n}(\sqrt{k_1} +\sqrt{d_2}).
\end{align*}

\end{proof}

Lemma~\ref{lem:ze} implies that 
\begin{align*}
\|\calE_3\|_2 \leq \frac{1.1 \sigma_{\epsilon}}{\sqrt n}(\sqrt{k_1} + \sqrt{d_2}). 
\end{align*}

\myparab{Bounding $\calE_4$.} We have 
\begin{align*}
\E[\|\calE_4 \|^2_F] & = \frac 1 {n^2} \E[\|\Delta^{\transpose}_+ E\|^2_F] = \frac{d_2}{n}\|\Delta_+\|^2_F = O\left(\frac{d_2\epsilon}{n \delta^3}\right) = o\left(\frac{d_2} n\right). 
\end{align*}

Using a Chernoff bound, we have whp $\|\calE_4\|_F = o\left(\frac{d_2}{n}\right)$. 

\myparab{Bounding $\calE_5$. } Because $\calE_5 = \frac 1 n \Delta^{\transpose}_+ (\mZ N^{\transpose})$, we have 
$$\|\calE_5 \|^2_F \leq \frac{1}{n^2} \|\Delta^{\transpose}_+\|^2_2 \|\mZ N^{\transpose}\|^2_F.$$
Using a simple Chernoff bound, we have whp, 
$$\|\mZ N^{\transpose}\|^2_F \leq 2 n \|M\mx\|_2^2 \leq 2n \|M\|^2_2 \|\mx\|^2_2 \leq 2 \Upsilon n.$$
This implies $\|\calE_5 \|_F^2 \leq O\left(\frac {\epsilon^2} {n^2\delta^6} n^2 \Upsilon^2 \right) = O\left(\frac{\epsilon^2}{\delta^6}\right).$

We may let 
\begin{align*}
\calE_L & =  \calE_1 + \calE_2 + \calE_3 + \calE_4  \\
\calE_T &  = \calE_5. 
\end{align*}

We can check that 
\begin{align*}
\|\calE_L\|_2  & \leq \|\calE_1\|_2 + \|\calE_2\|_2 + \|\calE_3\|_2 + \|\calE_4\|_2 \\
& \leq \Upsilon \max\left\{\frac{10 \log^2 n}{\sqrt n}, 4\frac{k_1}{n}\right\} + o\left(\sqrt{\frac{k_1}n}\right) + \frac{1.1 \sigma_{\epsilon}}{\sqrt n}(\sqrt{k_1} + \sqrt{d_2}) + o\left(\frac{d_2}{n}\right) \\
& \leq 2.2 \sigma_{\sigma_{\epsilon}} \sqrt{\frac{d_2}{n}}
\end{align*}

Also, we can see that $\|\calE_T\|_F = \|\calE_5 \|_F = O(\epsilon/\delta^3)$. This completes the proof for Proposition~\ref{unitary_noise}.

\subsubsection{Step 3. Analysis of our algorithm's MSE}
Let us recall our notation: 
\begin{enumerate}
    \item $\mz = (\Lambda^*)^{- \frac 1 2}(V^*)^{\transpose} \mx$ and $\delta_+  = \hat \mz_+ - W^{\transpose} \mz_+$.
    \item We let $\hat N^{\transpose}_+ = \hat \mZ^{\transpose}_+ \mY$ be the output of $\proc{Step-1-PCA-X}$ in Fig.~\ref{fig:pcarrr}. 
    \item All singular vectors in $\hat N_+$ whose associated singular values $\geq \theta \sigma_{\epsilon}\sqrt{\frac{d_2}{n}}$ are kept. 
\end{enumerate}

Let $\ell$ be the largest index such that $\sigma^{N_+}_{\ell} \geq \theta \sigma_{\epsilon}\sqrt{\frac{d_2}{n}}$. One can see that our testing forecast is  $\mP_{k_2}(\hat N_+) \hat \mz_+$. Therefore, we need to bound
 $\E_{\mz}[\|\mP_{k_2}(\hat N_+) \hat \mz_+ - N\mz\|^2]$. 
 
By Proposition~\ref{unitary_noise}, we have  $\hat N_+ = (W^{\transpose}N^{\transpose}_+ + \calE_L + \calE_T)^{\transpose},$ where  $\|\calE_L\|_2 \leq 2.2 \sigma_{\epsilon}\sqrt{\frac{d_2}{n}}$ and 
 $\|\calE_T\|_F = O(\xi/\delta^3)$ whp.  
 Let $\calE \triangleq \calE_L + \calE_T$.  We have
\begin{align*}
\mP_{k_2}(N_+ W+ \calE^{\transpose})\hat \mz_+ 
& = \mP_{k_2}(N_+ W + \calE^{\transpose}) (W^{\transpose}\mz_+ + \delta_+) \\
& = \mP_{k_2}(N_+ W + \calE^{\transpose})W^{\transpose}W (W^{\transpose}\mz_+ + \delta_+) \\
& = \mP_{k_2}((\underbrace{N_+}_{d_2 \times k_1} \underbrace{W}_{k_1 \times k_1} + \underbrace{\calE^{\transpose}}_{d_2 \times k_1})\underbrace{W^{\transpose}}_{k_1 \times k_1}) (\underbrace{W W^{\transpose}}_{k_1 \times k_1} \underbrace{\mz_+}_{k_1 \times 1} + \underbrace{W}_{k_1 \times k_1} \underbrace{\delta_+}_{k_1 \times 1}) \\
& = \mP_{k_2}\left(N_+ + (W\calE)^{\transpose} \right) (\mz_+ + W \delta_+). 
\end{align*}

Let $\calE' = (W\calE)^{\transpose}$, $\calE'_L = (W\calE_L)^{\transpose}$, $\calE'_T = (W \calE_{T})^{\transpose}$, and $\delta'_+ = W \delta_+$. We still have $\|\calE'_L\|_2 \leq 2.2 \sigma_{\epsilon}\sqrt{\frac{d_2}{n}}$, and $\|\calE'_T \|_F = O(\epsilon/\delta^3)$. 

We next have 
\begin{align*}
    & \E_{\mz}\left[\| \mP_{k_2}(\hat N_+) \hat \mz_+ - N\mz\|_2^2\right] & \\
    & = \E_{\mz}\Big[\big\|(\mP_{k_2}(N_+ + \calE') \mz_+ - N_+ \mz_+) + \mP_{k_2}(N_+ + \calE')\delta'_+ - N_-\mz_- \big\|^2_2\Big] & \\
    & \leq  \underbrace{\E_{\mz}\Big[\big\|(\mP_{k_2}(N_+ + \calE') \mz_+ - N_+ \mz_+)\big\|^2_2\Big]}_{\triangleq \Phi_1} + \underbrace{\E_{\mz}\Big[\big\|\mP_{k_2}(N_+ + \calE')\delta'_+ - N_-\mz_- \big\|^2_2\Big]}_{\triangleq \Phi_2} & \\
    & \quad \quad + 2 \sqrt{\E_{\mz}\Big[\big\|(\mP_{k_2}(N_+ + \calE') \mz_+ - N_+ \mz_+)\big\|^2_2\Big] \cdot \E_{\mz}\Big[\big\|\mP_{k_2}(N_+ + \calE')\delta'_+ - N_-\mz_- \big\|^2_2\Big]} & \\
    & \quad \quad \quad \quad \quad \quad \quad \quad \quad \quad \quad \quad \quad \quad \quad \quad  \mbox{(Cauchy Schwarz for random variables)} \\
    & = \Phi_1 + \Phi_2 + 2 \sqrt{\Phi_1 \Phi_2}. 
\end{align*}

We first bound $\Phi_2$ (the easier term). We have 
\begin{align*}
    \Phi_2 & = \E_{\mz}\Big[\big\|\mP_{k_2}(N_+ + \calE')\delta'_+ - N_-\mz_- \big\|^2_2\Big] \\
    & \leq 2 \E_{\mz}\Big[\big\|\mP_{k_2}(N_+ + \calE')\delta'_+\big\|^2_2\Big] + 2\E\Big[\big\|N_-\mz_- \big\|^2_2\Big]
\end{align*}

We first bound $\E_{\mz}\Big[\big\|\mP_{k_2}(N_+ + \calE')\delta'_+\big\|^2_2\Big]$. We consider two cases. 

\mypara{Case 1.} $\sigma_{\max}(N_+) > \frac{\theta}{2}\sigma_{\epsilon}\sqrt{\frac{d_2}{n}}$. In this case, 
we observe that $\|\calE\|_2 \leq 2.2 \sigma_{\epsilon}\sqrt{\frac{d_2}{n}} + o(1)$. This implies that $\|N_+ + \calE'\|_2 = O(\|N_+\|_2) = O(1)$. Therefore, $\E_{\mz}\Big[\big\|\mP_{k_2}(N_+ + \calE')\delta'_+\big\|^2_2\Big] \leq \|(N_+ + \calE')\delta'_+\|^2_2 = O(\|\delta'_+\|^2_2)$. 

\mypara{Case 2.}  $\sigma_{\max}(N_+) \leq \frac{\theta}{2}\sigma_{\epsilon}\sqrt{\frac{d_2}{n}}$. In this case, $\|N_+ + \calE'\|_2 \leq \theta \sigma\sqrt{\frac{d_2}{n}}$. This implies $\mP_{k_2}(N_++\calE') \delta'_+ = 0$ (i.e., the projection $\mP_{k_2}(\cdot)$ will not keep any subspace).  

This case also implies $\E_{\mz}\Big[\big\|\mP_{k_2}(N_+ + \calE')\delta'_+\big\|^2_2\Big] = 0 = O(\|\delta'_+\|^2_2)$

Next, we have $\E[\|N_- \mz_-\|^2_2] = \|N_-\|^2_F = O\left(\delta^{\frac{\omega - 1}{\omega + 1}}\right)$. 

Therefore, 
$$\Phi_2 = O\left(\frac{\epsilon^2}{\delta^6} + \delta^{\frac{\omega-1}{\omega+1}}\right).$$

Next, we move to bound 
$$\E_{\mz}\Big[\big\|(\mP_{k_2}(N_+ + \calE') \mz_+ - N_+ \mz_+)\big\|^2_2\Big].$$

We shall construct an orthonormal basis on $\reals^{d_2}$ and use the basis to ``measure the mass''. Let us describe this simple idea at high-level first. Let $\mmv_1, \mmv_2, \cdots, \mmv_{d_2}$ be a basis for $\reals^{d_2}$ and let $A \in \reals^{d_2 \times k_1}$ be an arbitrary matrix. We have $\|A\|^2_F = \sum_{i \leq d_2}\|\mmv^{\transpose}_iA\|^2_2$. The meaning of this equality is that we may apply a change of basis on the columns of $A$ and the ``total mass'' of $A$ should remain unchanged after the basis change. Our orthonormal basis consists of three groups of vectors.

\mypara{Group 1.} $\{U^{N_+}_{:, i}\}$ for $i \leq \ell$, where $\ell$ is the number of $\sigma_{i}(N^+)$ such that $\sigma_{i}(N^+) \geq \theta \sigma_{\epsilon}\sqrt{\frac{d_2}{n}}$. 

\mypara{Group 2.} The subspace in $\spans(\{U^{\hat N}_{:, i}\}_{i \leq k_2})$ that is orthogonal to $\{U^{N_+}_{:, i}\}_{i \leq \ell}$. Let us refer to these vectors as 
$\hat \mmu_1, \dots , \hat \mmu_s$ and $\hat U_{[s]} = [\hat \mmu_1, \dots \hat \mmu_s]$. 

\mypara{Group 3.} An arbitrary basis that is orthogonal to vectors in group 1 and group 2. Let us refer to them as $\mr_1, \dots, \mr_t$.

We have 
\begin{align*}
& \|\mP_{k_2}(N_++\calE') - N_+\|^2_F \\
& = \sum_{i \leq \ell}\left\|\left(U^{N_+}_{:, i}\right)^{\transpose}\left(\mP_{k_2}(N_+ + \calE') - N_+\right)\right\|^2_2 & \mbox{Term 1} \\
&  \quad  + \sum_{i \leq s} \| \hat \mmu_i^{\transpose}\left(\mP_{k_2}(N_+ + \calE') - N_+\right)\|^2_2  & \mbox{Term 2} \\
& \quad  + \sum_{i \leq r}\| \mr^{\transpose}_i \left(\mP_{k_2}(N_+ + \calE') - N_+\right)\|^2_2 & \mbox{Term 3}
\end{align*}

To understand the reason we perform such grouping, we can imagine making a decision for an (overly) simplified problem for each direction in the basis: consider a univariate estimation problem $y = \mu + \epsilon$ with $\mu$ being the signal, $\epsilon \sim N(0, \sigma^2)$ being the noise, and $y$ being the observation. Let us consider the case we observe only one sample. Now when $y \gg \sigma$, we can use $y$ as the estimator and $\E[(y - \mu)^2] = \sigma^2$. This high signal-to-noise setting corresponds to the vectors in Group 1. 

When $y \approx 3 \sigma$, we have $\mu^2 = \E[(y - \epsilon)^2] \approx (3-1)^2 \sigma^2 = 4\sigma^2$. On the other hand, $\E[(y - \mu)^2] = \sigma^2$. This means if we use $y$ as the estimator, the forecast is at least better than trivial. The median signal-to-noise setting corresponds to the vectors in group 2. 

When $y \ll \sigma$, we can simply use $\hat y = 0$ as the estimator. This low signal-to-noise setting corresponds to vectors in group 3. 

In other words, we expect: \emph{(i)} In term 1, signals along each direction of vectors in group 1 can be extracted. Each direction also pays a $\sigma^2$ term, which in our setting corresponds to $\theta \sigma_{\epsilon}\sqrt{\frac{d_2}{n}}$. Therefore, the MSE can be bounded by $O(\ell \theta^2\sigma^2_{\epsilon}d_2/n)$. \emph{(ii)} In terms 2 and 3, we do at least (almost) as well as the ``trivial forecast'' ($\hat \my = 0$). There is also an error produced by the estimator error from $\hat \mz_+$, and the tail error produced from cutting out features in $\proc{Step-1-PCA-X}$ in Fig.~\ref{fig:pcarrr}. 

Now we proceed to execute this idea. 

\myparab{Term 1. $\sum_{i \leq \ell}\left\|\left(U^{N_+}_{:, i}\right)^{\transpose}\left(\mP_{k_2}(N_+ + \calE') - N_+\right)\right\|^2_2$.}
 Let $\hat U \in \reals^{d_2 \times d_2}$ be the left singular vector of $N_+ + \calE'$. We let $\hat U$ have $d_2$ columns to include those vectors whose corresponding singular values are $0$ for the ease of calculation. We have 
\begin{align*}
& \sum_{i \leq \ell}\left\|\left(U^{N_+}_{:, i}\right)^{\transpose}\left(\mP_{k_2}(N_+ + \calE') - N_+\right)\right\|^2_2 \\
& = \sum_{i \leq \ell}\left\|\left(U^{N_+}_{:, i}\right)^{\transpose}\left(\hat U_{:, 1:k_2} \hat U^{\transpose}_{:, 1:k_2}(N_+ + \calE') - N_+\right)\right\|^2_2 \\
& = \sum_{i \leq \ell}\left\|\left(U^{N_+}_{:, i}\right)^{\transpose}\left(\left(\hat U \hat U^{\transpose } - \hat U_{:, k_2+1:d_2} \hat U^{\transpose}_{:, k_2+1:d_2}\right)(N_+ + \calE') - N_+\right)\right\|^2_2 \\
& = \sum_{i \leq \ell}\left\|\left(U^{N_+}_{:, i}\right)^{\transpose}\left(\calE' - \hat U_{:, k_2+1:d_2} \hat U^{\transpose}_{:, k_2+1:d_2} (N_+ + \calE')
\right)\right\|^2_2  \\
& \leq 2 \left\{\sum_{i \leq t}\left\|\left(U^{N_+}_{:, i}\right)^{\transpose}\calE'\right\|^2_2 + \sum_{i \leq \ell}\left\| \left(U^{N_+}_{:, i}\right)^{\transpose}\hat U_{:, k_2+1:d_2} \hat U^{\transpose}_{:, k_2+1:d_2}(N_+ + \calE')\right\|^2_2\right\} \\
& \leq  O\left(\ell \|\calE'_L\|^2_2 + \|\calE'_T\|^2_F + \sum_{i \leq \ell} \left\|\left(U^{N_+}_{:, i}\right)^{\transpose}\right\|^2_2 \left\|\hat U_{:, k_2+1:d_2}\right\|^2_2 \underbrace{\left\| \hat U^{\transpose}_{:, k_2+1:d_2}(N_+ + \calE')\right\|^2_2}_{\leq \frac{\theta^2 \sigma^2_{\epsilon}d_2}{n} \mbox{\scriptsize by the definition of $k_2$.}}\right) \\
& = O\left(\ell\|\calE'_L\|^2_2 + \|\calE'_T\|^2_F + \frac{\ell d_2 \theta^2 \sigma^2_{\epsilon}}{n}\right) \\
& = O\left(\frac{\ell d_2 \theta^2 \sigma^2_{\epsilon}}{n} + + \|\calE'_T\|^2_F\right)
\end{align*}

\myparab{Term 2. $\sum_{i \leq s} \| \hat \mmu_i^{\transpose}\left(\mP_{k_2}(N_+ + \calE') - N_+\right)\|^2_2$.}
 We have 
\begin{align*}
& \sum_{i \leq s} \| \hat \mmu_i^{\transpose}\left(\mP_{k_2}(N_+ + \calE') - N_+\right)\|^2_2 = \sum_{i \leq s} \| \hat \mmu_i^{\transpose}\left(\mP_{k_2}(N_+ + \calE') - (N_+ + \calE') + \calE'\right)\|^2_2 \\
& = \sum_{i \leq s}\| \hat \mmu^{\transpose}_i \calE'\|^2_2 
\end{align*}

On the other hand, note that 
\begin{align}
& \sum_{i \leq s}\|\hat \mmu^{\transpose}_i N_+\|^2_2 \nonumber \\
& = \sum_{i \leq s}\left\|\hat \mmu^{\transpose}_i (N_+ + \calE') - \hat \mmu^{\transpose}_i \calE'\right\|^2_2  \nonumber \\
& = \sum_{i \leq s}\left(\left\|\hat \mmu^{\transpose}_i (N_+ + \calE')\right\|^2_2 + \|\hat \mmu^{\transpose}_i \calE'\|^2_2 - 2 \left\langle \hat \mmu^{\transpose}_i (N_+ +\calE'), \hat \mmu^{\transpose}_i(\calE'_L + \calE'_T) \right\rangle\right) \nonumber \\
& = \sum_{i \leq s} \left(\left\|\hat \mmu^{\transpose}_i (N_+ + \calE')\right\|^2_2 - 2 \langle \hat \mmu^{\transpose}_i(N_++\calE'), \hat \mmu^{\transpose}_i \calE'_L\rangle\right) + \underbrace{\sum_{i \leq s}\|\hat \mmu^{\transpose}_i \calE'\|^2_2}_{\mbox{\footnotesize $=$ Term 2.}} - 2 \sum_{i \leq s}\langle \hat \mmu^{\transpose}_i(N_+ + \calE'), \hat \mmu^{\transpose}_i \calE'_{T}\rangle \label{eqn:geometricineq}
\end{align}

Note that 
\begin{align}
& \sum_{i \leq s} \left(\left\|\hat \mmu^{\transpose}_i (N_+ + \calE')\right\|^2_2 - 2 \langle \hat \mmu^{\transpose}_i(N_++\calE'), \hat \mmu^{\transpose}_i \calE'_L\rangle\right) \nonumber \\
& \geq \sum_{i \leq s}\|\hat \mmu^{\transpose}_i(N_+ + \calE')\|_2\big(\underbrace{\|\hat \mmu^{\transpose}_i(N_+ + \calE')\|_2}_{\geq \theta \sigma_{\epsilon}\sqrt{\frac{d_2}n}} - 2\underbrace{ \|\hat \mmu^{\transpose}_i \calE'_L\|_2}_{\leq 2.2 \sigma_{\epsilon}\sqrt{\frac{d_2}{n}}}\big) \nonumber \\
& \geq 0 \mbox{  (using the fact that $\theta$ is sufficiently large).} \label{eqn:csgzero}
\end{align}

Next, we examine the term $- 2 \sum_{i \leq s}\langle \hat \mmu^{\transpose}_i(N_+ + \calE'), \hat \mmu^{\transpose}_i \calE'_{T}\rangle$. 

\begin{align}
& - 2 \sum_{i \leq s}\langle \hat \mmu^{\transpose}_i(N_+ + \calE'), \hat \mmu^{\transpose}_i \calE'_{T}\rangle \nonumber \\
& = -2 \langle \hat U^{\transpose}_{[s]}(N_+ + \calE'), \hat U^{\transpose}_{[s]}\calE'_T\rangle  \nonumber\\
& = - 2 \trace\left(\hat U^{\transpose}_{[s]}\calE'_T (N_+ + \calE')^{\transpose}\hat U_{[s]}\right)  \nonumber\\
& \geq -2 \left\|\trace(\calE'_T(N_++\calE')^{\transpose}\right\|_2\left\|\hat U^{\transpose}_{[s]}\hat U_{[s]}\right\|_2  \nonumber\\
& = -2 \left|\langle (\calE'_T)^{\transpose}, N_+ + \calE'\rangle\right|  \nonumber \\
& \geq - 2\|\calE'_T\|_F \|N_+ + \calE'\|_F & \mbox{(Cauchy Schwarz)} \nonumber\\ 
& \geq -2 \|\calE'_T\|_F(\|N_+\|_F + \|\calE'\|_F))   \nonumber\\
& \geq -O(\|\calE'_T\|_F)& \mbox{\footnotesize ($\|N_+\|^2_F \leq \|N\|^2_F = \E[\|N\mz\|^2] = \E[\|M\mx\|^2] = O(1))$}\label{eqn:traceeq}
\end{align}

(\ref{eqn:geometricineq}), (\ref{eqn:csgzero}), and (\ref{eqn:traceeq})
 imply that 
$$\sum_{i \leq s} \left\|\mP_{k_2}(N_+ + \calE') - N_+\right\|^2_2 \leq \sum_{i \leq s}\left\|\hat U^{\transpose}N_+\right\|^2_2 + O(\|\calE'_T\|_F).$$

\myparab{Term 3.} We have
$$\sum_{i \leq r}\| \mr^{\transpose}_i \left(\mP_{k_2}(N_+ + \calE') - N_+\right)\|^2_2 = \sum_{i \leq r}\| \mr^{\transpose}_i N_+\|^2_2.$$
This is because $\mr_i$'s are orthogonal to the first $k_2$ left singular vectors of $N_++\calE'$.   

We sum together all the terms:
\begin{align}
    \left\|\mP_{k_2}(N_+ + \calE') - N_+\right\|^2_F & \leq O\left(\frac{\ell d_2\theta^2 \sigma^2_{\epsilon}}{n}\right) + \underbrace{\sum_{i \leq s}\left\|\hat U^{\transpose}_i N_+\right\|_2^2}_{(*)} + \underbrace{\sum_{i \leq t}\|\mr^{\transpose}_i N_+\|^2_2}_{(**)} + O\left(\|\calE'_T\|_F\right). \nonumber \\
    & = O\left(\frac{\ell d_2\theta^2 \sigma^2_{\epsilon}}{n}\right) + \|N_+\|^2_F - \sum_{i \leq \ell}\left(\sigma^{N_+}_i\right)^2 +  O\left(\|\calE'_T\|_F\right) \label{eqn:boundsum}
\end{align}
In the above analysis, we used $\spans(\{\hat \mmu_i\}_{i \leq s}, \{\mr_i\}_{i \leq t})$ is orthogonal to $\spans(\{U^{N_+}_{:,i}\}_{i \leq \ell})$. Therefore, we can collect (*) and (**) and obtain
$$\sum_{i \leq s}\left\|\hat U^{\transpose}_i N_+\right\|_2^2 + \sum_{i \leq t}\|\mr^{\transpose}_i N_+\|^2_2 = \|N_+\|^2_F - \sum_{i \leq \ell}(\sigma^{N_+}_i)^2.$$

Now the MSE is in terms of $\sigma^{N_+}_i$. We aim to bound the MSE in $\sigma^N_i$. So we next relate $\sum_{i \leq \ell}(\sigma^{N_+}_i)^2$ with 
$\sum_{i \leq \ell}(\sigma^{N}_i)^2$. Recall that $\tilde N_+ = [N_+, \mzero]$, where $\mzero \in \reals^{d_2 \times (d_1 - k_1)}$. The singular values of $\tilde N_+$ are the same as those of $N_+$. By using a standard matrix perturbation result, we have 

\begin{equation}\label{eqn:compare}
\sum_{i \leq \ell}\left(\sigma^{N_+}_i - \sigma^N_i\right)^2 = \sum_{i \leq \ell}\left(\sigma^{\tilde N_+}_i - \sigma^N_i\right)^2 \leq \|N_-\|^2_F = c\delta^{\frac{\omega - 1}{\omega+1}}
\end{equation}
 for some constant $c$. We may think (\ref{eqn:compare}) as a constraint and maximize the difference $\sum_{i \leq \ell}(\sigma^N_i)^2 - \sum_{i \leq \ell}(\sigma^{N_+}_i)^2$. This is maximized when $\sigma^N_1 = \sigma^{N_+}_1 + \sqrt{c\delta^{\frac{\omega - 1}{\omega+1}}}$ and $\sigma^N_i = \sigma^{N_+}_i$ for $i > 1$.

Therefore, 
\begin{align}
\sum_{i \leq \ell}\left(\sigma^N_i\right)^2 & \leq \sum_{1 \leq i \leq \ell}\left(\sigma^{N_+}_i\right)^2 + \left(\sigma^{N_+}_1 + \sqrt{c\delta^{\frac{\omega - 1}{\omega + 1}}}\right)^2  \nonumber \\\
& =  \sum_{1 \leq i \leq \ell}\left(\sigma^{N_+}_i\right)^2 + O\left(\sqrt{\delta^{\frac{\omega - 1}{\omega+1}}}\right). \label{eqn:eigencompare}
\end{align} 
 
Now (\ref{eqn:boundsum}) becomes
\begin{equation}\label{eqn:transformedsum}
    (\ref{eqn:boundsum}) \leq \|N_+\|^2_F - \sum_{i \leq \ell}(\sigma^N_i)^2 + O\left(\frac{\ell d_2 \theta^2\sigma^2_{\epsilon}}{n}\right) + O(\xi/\delta^3) + O\left(\sqrt{\delta^{\frac{\omega - 1}{\omega + 1}}}\right). 
\end{equation}
 
Next, we assert that $\ell \geq \ell^*$. Recall that $\|\sigma_i^{N_+} - \sigma_i^N\|^2_2 = \|N_-\|^2_F = o(1)$. This implies $\sigma_i^{N_+} > \theta \sigma_{\epsilon}\sqrt{\frac{d_2}{n}}$ for $i \leq \ell^*$, i.e., $\ell \geq \ell^*$. So we have 
\begin{equation}
\Phi_1 = \|\mP_{k_2}(N_++\calE') - N_+\|^2_F \leq \|N\|^2_F - \sum_{i \leq \ell^*}(\sigma^N_i)^2 + O\left(\frac{\ell^* d_2 \theta^2\sigma^2_{\epsilon}}{n}\right) + O(\xi/\delta^3) + O\left(\sqrt{\delta^{\frac{\omega - 1}{\omega + 1}}}\right). 
\end{equation}
 
Finally, we obtain the bound for $\Phi_1 + \Phi_2 + 2 \sqrt{\Phi_1 \Phi_2}$. 
Note that 
$$\frac{\Phi_2}{\Phi_1} \leq \frac{O\left(\frac{\xi^2}{\delta^6} + \delta^{\frac{\omega - 1}{\omega + 1}}\right)}{O\left(\frac{\xi}{\delta^3} + \sqrt{\delta^{\frac{\omega - 1}{\omega + 1}}}\right)} \leq \min \left\{\frac{\xi}{\delta^3}, \sqrt{\delta^{\frac{\omega - 1}{\omega + 1}}}\right\} (=o(1)).$$

We have 
\begin{align}
    & \Phi_1 + \Phi_2 +  2 \sqrt{\Phi_1 \Phi_2} \nonumber\\
    & = \Phi_1 (1+2\sqrt{\frac{\Phi_2}{\Phi_1}}) + \Phi_2 \nonumber \\
    & = \left[\|N\|^2_F - \sum_{i \leq \ell^*}(\sigma^N_i)^2 + O\left(\frac{\ell^* d_2 \theta^2\sigma^2_{\epsilon}}{n}\right) + O(\xi/\delta^3) + O\left(\sqrt{\delta^{\frac{\omega - 1}{\omega + 1}}}\right)\right]  \nonumber\\
    & \quad \quad \times \left\{1 + \min\left\{\sqrt{\frac{\xi}{\delta^3}} + \left(\delta^{\frac{\omega - 1}{\omega + 1}}\right)^{\frac 1 4}\right\}\right\} + O\left(\frac{\xi^2}{\delta^6}\right) + O\left(\delta^{\frac{\omega - 1}{\omega + 1}}\right) \nonumber\\
    & \leq {\|N\|^2_F - \sum_{i \leq \ell^*}(\sigma^N_i)^2}_{\mbox{}} + {O\left(\frac{\ell^* d_2 \theta^2 \sigma^2_{\epsilon}}{n}\right)}_{\mbox{}} + {O\left(\sqrt{\frac{\xi}{\delta^3}}\right)}_{\mbox{}} + {O\left(\delta^{\frac{\omega - 1}{4(\omega+1)}}\right)}_{\mbox{}}. \label{eqn:terms}
\end{align}

This completes the proof of Theorem~\ref{thm:main_upper}.

\fi
\ifnum\full=0

\fi

\section{Lower bound}\label{sec:lowerbound}

\ifnum\full=0

\fi

\begin{figure}[th!]
\vspace{-2mm}
\centering
\captionsetup{width=1.0\linewidth}
\includegraphics[width=\linewidth]{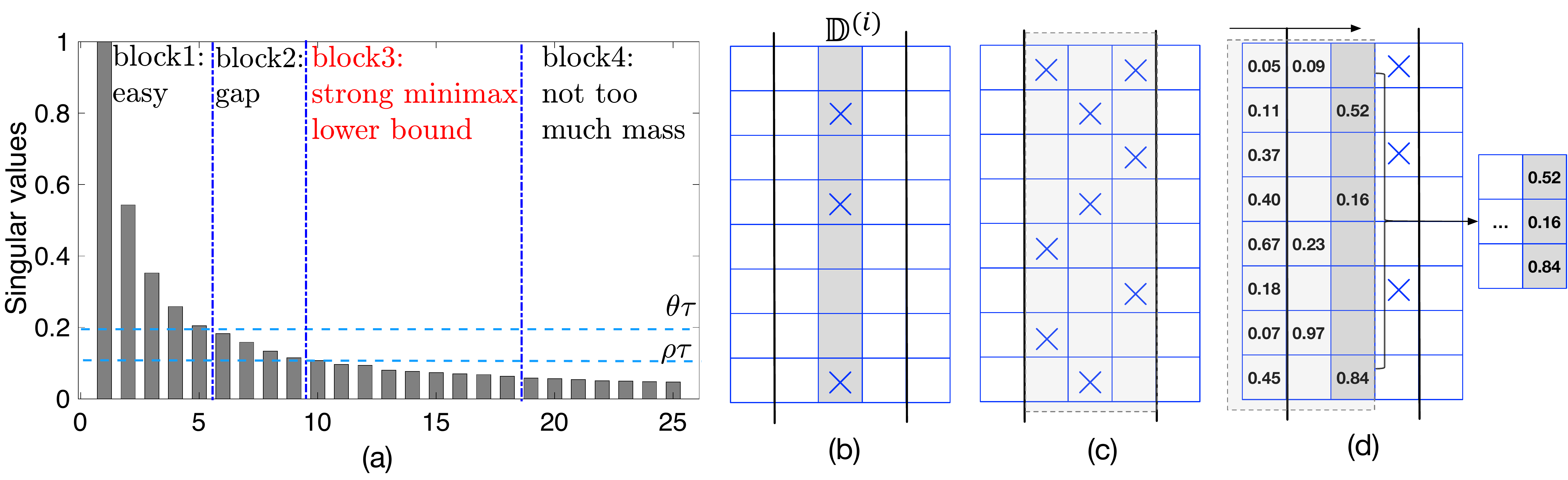} 
\vspace{-2mm}
\caption{{\footnotesize (a) \textbf{Major result:} signals in $N$ are partitioned into four blocks. All signals in block 1 can be estimated (Thm~\ref{thm:main_upper}). All signals in block 3 cannot be estimated (Prop~\ref{prop:lowerbound}). Our lower bound techniques does not handle a small tail in Block 4. A gap in block 2 exists between upper and lower bounds.  (b)-(d) \textbf{Constructing $\bbN$:}} {Step 1 and 2 belong to the first stage; step 3 belongs to the second stage. (b) Step 1. Generate a random subset $\bbD^{(i)}$ for each row $i$, representing its non-zero positions. (c) Step 2. Randomly sample from $\bbD$, where $\bbD$ is the Cartesian product of $\bbD^{(i)}$. (d) Step 3. Fill in non-zero entries sequentially from left to right.}}
\label{fig:pack_U}
\end{figure}
Our algorithm accurately estimates the singular vectors of $N$ that correspond
to singular values above the threshold $\tau = \theta
\sigma_{\epsilon}{\sqrt{\frac{d_2}{n}}}$. However, it may well happen that most of the spectral `mass' of $N$ lies only slightly below this threshold $\tau$. In this section, we establish that \emph{no algorithm} can do better than us, in a bi-criteria sense, i.e. we show that any algorithm that has a slightly smaller sample than ours can only minimally outperform ours in terms of MSE.

We establish `instance dependent' lower bounds: When there is more `spectral
mass' below the threshold, the performance of our algorithm will be worse, and
we will need to establish that no algorithm can do much better. This departs
from the standard minimax framework, in which one examines the entire parameter
space of $N$, e.g. all rank $r$ matrices, and produces a large set of
statistically indistinguishable `bad'
instances~\cite{tsybakov2008introduction}. These lower bounds are not sensitive
to instance-specific quantities such as the spectrum of $N$, and in particular, if prior
knowledge suggests that the unknown parameter $N$ is far from these bad instances,
the minimax lower bound cannot be applied. 

We introduce the notion of \emph{local minimax}. We partition the space into
parts so that \emph{similar} matrices are together.
Similar matrices are those $N$ that have the same singular values and right
singular vectors; we establish strong lower bounds even against algorithms that
know the singular values and right singular vectors of $N$.  An equivalent view
is to assume that the algorithm has oracle access to $C^*$, $M$'s singular
values, and $M$'s right singular vectors. This algorithm can solve
the orthogonalized form as $N$'s singular values and right singular vectors can
easily be deduced. Thus, the only reason why the algorithm needs data is to
\emph{learn} the left singular vectors of $N$. The lower bound we establish is the minimax
bound for this `unfair' comparison, where the competing algorithm is given more
information. In fact, this can be reduced further, i.e., even if
the algorithm `knows' that the left singular vectors of $N$ are sparse,
identifying the locations of the non-zero entries is the key difficulty that
leads to the lower bound.


\begin{definition}[Local minimax bound] Consider a model $\my = M \mx +
	\epsilon$, where $\vx$ is a random vector, so $C^*(\vx) = \E[\mx
	\mx^{\transpose}]$ represents the co-variance matrix of the data
	distribution, and $M = U^M \Sigma^M (V^{M})^{\transpose}$. The relation $(M,
	\vx) \sim (M',\vx') \Leftrightarrow (\Sigma^M = \Sigma^{M'} \wedge V^M =
	V^{M'} \wedge C^*(\vx) = C^*(\vx'))$ is an equivalence relation and let the
	equivalence class of $(M, \vx)$ be 
\ifnum\full=0
$
\calR(M, \mx) = \{(M', \mx'): \Sigma^{M'} = \Sigma^M, V^{M'} = V^M,   \mbox{ and } C^*(\mx') = C^*(\mx)\}. 
$
\else
\begin{equation}
	\calR(M, \mx) = \{(M', \mx'): \Sigma^{M'} = \Sigma^M, V^{M'} = V^M, \mbox{ and } C^*(\mx') = C^*(\mx)\}. 
\end{equation}
\fi
The local minimax bound for $\my = M \mx + \epsilon$ with $n$ independent samples and $\epsilon \sim N(0, \sigma^2_{\epsilon}I_{d_2\times d_2})$ is 
\ifnum\full=0
\vspace{-2mm}
\begin{equation}
\mr(\mx, M, n, \sigma_{\epsilon}) = \min_{\hat M} \max_{(M', \mx') \in \calR(M, \mx)} E_{\tiny{\substack{\mbox{$\mX, \mY$ from } \\ \my \sim M' \mx' + \epsilon}}} \Big[\E_{\mx'}[\|\hat M(\mX, \mY) \mx' - M' \mx'\|^2_2\mid \mX, \mY]\Big].
\label{eqn:specialminimax}
\end{equation}
\else
\begin{equation}\label{eqn:specialminimax}
\mr(\mx, M, n, \sigma_{\epsilon}) = \min_{\hat M} \max_{(M', \mx') \in \calR(M, \mx)} \E_{\tiny{\substack{\mbox{$\mX, \mY$ from } \\ \my \sim M' \mx' + \epsilon}}}[\E_{\mx'}[\|\hat M(\mX, \mY) \mx' - M' \mx'\|^2_2\mid \mX, \mY]]. 
\end{equation}
\fi
\end{definition} 
\vspace{-2mm}
It is worth interpreting \eqref{eqn:specialminimax} in some detail. For any
two $(M, \vx)$, $(M', \vx')$ in $\calR(M, \vx)$, the algorithm has the same
`prior knowledge', so it can only distinguish between the two
instances by using the \emph{observed data}, in particular $\hat M$ is a 
function only of $\mX$ and $\mY$, and we denote it as $\hat M(\mX, \mY)$ to
emphasize this. Thus, we can evaluate the performance of $\hat{M}$ by looking at
the worst possible $(M', \vx')$ and considering the MSE $\E \Vert \hat{M} (\mX,
\mY) \vx' - M' \vx'\Vert^2$. 


\begin{proposition}\label{prop:lowerbound} Consider the problem $\my = M\mx +\epsilon$ with normalized form $\my = N\mz + \epsilon$. 
Let $\xi$ be a sufficient small constant. There exists a sufficiently small constant $\rho_0$ (that depends on $\xi$) and a constant $c$ such that for any $\rho \leq \rho_0$, 
$
\mr(\mx, M, n, \sigma_{\epsilon}) \geq (1-c\rho^{\frac 1 2 - \xi}) \sum_{i \geq \lt} (\sigma^{N}_i)^2 -
O\Bigl(\frac{\rho^{\frac 1 2 - \xi }}{d_2^{\omega - 1}}\Bigr), 
$
where $\lt$ is the smallest index such that $\sigma^{N}_{\lt} \leq \rho \sigma_{\epsilon}\sqrt{\frac{d_2}{n}}$.
\end{proposition}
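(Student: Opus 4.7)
The plan is to work in the orthogonalized form $\my = N\mz + \epsilon$ and exploit the definition of local minimax: since the competing estimator is allowed to condition on $C^*$, $\Sigma^N$, and $V^N$, the only remaining unknown is the matrix of left singular vectors $U^N$. I would therefore construct a large finite family $\bbN \subseteq \calR(M,\mx)$ whose elements all share the same singular values and right singular vectors but differ in $U^N$; the lower bound then reduces to showing that no procedure can identify $U^N$ well enough from $n$ noisy observations to drive its squared error in the below-threshold block below the claimed level.

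The packing $\bbN$ is built using the three-step randomized construction suggested by Fig~\ref{fig:pack_U}. First, for each below-threshold row index $i \geq \lt$, sample a random support set $\bbD^{(i)} \subseteq [d_2]$ of a carefully chosen size $s$. Second, draw a point from the Cartesian product $\bbD = \prod_i \bbD^{(i)}$, selecting an exact coordinate $j_i \in \bbD^{(i)}$ for each such row. Third, fill in the actual non-zero entries sequentially from left to right while maintaining column orthonormality; the left-to-right sweep is what makes the orthogonalization feasible without destroying the sparsity pattern. Every resulting matrix shares $\Sigma^N$ and $V^N$ with $N$, so it lies in $\calR(M,\mx)$, and the random supports give $\bbN$ an exponentially large cardinality.

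For the information-theoretic step, observe that the joint law of $(\mX,\mY)$ under $N$ and $N'$ in $\bbN$ differs only in the conditional mean of $\mY$ given $\mX$, so the per-sample KL reduces to $\tfrac{1}{2\sigma_{\epsilon}^2}\|(N-N')\mz\|_2^2$, and the total KL is
\begin{align*}
\Kld\bigl(\calL_N \,\big\|\, \calL_{N'}\bigr) \;=\; \frac{n}{2\sigma_{\epsilon}^2}\,\|N-N'\|_F^2 \;\leq\; \frac{n}{\sigma_{\epsilon}^2}\sum_{i \geq \lt}(\sigma^N_i)^2 \;=\; O\bigl(\rho^{2}\, d_2\, (d_2-\lt+1)\bigr),
\end{align*}
using the below-threshold condition $\sigma^N_i \leq \rho\sigma_{\epsilon}\sqrt{d_2/n}$. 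Calibrating $s$ so that $\log|\bbN| = \Theta\bigl(\rho^{1/2-\xi}\, d_2\,(d_2-\lt+1)\bigr)$ while the typical pairwise Frobenius distance in $\bbN$ remains at least $(1-O(\rho^{1/2-\xi}))\sqrt{\sum_{i\geq\lt}(\sigma^N_i)^2}$ lets Fano's inequality force a constant probability of misidentification. The standard packing-to-MSE reduction then produces the multiplicative factor $(1 - c\rho^{1/2-\xi})$ in front of $\sum_{i\geq\lt}(\sigma^N_i)^2$.

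The main obstacle I anticipate is threading three competing constraints simultaneously. The sparsity parameter $s$ must be large enough that $\bbN$ is combinatorially rich (so that Fano bites), small enough that typical pairs $N,N' \in \bbN$ have nearly orthogonal below-threshold columns (so the minimum packing distance is close to the ideal $\sum_{i \geq \lt}(\sigma^N_i)^2$), and yet compatible with the sequential fill-in step actually producing valid orthonormal columns. The additive slack $O(\rho^{1/2-\xi}/d_2^{\omega-1})$ in the statement bookkeeps the contribution of unlucky collisions between the supports $\bbD^{(i)}$ and $\bbD^{(j)}$; its $1/d_2^{\omega-1}$ factor is exactly what the power-law assumption $\lambda_i(C^*)\leq c\, i^{-\omega}$ buys us, since it bounds how much ``mass'' can concentrate on any single row of $N$ and hence how costly a collision can be.
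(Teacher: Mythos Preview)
Your high-level plan matches the paper: reduce to the normalized form, pack left singular matrices $U^N$, control KL via $\|N-N'\|_F^2$, and invoke the Tsybakov/Fano lemma. However, two pieces of your plan as written would not go through.

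\textbf{The packing structure is too thin.} You describe $\bbD^{(i)}$ as a single random subset of $[d_2]$ of size $s$, and then select one coordinate $j_i\in\bbD^{(i)}$ per column, so each column of $U^N$ carries a single nonzero entry. With one nonzero per column the packing has at most $\log|\bbN|\le (\text{number of columns})\cdot\log s\le d_2\log d_2$, far short of the $\Theta(\rho^{1/2-\xi}d_2\cdot(\cdot))$ you need. In the paper each $\bbD^{(i)}$ is a \emph{collection} $\{R^{(i,1)},\ldots,R^{(i,K)}\}$ of $K=\exp(\Theta(\rho^{2\lambda}d_2))$ random subsets, each of size $\rho^{\lambda}d_2$; an element of the Cartesian product picks one \emph{support set} (not one coordinate) per column, and a second random sampling from the product enforces a weighted Hamming-type separation. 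The large support size $\rho^{\lambda}d_2$ is what simultaneously gives room for orthogonalization and an exponentially large $K$.

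\textbf{The range of packed columns and the additive term.} You pack every column $i\ge\lt$, but the sequential orthogonalization you invoke can only succeed while the number of previously filled columns is smaller than the support size; this forces an upper cutoff $\ut=\lfloor\rho^{\lambda}d_2/2\rfloor$. Columns $i>\ut$ are filled arbitrarily and are \emph{not} packed. The additive $O(\rho^{1/2-\xi}/d_2^{\omega-1})$ is exactly the price of this truncated tail: it comes from bounding $\sum_{i>\ut}(\sigma^N_i)^2$ via the power-law assumption (so that $\ut=\Theta(\rho^{\lambda}d_2)$ gives a tail of order $(\rho^{\lambda}d_2)^{-(\omega-1)}$), not from ``unlucky collisions between supports.'' Collisions are handled separately and contribute to the multiplicative $(1-c\rho^{1/2-\xi})$ factor. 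Without the cutoff your fill-in step fails once $i\gtrsim\rho^{\lambda}d_2$, and without the tail bound you have no explanation for where the $d_2^{-(\omega-1)}$ in the statement comes from.
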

\vspace{-2mm}

Proposition~\ref{prop:lowerbound} gives the lower bound on the MSE in
expectation; it can be turned into a high probability result with suitable
modifications. The proof of the lower bound uses a similar 
`trick' to the one used in the analysis of the upper bound analysis to cut the tail. This results in an additional term
$O\Bigl(\frac{\rho^{\frac 1 2 - \xi }}{d_2^{\omega - 1}}\Bigr)$
which is generally smaller than the
$n^{-c_0}$ tail term in Theorem~\ref{thm:main_upper} and 
does not dominate the gap. 

\myparab{Gap requirement and bi-criteria approximation algorithms.} 
Let $\tau =  \sigma_{\epsilon}\sqrt{\frac{d_2}{n}}$.
Theorem~\ref{thm:main_upper} asserts that any signal above the threshold
$\theta\tau$ can be detected, i.e., the MSE is at most $\sum_{\sigma^N_i >
\theta \tau}\sigma^2_i(N)$ (plus inevitable noise), whereas
Proposition~\ref{prop:lowerbound} asserts that any signal below the threshold
$\rho \tau$ cannot be detected, i.e., the MSE is approximately at least
$\sum_{\sigma^N_i \geq \rho \tau}(1-\mathrm{poly}(\rho))\sigma^2_i(N)$. There
is a `gap' between $\theta \tau$ and $\rho \tau$, as $\theta > 1$ and $\rho <
1$. See Fig.~\ref{fig:pack_U}(a). This kind of gap is inevitable because both bounds are `high probability'
statements. This gap phenomenon appears naturally when the sample size is small
as can be illustrated by this simple example. Consider the problem of
estimating $\mu$ when we see one sample from $N(\mu, \sigma^2)$.  Roughly
speaking, when $\mu \gg \sigma$, the estimation is feasible, and whereas $\mu \ll
\sigma$, the estimation is impossible. For the region $\mu \approx \sigma$,
algorithms fail with constant probability and we cannot prove a high
probability lower bound either. 

While many of the signals can `hide' in the gap, the inability to detect signals in the gap is a transient phenomenon. When the number of samples $n$ is
modestly increased, our detection threshold $\tau = \theta
\sigma_{\epsilon}\sqrt{\frac{d_2}{n}}$ shrinks, and this hidden signal can be
fully recovered. This observation naturally leads to a notion of bi-criteria
optimization that frequently arises in approximation algorithms. 


\begin{definition} An algorithm for solving the $\my = M \mx + \epsilon$ problem is $(\alpha, \beta)$-optimal if, when given an i.i.d. sample of size $\alpha n$ as input, it outputs an estimator whose MSE is at most 
$\beta$ worse than the local minimax bound, i.e., 
$\E[\|\hat \my - \my \|^2_2] \leq \mr(\mx, M, n, \sigma_{\epsilon}) + \beta.$

\end{definition}

\begin{corollary}\label{cor:approx} Let $\xi$ and $c_0$ be  small constants and $\rho$ be a tunable parameter. Our algorithm is $(\alpha, \beta)$-optimal for \ifnum\full=0
$\alpha  = \frac{\theta^2}{\rho^{\frac 5 2}}$ and $\beta  = O(\rho^{\frac 1 2 - \xi})\|M\mx\|^2_2 + O(n^{-c_0})$
\else
\begin{align*}
\alpha & = \frac{\theta^2}{\rho^{\frac 5 2}} \\
\beta & = O(\rho^{\frac 1 2 - \xi})\|M\mx\|^2_2 + O(n^{-c_0}) 
\end{align*}
\fi
\end{corollary}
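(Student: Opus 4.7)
}
The plan is to directly compare the upper bound from Theorem~\ref{thm:main_upper}, run with $\alpha n$ samples, against the lower bound from Proposition~\ref{prop:lowerbound}, which is stated for the problem with $n$ samples. Write $\tau = \sigma_{\epsilon}\sqrt{d_2/n}$. With $\alpha n$ samples the detection threshold used by \proc{Step-2-PCA-Denoise} becomes $\tau_\alpha = \theta \sigma_{\epsilon}\sqrt{d_2/(\alpha n)} = \theta\tau/\sqrt{\alpha}$, and substituting $\alpha = \theta^2/\rho^{5/2}$ gives $\tau_\alpha = \rho^{5/4}\tau$. Let $\ell^\star_\alpha$ be the largest index with $\sigma^N_{\ell^\star_\alpha} > \rho^{5/4}\tau$, and let $\underline{t}$ be the smallest index with $\sigma^N_{\underline{t}} \leq \rho\tau$ as in Proposition~\ref{prop:lowerbound}. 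Since $\rho^{5/4} < \rho$, we have $\ell^\star_\alpha + 1 \geq \underline{t}$, so the upper-bound signal tail $T_\alpha \triangleq \sum_{i > \ell^\star_\alpha}(\sigma^N_i)^2$ satisfies $T_\alpha \leq T_1 \triangleq \sum_{i \geq \underline{t}}(\sigma^N_i)^2$.

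Next, decompose $\ell^\star_\alpha = (\underline{t}-1) + K'$, where $K' = \ell^\star_\alpha - \underline{t} + 1$ counts the ``gap'' indices $i$ with $\rho^{5/4}\tau < \sigma^N_i \leq \rho\tau$. Subtracting the lower bound from the upper bound yields
\begin{align*}
\E[\|\hat\my - \my\|^2_2] - \mr(\mx,M,n,\sigma_\epsilon)
&\leq -(T_1 - T_\alpha) + \ell^\star_\alpha \tau_\alpha^2 + c\rho^{1/2-\xi}T_1 \\
&\quad + O(n^{-c_0}) + O\bigl(\rho^{1/2-\xi}/d_2^{\omega-1}\bigr).
\end{align*}
The ``gain'' $G = T_1 - T_\alpha = \sum_{\underline{t} \leq i \leq \ell^\star_\alpha}(\sigma^N_i)^2$ satisfies $G \geq K'\rho^{5/2}\tau^2 = K'\tau_\alpha^2$, so the noise cost of the gap indices is absorbed by the signal recovered from them: $K'\tau_\alpha^2 - G \leq 0$. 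What remains of the noise term is $(\underline{t}-1)\tau_\alpha^2$, which is the cost charged for the ``easy'' directions above the lower-bound threshold. Since each such direction satisfies $(\sigma^N_i)^2 > \rho^2\tau^2$, one has $(\underline{t}-1)\rho^2\tau^2 \leq \|N\|_F^2 = O(\E\|M\mx\|_2^2)$, and hence $(\underline{t}-1)\tau_\alpha^2 \leq \rho^{5/2}\tau^2 \cdot \|N\|_F^2/(\rho^2\tau^2) = O(\rho^{1/2}\|M\mx\|_2^2)$. Combined with $c\rho^{1/2-\xi}T_1 = O(\rho^{1/2-\xi}\|M\mx\|_2^2)$ and with the residual tail $O(\rho^{1/2-\xi}/d_2^{\omega-1})$ being dominated (or harmlessly added), we obtain $\beta = O(\rho^{1/2-\xi})\|M\mx\|_2^2 + O(n^{-c_0})$, as claimed. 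The $O(n^{-c_0})$ piece is produced by choosing $\xi$ and $\delta$ in Theorem~\ref{thm:main_upper} as noted there.

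The only genuinely delicate step is the bookkeeping in the previous paragraph: one must verify that the noise ``paid'' by the algorithm for each detected direction is either reimbursed by a matching signal term (for gap indices, which requires that $\theta$ is a sufficiently large constant so the upper-bound threshold indeed sits strictly below the lower-bound threshold) or converted via the trivial bound $\underline{t}-1 \lesssim \|N\|_F^2/(\rho\tau)^2$ into a $\rho^{1/2}$-factor loss. The power $\rho^{5/2}$ in $\alpha$ is chosen exactly so that both contributions are $O(\rho^{1/2-\xi}\|M\mx\|_2^2)$; a smaller exponent would leave the easy-direction noise term as the dominant term and invalidate the bound. Nothing else in the argument requires new probabilistic tools — it is a direct consequence of Theorem~\ref{thm:main_upper} and Proposition~\ref{prop:lowerbound}.
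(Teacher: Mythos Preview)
The paper states Corollary~\ref{cor:approx} without proof, treating it as a direct consequence of Theorem~\ref{thm:main_upper} and Proposition~\ref{prop:lowerbound}; your argument correctly supplies the omitted bookkeeping and is exactly the computation the authors have in mind. One minor remark: your parenthetical that ``$\theta$ must be sufficiently large so the upper-bound threshold sits strictly below the lower-bound threshold'' is unnecessary here, since with $\alpha = \theta^2/\rho^{5/2}$ the $\theta$ cancels and $\tau_\alpha = \rho^{5/4}\tau < \rho\tau$ holds for any $\rho<1$ regardless of $\theta$.
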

\vspace{-4mm}

The error term $\beta$ consists of $\rho^{\frac 1 2-\epsilon}\|M\mx\|^2_2$ that
is directly characterized by the signal strength and an additive term
$O(n^{-c_0}) = o(1)$. Assuming that $\|M\mx\| = \Omega(1)$, i.e., the signal 
is not too weak, the term $\beta$ becomes a single multiplicative
bound
$O(\rho^{\frac 1 2 - \xi} + n^{-c_0})\|M\mx\|^2_2$.
This gives an easily interpretable result. For example, when our data size is
$n\log n$, the performance gap between our algorithm and \emph{any} algorithm
that uses $n$ samples is at most $o(\|M\mx\|^2_2)$. The improvement is
significant when other baselines deliver MSE in the additive form that could be
larger than $\|M\mx\|^2_2$ in the regime $n \leq d_1$.

\myparab{Preview of techniques.} Let $N = U^N \Sigma^N (V^N)^{\transpose}$ be the instance (in orthogonalized form). Our goal is to construct a collection $\calN = \{N_1, \dots, N_K\}$ of $K$ matrices so that \emph{(i)} For any $N_i \in \calN$, $\Sigma^{N_i} = \Sigma^N$ and $V^{N_i} = V^N$. \emph{(ii)} For any two $N_i, N_j \in \calN$, $\|N - N'\|_F$ is large, and \emph{(iii)} $K = \exp(\Omega(\mathrm{poly}(\rho)d^2))$ (cf.\ifnum\full=0~\citep[Chap. 2]{tsybakov2008introduction})\else~\cite[Chap. 2]{tsybakov2008introduction}).\fi

Condition (i) ensures that it suffices to construct unitary matrices
$U^{N_i}$'s for $\calN$, and that the resulting instances will be in the same
equivalence class.  Conditions (ii) and (iii) resemble standard construction of
codes in information theory: we need a large `code rate', corresponding to
requiring a large $K$ as well as large distances between codewords,
corresponding to requiring that $\|U_i - U_j\|_F$ be large. Standard approaches
for constructing such collections run into difficulties. Getting a sufficiently
tight concentration bound on the distance between two random unitary matrices
is difficult as the matrix entries, by necessity, are correlated.
On the other hand, starting with a large collection of random unit vectors and
using its Cartesian product to build matrices does not  necessarily yield
unitary matrices.

We design a two-stage approach to decouple condition (iii) from (i) and (ii) by only generating sparse matrices $U^{N_i}$. See Fig.~\ref{fig:pack_U}(b)-(d). In the first stage (Steps 1 \& 2 in Fig.~\ref{fig:pack_U}(b)-(c)), we only specify the
non-zero positions (sparsity pattern) in each $U^{N_i}$. It suffices to
guarantee that the sparsity patterns of the matrices $U^{N_i}$ and $U^{N_j}$
have little overlap. The existence of such objects can easily be proved using
the probabilistic method.  Thus, in the first stage, we can build up a large
number of sparsity patterns. In the second stage (Step 3 in Fig.~\ref{fig:pack_U}(d)), we carefully fill in values
in the non-zero positions for each $U^{N_i}$. When the number of non-zero
entries is not too small, satisfying the unitary constraint is feasible. As the
overlap of sparsity patterns of any two matrices is small, we can argue the
distance between them is large. By carefully trading off the number of non-zero
positions and the portion of overlap, we can simultaneously satisfy all three
conditions. 

\ifnum\full=1


\subsection{Roadmap}
This section describes the roadmap for executing the above idea. 

\myparab{Normalized form.} Recall that $d_2 \leq d_1$, we have
\begin{equation}
M \mx = \underbrace{U^M}_{d_2 \times d_2} \underbrace{\Sigma^{M}}_{d_2 \times d_2} \underbrace{(V^{M})^{\transpose}}_{d_2 \times d_1} \underbrace{V^*}_{d_1 \times d_1}\underbrace{ (\Lambda^*)^{\frac 1 2}}_{d_1 \times d_1}\underbrace{\mz}_{d_1 \times 1}
\end{equation}

We may perform an SVD on $\Sigma^M (V^M)^{\transpose}V^* (\Lambda^*)^{\frac 1 2} = \underbrace{A}_{d_2 \times d_2} \underbrace{L^{\normf}}_{d_2 \times d_2} \underbrace{B^{\transpose}}_{d_2 \times d_1}$. We may also set $\mz^{\normf} = B^{\transpose}\mz$, which is a standard multi-variate Gaussian in $\reals^{d_2}$. Then we have 
\begin{equation}
    M \mx = U^M A L^{\normf}B^{\transpose} \mz = (U^M A) L^{\normf} \mz^{\normf}.
\end{equation}
Let $N^{\normf} = (U^M A) L^{\normf}$. The SVD of $N^{\normf}$ is exactly $(U^M)AL^{\normf} I_{d_2 \times d_2}$ because $U^M A$ is unitary. 
The \emph{normalized form} of our problem is 
\begin{equation}
\my = N^{\normf} \mz^{\normf} + \epsilon.
\end{equation}

Recall our local minimax has an oracle access interpretation. An algorithm with the oracle can reduce a problem into the normalized form on its own, and the algorithm knows $L^{\normf}$. But the oracle still does not have any information on $N^{\normf}$'s left singular vectors because being able to manipulate $U^M$ is the same as being able to manipulate $U^M A$. Therefore, we can analyze the lower bound in normalized form, with the assumption that the SVD of $N^{\normf} = U^{\normf} L^{\normf}I_{d_2\times d_2}$, in which $L^{\normf}$ is known to the algorithm. We shall also let $\sigma^{\normf}_i = L^{\normf}_{i,i} = \sigma^{N^{\normf}}_i$. Because $N^{\normf}$ is square, we let $d = d_2$ in the analysis below. 

We make two remarks in comparison to the orthogonalized form $\my = N \mz + \epsilon$. (i)  $\mz^{\normf} \in \reals^{d_2}$, whereas $\mz \in \reals^{d_1}$. $\mz^{\normf}$'s dimension is smaller because the knowledge of $\Sigma^M$ and $V^M$ enable us to remove the directions from $\mx$ that are orthogonal to $M$'s row space. \emph{(ii)} $\sigma^N_i = \sigma^{N^{\normf}}_i$ for $i \leq d_2$. 


We rely on the following theorem (Chapter 2 in~\cite{tsybakov2008introduction}) to construct the lower bound. 

\begin{theorem}\label{thm:block} Let $\bbN^{\normf} = \{N^{\normf}_1, N^{\normf}_2, \dots, N^{\normf}_K\}$, where $K \geq 2$. Let $P_i$ be distribution of the training data produced from the model $\my = N^{\normf}_i \mz^{\normf} + \epsilon$. 
Assume that 
\begin{itemize}
\item $\|N^{\normf}_i - N^{\normf}_j\|^2_F  \geq 2s > 0$ for any $0 \leq j \leq k \leq K$. 
\item For any $j = 1, \dots K$ and 
\begin{equation}
\frac 1 K \sum_{j = 1}^K \Kld(P_j, P_1) \leq \alpha \log K
\end{equation}
with $0 \leq \alpha \leq \frac 1 8$.
\end{itemize}
Then 
\begin{equation}
\inf_{\hat N^{\normf}} \sup_{N^{\normf} \in \bbN^{\normf}}
 \Pr_{N^{\normf}}(\|\hat N^{\normf}, N^{\normf}\| \geq s) \geq \frac{\sqrt{K}}{1+\sqrt K}(1-2\alpha - \sqrt{\frac{2\alpha}{\log K}}).  
\end{equation}
\end{theorem}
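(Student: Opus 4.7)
The plan is to follow the standard Tsybakov--Birg\'e template for minimax lower bounds: reduce estimation to multiple hypothesis testing over the packing $\bbN^{\normf}$, apply a Fano-type inequality to the testing problem, and control mutual information by the KL hypothesis. The only non-generic ingredient is the factor $\sqrt{K}/(1+\sqrt{K})$ in the conclusion, which comes from a particular max-vs-average-error argument rather than from the packing itself.

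\textbf{Step 1 (reduction to testing).} Given an arbitrary estimator $\hat N^{\normf}$, define the minimum-distance test $\psi^{*}(\hat N^{\normf}) = \arg\min_{j \in [K]} \|\hat N^{\normf} - N^{\normf}_{j}\|_{F}$. The pairwise separation $\|N^{\normf}_{i} - N^{\normf}_{j}\|_{F}^{2} \geq 2s$ combined with the triangle inequality implies that whenever $\|\hat N^{\normf} - N^{\normf}_{j^{*}}\|_{F}^{2} < s$ and $N^{\normf}_{j^{*}}$ is the true parameter, $\psi^{*}(\hat N^{\normf}) = j^{*}$. Hence
\[
\sup_{N \in \bbN^{\normf}} \Pr_{N}\bigl(\|\hat N^{\normf} - N^{\normf}\|_{F}^{2} \geq s\bigr) \;\geq\; \sup_{j} \Pr_{j}\bigl(\psi^{*}(\hat N^{\normf}) \neq j\bigr),
\]
so any lower bound on the testing error gives one on the estimation error.

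\textbf{Step 2 (max-vs-average error and Fano).} Letting $J$ be uniform on $\{1,\dots,K\}$ and $X$ the observed data, a sharper form of Fano's inequality (as in Chapter 2 of \cite{tsybakov2008introduction}) yields the average-error bound
\[
\frac{1}{K}\sum_{j=1}^{K}\Pr_{j}(\psi \neq j) \;\geq\; 1 - \frac{I(J;X) + \log 2}{\log K}.
\]
To pass from the average to the maximum error, I will use Tsybakov's device of comparing the worst-case error to a convex combination with a dummy hypothesis: this introduces precisely the prefactor $\sqrt{K}/(1+\sqrt{K})$ and tightens the additive $\log 2/\log K$ term into the claimed $\sqrt{2\alpha/\log K}$ via a Pinsker-type refinement applied to the law of $\psi(X)$.

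\textbf{Step 3 (KL bound on mutual information).} By the variational characterization $I(J;X) = \frac{1}{K}\sum_{j} \Kld(P_{j}\,\|\,\bar P)$ where $\bar P = \frac{1}{K}\sum_{j} P_{j}$, and because the mixture $\bar P$ minimises the average KL to a fixed distribution, we have $I(J;X) \leq \frac{1}{K}\sum_{j} \Kld(P_{j}\,\|\,P_{1}) \leq \alpha \log K$ by hypothesis. Substituting into the bound from Step 2 and using $\alpha \leq 1/8$ to absorb lower-order constants yields the stated $(1 - 2\alpha - \sqrt{2\alpha/\log K})$ factor. Combining with Step 1 then completes the proof.

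\textbf{Main obstacle.} No deep probabilistic idea is needed beyond the classical template; the real work is in bookkeeping the constants so that both the $2\alpha$ (rather than $\alpha$) and the square-root correction $\sqrt{2\alpha/\log K}$ appear cleanly. In particular, getting the sharpened Fano inequality to output exactly this form requires invoking Pinsker's inequality on the distribution induced by $\psi$, rather than the cruder $1-(I+\log 2)/\log K$ bound; and the max-vs-average reduction must be done with care since a naive bound gives only a factor of $1/2$ instead of $\sqrt{K}/(1+\sqrt{K})$.
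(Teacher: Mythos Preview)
The paper does not prove this theorem at all; it quotes it verbatim from Chapter~2 of Tsybakov's book (it is his Theorem~2.5) and uses it as a black box to derive Proposition~\ref{prop:lowerbound}. So there is no ``paper's own proof'' to compare against.

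Your sketch has the right high-level architecture (reduce to testing via a minimum-distance decoder, then control the testing error via KL divergences), but the mechanics you describe in Step~2 are not how Tsybakov actually obtains the specific constants. The factor $\sqrt{K}/(1+\sqrt{K})$ and the correction $\sqrt{2\alpha/\log K}$ do \emph{not} arise from Fano's inequality sharpened by Pinsker. Tsybakov's argument instead works directly with likelihood ratios: one lower-bounds $\max_j \Pr_j(\psi\neq j)$ by introducing a free parameter $\tau\in(0,1)$, bounding the probability that $\min_{j\neq 0}\log(dP_j/dP_0)\geq \log\tau$ via Markov's inequality on the KL divergences, and then choosing $\tau$ (roughly $\tau=1/\sqrt{K}$) to balance the two resulting terms. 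That optimization is where both the $\sqrt{K}/(1+\sqrt{K})$ prefactor and the $\sqrt{2\alpha/\log K}$ term come from simultaneously. Your ``max-vs-average with a dummy hypothesis plus Pinsker refinement'' narrative does not actually produce these constants, and as written Step~2 is closer to hand-waving toward the answer than to a proof. If you want a self-contained argument, follow Tsybakov's Proposition~2.3 and the proof of his Theorem~2.5 rather than the Fano route.
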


Because $L^{\normf}$ is fixed, this problem boils down to finding a collection $\bbU^{\normf}$ of unitary matrices in $\reals^{d\times d}$ such that any two elements in $\bbU^{\normf}$ are sufficiently far. Then we can construct $\bbN^{\normf} = \{U^{\normf}L^{\normf}: U^{\normf} \in \bbU\}$. 
 
We next reiterate (with elaboration) the challenge we face when using existing techniques. Then we describe our approach. 
We shall first examine a simple case, in which we need only design vectors for one column. Then we explain the difficulties of using an existing technique to generalize the design. Finally, we explain our solution. 

Recall that $(\mZ^{\normf})^{\transpose} \mY = (N^{\normf})^{\transpose} + \calE$, where we can roughly view $\calE$ as a matrix that consists of independent Gaussian $(0, \sigma_{\epsilon}/\sqrt{n})$. 
For the sake of discussion, we assume $\sigma_{\epsilon} = 1$ in our discussion below.

\myparab{Warmup: one colume case.} The problem of packing one column roughly corresponds to establishing a lower bound on the estimation problem 
$\my = \mmu + \epsilon$, where $\my, \mmu, \epsilon \in \reals^d$.  $\epsilon$ corresponds to a column in $\calE$ and consists of $d$ independent Gaussian $N(0, 1/\sqrt{n})$. 
$\mmu$ correpsonds to a column in $U^{\normf}$ and we require $\|\mmu\|_2 \approx \rho\sqrt{d/n}$. To apply Theorem~\ref{thm:block}, we shall
construct a $\bbD = \{\mmu_1, \dots, \mmu_K\}$ such that $\|\mmu_i - \mmu_j\|$ is large for each $\{i, j\}$ pair and $K$ is also large. Specifically, we require $\|\mmu_i - \mmu_j\|^2_2 \approx 2 \rho^2 \frac d n$ (large distance requirement) and 
$K = \exp(\Theta(\sqrt{\rho} d))$ (large set requirement). $\sqrt \rho$ is carefully optimized and we will defer the reasoning to the full analysis below. 
A standard tool to construct $\bbD$ is to use a probabilistic method. We sample $\mmu_i$ independently from the same distribution and argue that with high probability $\|\mmu_i - \mmu_j\|^2_2$ is sufficiently large. 
Then a union bound can be used to derive $K$. For example, we may set $\mmu_i \sim N(0, \rho\sqrt{\frac{d}{n}}I_{d\times d})$ for all $i$, and the concentration quality suffices for us to sample $K$ vectors.  


\myparab{Multiple column case.} We now move to the problem of packing multiple columns in $U$ together. $K$ is required to be much larger, e.g., $K = \exp(\Theta(\sqrt{\rho} d^2))$ 
for certain problem instances. A natural generalization of one-column case is to build our parameter set by taking the Cartesian product of multiple copies of $\bbD$. This gives us a large $K$ for free but the key issue is 
that vectors in $\bbD$ are generated independently. So there is no way to guarantee they are independent to each other. In fact, it is straightforward to show that many elements in the Cartesian product are far from unitary. 
One may also directly sample random unitary matrices and argue that they are far from each other. But there seems to exist no tool that enables us to build up a concentration of $\exp(-\Theta(\sqrt{\rho} d^2))$ 
between two random unitary matrices. 

Therefore, a fundamental problem is that we need independence to build up a large set but the unitary constraint limits the independence. So we either cannot satisfy
the unitary requirement (Cartesian product approach) or cannot properly use the independence to build concentrations (random unitary matrix approach). 

\myparab{Our approach.} We develop a technique that decouples the three requirements (unitary matrices, large distance, and large $K$). Let us re-examine the Cartesian product approach. When the vectors for each column are completely determined, then it is remarkably difficult to build a Cartesian product that guarantees 
orthogonality between columns. To address this issue, our approach only ``partially'' specify vectors in each column. Then we take a Cartesian product of these partial specifications. 
So the size of the Cartesian product is sufficiently large; meanwhile an element in the product does not fully specify $U^{\normf}$ so we still have room to make them unitary. 
Thus, our final step is to transform each partial specification in the Cartesian product into a unitary matrix. Specifically, it consists of three steps (See Fig.~\ref{fig:pack_U})

\myparab{Step 1. Partial specification of each column.} For each column $i$ of interest, we build up a collection $\bbD^{(i)} = \{R^{(i, 1)}, \dots, R^{(i, K)}\}$.  Each $R^{(i,j)} \subset [d]$ 
specifies only the positions of non-zero entire for a vector prepared for filling in $U_{:, i}$. 

\myparab{Step 2. Cartesian product.} Then we randomly sample elements from the Cartesian product $\bbD \triangleq \bigotimes_i \bbD^{(i)}$. Each element in the product specifies the non-zero entries of $U^{\normf}$. 
We need to do another random sampling instead of using the full Cartesian product because we need to guarantee that any two matrices share a small number of non-zero entries. 
For example, $(R^{(1, 1)}, R^{(2, 1)}, R^{(3, 1)})$ and $(R^{(1, 1)}, R^{(2, 1)}, R^{(3, 2)})$ are two elements in $\bigotimes_i \bbD^{(i)}$ but they specify two matrices 
with the same locations of non-zero entries for the first two columns. 

\myparab{Step 3. Building up unitary matrices.} Finally, for each element $\vec R \in \bbD$ (that specify positions of non-zero entries), we carefully fill in the values of the non-zero entries so that all our matrices are unitary and far from each other. 
We shall show that it is \emph{always} possible to construct unitary matrices that ``comply with'' $\vec R$. In addition, our unitary matrices have few entries with large magnitude so when two matrices share few positions of non-zero entries, they are far.

\subsection{Analysis}
We now execute the plan outlined in the roadmap. Let $K$ and $\lambda$ be tunable parameters. For the purpose of getting intuition, 
$K$ is related to the size of $\bbU^{\normf}$ so it can be thought as being exponential in $d$,  whereas $\lambda$ is a constant and we use $\rho^{\lambda}$ 
to control the density of non-zero entries in each $U^{\normf} \in \bbU^{\normf}$. 

Let $\bbD^{(i)} = \{R^{(i, 1)}, R^{(i, 2)}, \cdots, R^{(i, K)}\}$ be a collection of random subsets in $[d]$, in which each $R^{(i, j)}$ 
is of size $\rho^{\lambda}d$. We sample the subsets without replacement. Recall  that $\lt$ is the smallest index such that $\sigma^{\normf}_{\lt} \leq\rho \sigma_{\epsilon}\sqrt{\frac d n}$ 
and let us also define $\ut = \lfloor \frac{\rho^{\lambda} d}{2}\rfloor$. We let $\gamma = \ut - \lt + 1$. We assume that $\ut \geq \lt$; otherwise Proposition~\ref{prop:lowerbound} becomes trivial. Let 
$\bbD$ be the Cartesian product of $\bbD^{(i)}$ for integers $i \in [\lt, \ut]$. We use $\vec R$ to denote an element in $\bbD$. $\vec R = (\vec R_{\lt}, \vec R_{\lt + 1}, \cdots, \vec R_{\ut})$ is a 
$\gamma$-tuple so that each element $\vec R_i$ corresponds to an element in $\bbD^{(i)}$. There are two ways to represent $\vec R$. Both are found useful in our analysis. 

\begin{enumerate}
\item \emph{Set representation.} We treat $\vec R_i$ as a set in $\bbD^{(i)}$. 
\item \emph{Index representation.} We treat $\vec R_i$ as an index from $[K]$ that specifies the index of the set that $\vec R_i$ refers to. 
\end{enumerate}

Note that the subscript $i$ of $\vec R_i$ \emph{starts at} $\lt$ (instead of $1$ or $0$) for convenience.

\begin{example}\emph{ The index of $\bbD_i$ starts at $\lt$. Assume that $\ut = \lt + 1$. Let $\bbD^{(\lt)} = \big(\{2,3\}, \{1, 4\}, \{1, 2\}\big)$ and $\bbD^{(\lt + 1)} = \big(\{1, 3\}, \{2, 4\}, \{3, 4\}\big)$. The element $\big(\{1, 2\}, \{2, 4\}\big) \in \bbD^{(\lt)} \otimes \bbD^{(\lt+1)}$. There are two ways to represent this element. \emph{(i) Set representation.} $\vec R = \big(\{1, 2\}, \{2, 4\}\big)$, in which $\vec R_{\lt} = \{1, 2\}$ and $\vec R_{\lt + 1} = \{2, 4\}$.  \emph{(ii) Index representation.} $\vec R = (3, 2)$. $\vec R_{\lt} = 3$ refers to that the third element $\{1, 2\}$ in $\bbD^{(\lt)}$ is selected. 
}
\end{example}

We now describe our proof in detail. We remark that throughout our analysis, constants are re-used in different proofs.

\subsubsection{Step 1. Partial specification for each column}
This step needs only characterize the behavior of an individual $\bbD^{(i)}$.

\begin{lemma}\label{lem:randomsubset} Let $\rho < 1$ be a sufficiently small variable, $\lambda$ be a tunable parameter and let $\bbD^{(i)} = \{R^{(i, 1)}, R^{(i, 2)}, \cdots , R^{(i, K)}\}$ be a collection of random subsets in $[d]$ (sampled without replacement)
such that $|R^{(i, j)}| = \rho^{\lambda}d$ for all $j$. There exist constants $c_0$, $c_1$, and $c_2$ such that when $K = \exp(c_0 \rho^{2\lambda}d)$, with probability $1 - \exp(-c_1\rho^{2\lambda}d)$, for any two distinct
$R^{(i)}$ and $R^{(j)}$, $|R^{(i, j)} \cap R^{(i, k)}| \leq c_2 \rho^{2\lambda}d$. 
\end{lemma}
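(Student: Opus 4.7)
The plan is to view this as a straightforward concentration-plus-union-bound argument. Fix any two distinct indices $j \neq k$. Since $R^{(i,j)}$ and $R^{(i,k)}$ are each uniform over the $\binom{d}{m}$ subsets of $[d]$ of size $m = \rho^{\lambda} d$, the intersection size $|R^{(i,j)} \cap R^{(i,k)}|$ follows a hypergeometric distribution with mean $m^2/d = \rho^{2\lambda} d$. The goal is to show that this hypergeometric random variable exceeds $c_2 \rho^{2\lambda} d$ with probability at most $\exp(-c' \rho^{2\lambda} d)$ for a suitable constant $c'$, and then to union-bound over all $\binom{K}{2}$ pairs.

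First I would fix $R^{(i,j)}$ and, conditional on it, expose the elements of $R^{(i,k)}$ one by one. Although sampling without replacement introduces mild dependencies, the intersection size is stochastically dominated by (or can be handled by the standard Chernoff bound for hypergeometrics; see e.g.\ Hoeffding's inequality for sampling without replacement) a Binomial$(m, m/d)$ random variable. A multiplicative Chernoff bound then gives
\begin{equation}
\Pr\bigl[|R^{(i,j)} \cap R^{(i,k)}| \geq c_2 \rho^{2\lambda} d\bigr] \leq \exp\bigl(-c' \rho^{2\lambda} d\bigr),
\end{equation}
whenever $c_2$ is chosen to be a sufficiently large constant (e.g., $c_2 = 4$ works, and $c'$ can be taken proportional to $c_2 \log c_2$).

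Next I would apply a union bound over the $\binom{K}{2} \leq K^2 = \exp(2 c_0 \rho^{2\lambda} d)$ pairs. Choosing $c_0 < c'/4$ ensures $K^2 \exp(-c' \rho^{2\lambda} d) \leq \exp(-c_1 \rho^{2\lambda} d)$ for some $c_1 > 0$, so simultaneously over all pairs the intersection bound holds with probability at least $1 - \exp(-c_1 \rho^{2\lambda} d)$.

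The only subtle point, and the main thing to verify carefully, is justifying the Chernoff-style tail for the hypergeometric distribution (since the $m$ elements of $R^{(i,k)}$ are drawn without replacement). This is standard: one can invoke Hoeffding's inequality for sampling without replacement, or equivalently note that the hypergeometric has a moment generating function dominated by that of the corresponding binomial (Hoeffding, 1963). Once this is in hand, the rest is bookkeeping of constants, and the assumption that $\rho$ is sufficiently small plus $\rho^{2\lambda} d$ large only serves to guarantee the exponent is meaningful.
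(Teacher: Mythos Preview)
Your proposal is correct and follows essentially the same approach as the paper: fix one subset, bound the probability that the hypergeometric intersection exceeds $c_2\rho^{2\lambda}d$ via a Chernoff-type tail, then union-bound over the $\binom{K}{2}$ pairs with $K=\exp(c_0\rho^{2\lambda}d)$. The paper exposes the elements of $R^{(i,k)}$ sequentially and applies a Chernoff bound to the resulting indicators, whereas you invoke Hoeffding's inequality for sampling without replacement; these are equivalent here, and your treatment of the without-replacement dependence is arguably slightly more careful.
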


\begin{proof}[Proof of Lemma~\ref{lem:randomsubset}]
This can be proved by a standard probabilistic argument. 
Let $R^{(i, j)}$ be an arbitrary subset such that $|R^{(i, j)}| = \rho^{\lambda}d$. Let $R^{(i, k)}$ be a random subset of size $\rho^{\lambda}d$. We compute 
the probability that $|R^{(i, j)} \cap R^{(i, k)}| \geq c_2 \rho^{2\lambda}d$ for a fixed $R^{(i, j)}$. 

Let us sequentially sample elements from $R^{(i, k)}$ (without replacement). Let $I_t$ be an indicator random variable that sets to 1 if and only if 
the $t$-th random element in $R^{(i, k)}$ hits an element in $R^{(i, j)}$. We have
\begin{equation}
\Pr[I_t = 1] \leq \frac{\rho^{\lambda} d}{(1-\rho^{\lambda})d} \leq \frac{\rho^{\lambda}}{2}. 
\end{equation}
By using a Chernoff bound, we have 
\begin{equation}
\Pr\left[|\sum_{i = 1}^{\rho^{\lambda}d}I_t| \geq \frac{c_2\rho^{2\lambda}d}{2}\right] \leq \exp(-\Omega(\rho^{2\lambda}d)). 
\end{equation}
By using a union bound, we have 
\begin{equation}
\Pr[\exists i, j: |R^{(i, j)}\cap R^{(i, k)}| \geq \frac{c_2 \rho^{2\lambda}d}{2}] \leq \binom{K}{2} \exp(-\Omega(\rho^{2\lambda}d)) \leq \exp(-\Omega(\rho^{2\lambda}d) + 2\log K).
\end{equation}
Therefore, when we set $K = \exp(c_0\rho^{2\lambda}d)$, the failure probability is $1 - \exp(-\Theta(\rho^{2\lambda}d))$. 
\end{proof}

\subsubsection{Step 2. Random samples from the Cartesian product}
We let $\bbD = \bigotimes_{i \in [\lt, \ut]}\bbD^{(i)}$. Note that each $\bbD^{(i)}$ is sampled independently. We define $\bbS$ be a random subset of  $\bbD$. We next explain the reason we need to sample random subsets. Recall that for each $\vec R \in \bbS$, we aim to construct a unitary matrix $U^{\normf}$ (to be discussed in Step 3) such that the positions of non-zero entries in $U^{\normf}_{:, i}$ are specified by $\vec R_i$ (i.e., $U^{\normf}_{j, i} \neq 0$ only when $j \in \vec R_i$). 

Let $\vec R$ and $\vec R'$ be two distinct elements in $\bbD$. Let $U^{\normf}$ and $\tilde U^{\normf}$ be two unitary matrices generated by $\vec R$ and $\vec R'$. We ultimately aim to have that $\|U^{\normf}L^{\normf} - \tilde U^{\normf} L^{\normf}\|^2_F$ being large. We shall make sure \emph{(i)} $U^{\normf}$ and $\tilde U^{\normf}$ share few non-zero positions (Step 2; this step), and \emph{(ii)} few entries in $U^{\normf}$ and $\tilde U^{\normf}$ have excessive magnitude (Step 3). These two conditions will imply that $U^{\normf}$ and $\tilde U^{\normf}$ are far, which then implies a lower bound on $U^{\normf}L^{\normf}$ and $\tilde U^{\normf} L^{\normf}$. 

Because we do not want $U^{\normf}$ and $\tilde U^{\normf}$ share non-zero positions, we want to maximize the Hamming distance (in index representation) between $\vec R$ and $\vec R'$ (i.e., $\vec R_i = \vec R'_i$ implies $U^{\normf}_{:, i}$ and $\tilde U^{\normf}_{:, i}$ share all non-zero positions, which is a bad event). We sample $\bbS$ randomly from $\bbD$ because random sample is a known procedure that generates ``code'' with large Hamming distance~\cite{mackay2003information}. 

Before proceeding, we note one catch in the above explanation, i.e., different 
columns are of different importance. Specifically, $\|U^{\normf}L^{\normf} - \tilde U^{\normf}L^{\normf}\|^2_F = \sum_{i \in [d]} (\sigma^{\normf}_i)^2 \|U^{\normf}_{:, i} - \tilde U^{\normf}_{:, i}\|^2$. When $\vec R_i$ and $\vec R'_i$ collide for a large $(\sigma^{\normf}_i)^2$, it makes more impact to the Frobenius norm. Thus, we define a weighted cost function that resembles the structure of Hamming distance. 

\begin{equation}
\mc(\vec R, \vec R') = \sum_{i \in [\lt, \ut]}(\sigma^{\normf}_i)^2 I(\vec R_i = \vec R'_i). 
\end{equation}

Note that the direction we need for $\mc(\vec R, \vec R')$ is opposite to Hamming distance. One usually maximizes Hamming distance whereas we need to minimize the weighted cost. 

We need to develop a specialized technique to produce concentration behavior for $\bbS$ because $\mc(\vec R, \vec R')$ is weighted. 

\begin{lemma}\label{lem:hamming} Let $\rho$ and $\lambda$ be the parameters for producting $\bbD^{(i)}$. Let $\zeta < 1$ be a tunable parameter. Let $\bbS$ be a random subset of $\bbD$ of $\bbD \triangleq \bigotimes_{i \in [\lt, \ut]}\bbD^{(i)}$  such that
\begin{equation}
|\bbS| = \exp\left(c_3 \frac{n \rho^{2\lambda+\zeta}}{\rho^2\sigma^2_{\epsilon}}\left(\sum_{i \in [\lt, \ut]}(\sigma^{\normf}_i)^2\right)\right)
\end{equation}
for some constant $c_3$. With high probability at least $1 - \exp(-c_4 \rho^{2\lambda}d)$ ($c_4$ a constant), for any $\vec R$ and $\vec R'$ in $\bbS$, 
$\mc(\vec R, \vec R') \leq \rho^{\zeta}(\sum_{i \in [\lt, \ut]} (\sigma^{\normf}_i)^2)$.
\end{lemma}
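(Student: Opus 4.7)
The high-level strategy is a Bennett-style concentration for weighted sums of independent Bernoullis, followed by a union bound over pairs in $\bbS$. The key observation enabling this is that since $\bbD = \bigotimes_i \bbD^{(i)}$ has Cartesian-product structure, if two elements $\vec R, \vec R' \in \bbD$ are drawn uniformly and independently, the collision indicators $Y_i \triangleq I(\vec R_i = \vec R'_i)$ for $i \in [\lt, \ut]$ are mutually independent Bernoullis with parameter $p = 1/K = \exp(-c_0 \rho^{2\lambda}d)$ (using $|\bbD^{(i)}| = K$ from Lemma~\ref{lem:randomsubset}).

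First, I will realize $\bbS$ as $|\bbS|$ independent uniform draws from $\bbD$; sampling with versus without replacement differs only by negligible constants. Fix an arbitrary pair $\vec R \neq \vec R' \in \bbS$ and write $\mc(\vec R, \vec R') = \sum_{i \in [\lt,\ut]} w_i Y_i$ with $w_i = (\sigma^\normf_i)^2$. By the definition of $\lt$, every $w_i$ is at most $M \triangleq \rho^2 \sigma^2_\epsilon d/n$, so $\E[\mc] = S_0/K$ and $\mathrm{Var}[\mc] \leq M S_0/K$, where $S_0 \triangleq \sum_{i \in [\lt,\ut]} (\sigma^\normf_i)^2$. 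Setting $t = \rho^\zeta S_0$, note that $t \gg \E[\mc]$ since $K$ is exponential in $d$, placing us squarely in the large-deviations regime.

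Applying Bennett's inequality, and observing that $Mt/\mathrm{Var}[\mc] \geq K\rho^\zeta \gg 1$ puts us in the $x\log x$ regime of Bennett's function, yielding
\begin{equation*}
\Pr[\mc(\vec R, \vec R') > \rho^\zeta S_0] \;\leq\; \exp\!\left(-\,\Omega\!\left(\tfrac{t}{M}\log\tfrac{Kt}{S_0}\right)\right) \;=\; \exp\!\left(-\,C \cdot \tfrac{n \rho^{2\lambda + \zeta} S_0}{\rho^2 \sigma^2_\epsilon}\right),
\end{equation*}
where the factor $\log K = c_0 \rho^{2\lambda} d$ is absorbed into the constant $C \propto c_0$. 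A union bound over the at most $|\bbS|^2$ pairs then bounds the total failure probability by $\exp\!\left((2c_3 - C)\, n\rho^{2\lambda+\zeta}S_0/(\rho^2\sigma^2_\epsilon)\right)$. Picking $c_3$ smaller than $C/3$ leaves a residual exponent $-c_5\, n\rho^{2\lambda+\zeta}S_0/(\rho^2 \sigma^2_\epsilon)$, which dominates $c_4 \rho^{2\lambda} d$ in every regime where the conclusion is non-vacuous (i.e.\ $|\bbS| \geq 2$, which itself forces $nS_0/(\rho^2 \sigma^2_\epsilon d) \gtrsim 1/\rho^\zeta$).

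The chief technical obstacle is that the summands $w_i Y_i$ are strongly sub-exponential: the Bernoulli parameter $1/K$ is exponentially small, so a Gaussian/Hoeffding-style tail is much too weak. The sharp rate must come from the $x\log x$ term in Bennett's function (equivalently, a Poisson-type tail), and harvesting the factor $\log K = c_0 \rho^{2\lambda} d$ out of that logarithm is precisely what produces the $\rho^{2\lambda}$ in the exponent of the claimed cardinality of $\bbS$. A secondary bookkeeping point is verifying that the union-bound slack still beats $c_4 \rho^{2\lambda} d$ in degenerate regimes (e.g.\ when $\gamma = \ut - \lt + 1$ is small); this is handled by taking $c_3$ sufficiently small relative to $C$, and in the truly vacuous regime one may simply shrink $|\bbS|$ without weakening the stated bound.
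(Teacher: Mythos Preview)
Your approach is correct and genuinely different from the paper's. The paper does not invoke Bennett's inequality; instead it partitions the weights $(\sigma^{\normf}_i)^2$ into dyadic shells $\calI_t$ indexed by $t$ (so all weights in shell $t$ lie in $[2^{-2t-2},2^{-2t}]\cdot \rho^2\sigma_{\epsilon}^2 d/n$), introduces the overlap counts $\ell_t = \sum_{i \in \calI_t} I(\vec R_i = \vec R'_i)$, and observes that $\mc(\vec R,\vec R') \geq \rho^{\zeta}\Psi$ forces $\sum_t \ell_t \geq \max\{\,n\rho^{\zeta}\Psi/(d\rho^2\sigma_{\epsilon}^2),\,1\,\}$. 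It then bounds $\Pr[\vec R'\text{ bad}]$ by an explicit combinatorial sum $\sum_{\{\ell_t\}} \prod_t \binom{L_t}{\ell_t}(1/K)^{\ell_t}$, collapses this to $\leq \exp(-c\,\rho^{2\lambda}d\cdot \max\{\cdot,1\})$, and union-bounds.

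Your Bennett argument short-circuits the dyadic bookkeeping: once you note $w_i \leq M = \rho^2\sigma_{\epsilon}^2 d/n$ and $\mathrm{Var}[\mc] \leq M S_0/K$, the $x\log x$ regime of Bennett's function delivers the same exponent $-\Theta\!\bigl(\tfrac{t}{M}\log K\bigr)$ in one stroke. This is cleaner and shorter. What the paper's explicit counting buys is the ``$\max\{\cdot,1\}$'' in the exponent for free, via the constraint that at least one collision must occur; this is exactly what makes the failure probability bounded by $\exp(-c_4\rho^{2\lambda}d)$ uniformly, including in the regime $n\rho^{\zeta}S_0/(\rho^2\sigma_{\epsilon}^2 d) \lesssim 1$. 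Your handling of that edge is a little loose: rather than arguing about shrinking $|\bbS|$, the cleanest patch is to take the minimum of your Bennett bound and the trivial estimate $\Pr[\mc > 0] \leq \Pr[\exists i: Y_i = 1] \leq \gamma/K \leq d\exp(-c_0\rho^{2\lambda}d)$, which immediately supplies the missing $\exp(-\Omega(\rho^{2\lambda}d))$ in that regime.
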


\begin{proof}
Let $\Psi = \sum_{i \in [\lt, \ut]}(\sigma^{\normf}_i)^2$. 
Let $\vec R$ and $\vec R'$ be two different random elements in $\bbD$. We shall 
first compute that 
$\Pr\left[\mc(\vec R, \vec R') \geq \rho^{\zeta}\Psi\right]$. Here, we assume that $\vec R$ is an arbitrary fixed element and $\vec R'$ is random. 

Recall that $\sigma^{\normf}_i \in [0, \rho \sigma_{\epsilon}\sqrt{\frac n d}]$ for $i \in [\lt, \ut]$. 
We shall partition $[0, \rho \sigma_{\epsilon}\sqrt{\frac n d}]$ into subintervals and group $\sigma^{\normf}_i$ by these intervals. Let $\calI_t$ be the set of $\sigma^{\normf}_i$ that are in $[2^{-t-1}\rho \sigma_{\epsilon}\sqrt{\frac n d}, 2^{-t}\rho\sigma_{\epsilon} \sqrt{\frac n d}]$ ($t\geq 0$).
Let $T$ be the largest integer such that $\calI_T$ is not empty. 
Let $L_t = |\calI_t|$  and $\ell_t = \sum_{i \in \calI_t}I(\vec R_i = \vec R'_i)$. We call $\{\ell_t\}_{t \leq T}$ the \emph{overlapping coefficients} between $\vec R$ and $\vec R'$. 

Note that $\mc(\vec R, \vec R') \leq \sum_{t \leq T} \ell_t 2^{-2t}\rho^2 \sigma^2_{\epsilon}d/n$. Therefore, a necessary condition for $\mc(\vec R, \vec R') \geq \rho^{\zeta}\Psi$ is 
$\sum_{t \leq T}\frac{\ell_t 2^{-2t}\rho^2\sigma^2_{\epsilon}d}{n} \geq \rho^{\eta}\Psi$. Together with the requirement that $\sum_{t \leq T}\ell_t \geq 1$, we need
\begin{equation}
\sum_{t\leq T}\ell_t \geq \max\left\{\frac{n \rho^{\zeta}\Psi}{d \rho^2 \sigma^2_{\epsilon}}, 1\right\}
\end{equation}

Recall that we assume that $\vec R$ is fixed and $\vec R'$ is random. When $\mc(\vec R, \vec R') \geq \rho^{\zeta}\Psi$, we say $\vec R'$ is bad. 
We next partition all bad $\vec R'$ into sets indexed by $\{\ell_t\}_{t \leq T}$. Let $\bbC(\{\ell_t\}_{t\leq T})$ be all the bad  $\vec R'$ such that the overlapping coefficients between $\vec R$ and $\vec R'$ are
$\{\ell_t\}_{t \leq T}$. We have  

{\small
\begin{align*}
\Pr[\mc(\vec R, \vec R') \geq \rho^{\zeta}\Psi] & = \Pr[\vec R' \mbox{ is bad }] = \Pr\left[\vec R' \in \bigcup_{\{\ell_t\}_{t\leq T}}\bbC(\{\ell_t\}_{t \leq T})\right] = \sum_{k \geq 1}\sum_{{\scriptsize\substack{\mbox{all } \bbC(\{\ell_i\})\\ \mbox{s.t. $\sum_{t}\ell_t = k$}}}}\Pr[\vec R' \in \bbC(\{\ell_t\}_{t \leq T})]
\end{align*}
}

Next also note that 
\begin{align*}
\Pr[\vec R' \in \bbC(\{\ell_t\}_{t \leq T})] \leq \prod_{t \leq T}\binom{L_t}{\ell_t}\left(\frac 1 K\right)^{\sum_{t \leq T}\ell_t}, 
\end{align*}
where $K$ is the size of each $\bbD^{(i)}$. 

The number of possible $\{\ell_i\}_{t \leq T}$ such that $\sum_{t \leq T}\ell_i = k$ is at most $\binom{d+k}{k}$. Therefore, 
\begin{align*}
&  \sum_{k \geq 1}\sum_{{\scriptsize\substack{\mbox{all } \bbC(\{\ell_i\})\\ \mbox{s.t. $\sum_{t}\ell_t = k$}}}} \Pr\left[\vec R' \in \bbC(\{\ell_i\}_{i \leq T})\right] \\
& \leq  \sum_{k \geq 1}\sum_{{\scriptsize\substack{\mbox{all } \bbC(\{\ell_i\})\\ \mbox{s.t. $\sum_{t}\ell_t = k$}}}} \prod_{t \leq T}\binom{L_t}{\ell_t}\left(\frac 1 K\right)^{\ell_t} \\
& \leq  \sum_{k\geq 1} \sum_{{\scriptsize\substack{\mbox{all } \bbC(\{\ell_i\})\\ \mbox{s.t. $\sum_{t}\ell_t = k$}}}}  \prod_{t \leq T}\left(\frac{e L_t}{K}\right)^{\ell_t} \quad \mbox{(using $\binom{L_t}{\ell_t} \leq \left(\frac{eL_t}{\ell_t}\right)^{\ell_t} \leq (eL_i)^{\ell_i}$)} \\
& \leq  \sum_{k \geq 1} \binom{d+k}{k}\prod_{t \leq T}\left(\frac{eL_t}{K}\right)^{\ell_t} \\
& \leq  d \max_{{\scriptsize\substack{k \mbox{ s.t.} \\ \sum_t \ell_t = k}}} \binom{d+k}{k}\prod_{i \leq T}\left(\frac{eL_i}{K}\right)^{\ell_t} \\
& \leq  d \left(\frac{e(d+k)}{k}\right)^{k} \prod_{t \leq T}\left(\frac{eL_t}{K}\right)^{\ell_t} \quad \mbox{(where $k = \sum_{t \leq T}\ell_t$ from the previous line)} \\
& \leq d (2ed)^k\prod_{t \leq T}\left(\frac{eL_t}{K}\right)^{\ell_t} \\
& \leq  d \prod_{t \leq T}\left(\frac{2e^2d^2}{K}\right)^{\ell_t}  \leq \exp\left(-c \rho^{2\lambda}d(\sum_{t \leq T}\ell_t)\right)  \quad \mbox{($c$ is a suitable constant; using $K$ from Lemma~\ref{lem:randomsubset}) }\\
& \leq  \exp\left(-c \rho^{2\lambda}d \max\left\{\frac{n\rho^{\zeta}\Psi}{d \rho^2\sigma^2_{\epsilon}}, 1\right\}\right)  
\end{align*}
By using a union bound on all pairs of $\vec R$ and $\vec R'$ in $\bbS$, we have for sufficiently small $c_3$, there exists a $c_4$ such that 
\begin{align*}
\Pr\left[\exists \vec R, \vec R' \in \bbS: \mc(\vec R, \vec R') \geq \rho^{\zeta}\Psi\right] \leq \exp\left(-c_4 \rho^{2\lambda}d \max\left\{\frac{n\rho^{\zeta}\Psi}{d \rho^2\sigma^2_{\epsilon}}, 1\right\}\right)  \leq \exp(-c_4\rho^{2\lambda}d). 
\end{align*} 

\end{proof}

\subsubsection{Step 3. Building up unitary matrices}
We next construct a set of unitary matrices in $\reals^{d \times d}$ based on elements in $\bbS$.
We shall refer to the procedure to generate a unitary $U$ from an element $\vec R \in \bbS$ as $q(\vec R)$. 

\myparab{\underline{Procedure $q(\vec R)$:}} Let $U \in \reals^{d \times d}$ be the matrix $q(\vec R)$ aims to fill in.  
  Let $\mmv^{(1)}, \mmv^{(2)}, \dots, \mmv^{(\lt-1)} \in \reals^d$ be an arbitrary set of orthonormal vectors. 
$q(\vec R)$ partitions the matrix $U$ into three regions of columns and it fills different regions with different strategies. See also three regions illustrated by matrices in Fig.~\ref{fig:pack_U}. 

\mypara{Region 1: $U_{:, i}$ for $i < \lt$.} We set $U_{:, i} = \mmv^{(i)}$ when $i < \lt$. This means all the unitary matrices we construct
share the first $\lt - 1$ columns. 

\mypara{Region 2: $U_{:, i}$ for $i \in [\lt, \ut]$.} We next fill in non-zero entries of each $U_{:, i}$ by $\vec R_i$ for $i \in [\lt, \ut]$. We fill in each $U_{:, i}$ 
sequentially from left to right (from small $i$ to large $i$). We need to make sure that \emph{(i)} it is feasible to construct $U_{:, i}$ that is orthogonal to all $U_{:. j}$ ($j < i$)
by using only entries  specified by $R_i$. \emph{(ii)} there is a way to fill in $U_{:, i}$ so that not too many entries are excessively large. (ii) is needed because 
for any $\vec R$ and $\vec R'$, $\vec R_i$ and $\vec R'_i$ still share a small number of non-zero positions (whp $|\vec R_i \cap \vec R'_i| = O(\rho^{2\lambda}d)$, according to Lemma~\ref{lem:randomsubset}). When the mass in $|\vec R_i \cap \vec R'_i|$ is large, the distance between $U$ and $U'$ is harder to control. 
 
\mypara{Region 3: $U_{:, i}$ for $i > \ut$.} $q(\vec R)$ fills in the rest of the vectors arbitrarily so long as $U$ is unitary. Unlike the first $\lt - 1$ columns, these columns depend on $\vec R$ so each $U \in \bbU$ has a different set of ending column vectors.  
 
\myparab{Analysis for region 2.} Our analysis focuses on two asserted properties for Region 2 are true. We first show (i) is true and describe a procedure to make sure (ii) happens. 

For $j \leq i - 1$,  let $\mmw^{(j)} \in \reals^{\rho^{\lambda}d}$ be the projection of $U_{:, j}$ onto the coordinates specified by $\vec R_i$. See Fig.~\ref{fig:pack_U}(d) for an illustration. 
 Note that $\ut = \frac{\rho^{\lambda}d}{2}$, the dimension of the subspace spanned by $\mmw^{(1)}, \dots \mmw^{(i - 1)}$ is at most $\rho^{\lambda}d / 2$. 
Therefore, we can find a set of orthonormal vectors $\{\mmu^{(1)}, \dots \mmu^{(\kappa)}\} \subseteq \reals^{\rho^{\lambda}d}$ ($\kappa \geq \frac{\rho^{\lambda}d}{2}$) that are orthogonal to $\mmw^{(j)}$ ($j \leq i - 1$). 
To build a $U_{:, i}$ that's orthogonal to all $U_{:, j}$ ($j \leq i - 1$), we can first find a $\mmu \in \reals^{\rho^{\lambda}d}$ that is a linear combination of $\{\mmu^{(j)}\}_{j \leq \kappa}$, and then ``inflate'' $\mmu$ 
back to $\reals^{d}$, i.e., the $k$-th non-zero coordinate of $U_{:, i}$ is $\mmu_k$. One can see that 
\begin{align*}
\langle U_{:, j}, U_{:, i}\rangle = \langle \mmw^{(j)}, \mmu \rangle = 0
\end{align*}
for any $j < i$. 

We now make sure (ii) happens. We have the following Lemma.

\begin{lemma}\label{lem:tailbasis2} Let $\{\mmu^{(1)}, \dots, \mmu^{(\kappa)}\}$ ($\kappa \geq \rho^{\lambda}d/2$) be a collection of orthonormal vectors in $\reals^{\rho^{\lambda}d}$. 
Let $\eta$ be a small tunable parameter. 
There exists a set of coefficients 
$\beta_1, \dots, \beta_{\kappa}$ such that $\mmu = \sum_{i = 1}^{\kappa}\beta_i \mmu^{(i)}$ is a unit vector and there exist constant $c_5$, $c_6$, and $c_7$ such that 
\begin{equation}\label{eqn:tail}
\sum_{i \leq \rho^{\lambda}d} \mmu^2_i I\left(\mmu_i \geq \frac{c_5}{\sqrt{\rho^{\lambda+\eta}d} }\right) \leq c_7\xi, 
\end{equation}
where $\xi = \exp(-\frac{c_6}{\rho^{\eta}})$.
\end{lemma}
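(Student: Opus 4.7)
The plan is to use the probabilistic method: sample $\mmu$ uniformly from the unit sphere inside the $\kappa$-dimensional subspace $S = \spans(\{\mmu^{(1)}, \dots, \mmu^{(\kappa)}\}) \subseteq \reals^{\rho^{\lambda}d}$, show that the expected ``bad mass'' $\E[\sum_i \mmu_i^2 I(|\mmu_i| \geq c_5/\sqrt{\rho^{\lambda+\eta}d})]$ is at most $c_7 \xi$, and conclude that some realization (hence some choice of coefficients $\beta_1, \dots, \beta_{\kappa}$) achieves the stated bound. This is natural because we want $\mmu$ to be ``spread out'', and a uniformly random unit vector in a high-dimensional subspace is the canonical spread-out vector.

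To implement this, I would set up the coordinate-wise tails as follows. Let $U^* \in \reals^{\rho^{\lambda}d \times \kappa}$ stack the $\mmu^{(j)}$ as columns, so $(U^*)^{\transpose} U^* = I_{\kappa}$ and $P = U^*(U^*)^{\transpose}$ is the orthogonal projection onto $S$. Set $p_i = P_{ii} = \|U^{*\transpose} e_i\|^2 \in [0,1]$, which satisfies $\sum_i p_i = \trace(P) = \kappa$. If $\mmu = U^* \beta$ with $\beta$ uniform on the unit sphere of $\reals^{\kappa}$, then $\mmu_i = \langle U^{*\transpose} e_i, \beta \rangle = \sqrt{p_i}\, Y_i$ where $Y_i$ is the projection of a uniform spherical vector onto a fixed unit direction. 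The standard sphere concentration inequality gives $\Pr[|Y_i| \geq s] \leq 2 \exp(-(\kappa-2) s^2 / 2)$ for $s \in [0,1]$, so
\begin{equation*}
\Pr[|\mmu_i| \geq t] \leq 2 \exp\!\left(-\frac{(\kappa-2) t^2}{2 p_i}\right).
\end{equation*}

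Next I would integrate the tail to control the second moment restricted to the bad event. Using the layer-cake identity $\E[X^2 I(X \geq t)] = t^2 \Pr[X \geq t] + 2 \int_t^{\infty} s \Pr[X \geq s]\,ds$ and plugging in the Gaussian-like tail, a routine calculation yields
\begin{equation*}
\E[\mmu_i^2 I(|\mmu_i| \geq t)] \leq \left(2 t^2 + \frac{4 p_i}{\kappa}\right) \exp\!\left(-\frac{(\kappa-2) t^2}{2 p_i}\right).
\end{equation*}
Setting $t = c_5 / \sqrt{\rho^{\lambda+\eta} d}$ and using $\kappa \geq \rho^{\lambda} d / 2$ and $p_i \leq 1$, the exponent is at most $-c_5^2 / (4 \rho^{\eta})$ uniformly in $i$. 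Summing the prefactor over $i \leq \rho^{\lambda} d$ gives $2 \rho^{\lambda} d \cdot t^2 + 4 \sum_i p_i / \kappa = 2 c_5^2 / \rho^{\eta} + 4$, which is only polynomially large in $1/\rho^{\eta}$ and absorbs into a mildly smaller constant $c_6 < c_5^2 / 4$. Therefore the expected bad mass is at most $c_7 \exp(-c_6 / \rho^{\eta}) = c_7 \xi$, and the probabilistic method delivers the desired $\beta$.

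The main subtlety I anticipate is the dependence of the per-coordinate tail rate on $p_i$: coordinates with $p_i$ close to $1$ get less concentration in the ``rescaled'' variable $t/\sqrt{p_i}$, and one must verify that the argument of the exponential still beats $c_6 / \rho^{\eta}$ after dividing by $p_i$. This is where the assumption $\kappa \geq \rho^{\lambda} d / 2$ is crucial: it guarantees that the subspace is ``thick'' enough that $\kappa t^2 \gtrsim 1/\rho^{\eta}$ regardless of $p_i \in (0, 1]$. Once that ratio is uniformly controlled, the rest is bookkeeping and the bound follows directly.
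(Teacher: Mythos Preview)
Your proposal is correct and follows essentially the same approach as the paper: both use the probabilistic method with $\mmu$ distributed uniformly on the unit sphere of the subspace, bound $\E[\mmu_\ell^2 I(|\mmu_\ell| \text{ large})]$ coordinate-wise via the fact that the row norms $p_\ell = \sum_{j} (\mmu^{(j)}_\ell)^2 \leq 1$, sum over $\ell$, and apply Markov. The only cosmetic difference is that the paper realizes the spherical distribution as normalized i.i.d.\ Gaussians $\beta_i = z_i/S$ and conditions on $S \geq 1/4$ to reduce to a Gaussian tail integral, whereas you invoke a sphere-concentration inequality for $Y_i = \langle v_i, \beta\rangle$ directly; both routes give the same $\exp(-\Theta(1/\rho^{\eta}))$ exponent and the same polynomial prefactor $O(1/\rho^{\eta})$ that is absorbed into $c_6$.
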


\begin{proof}[Proof of Lemma~\ref{lem:tailbasis2}] We use a probabilistic method to find $\mmu$. 
Let $z_i \sim N(0, 1/\sqrt{\rho^{\lambda}d})$ for $i \in [\rho^{\lambda}d]$. 
Let $S = \sqrt{\sum_{i \leq \rho^{\lambda}d}z^2_i}$. We shall set $\beta_i = z_i / S$. One can check that $\mmu = \sum_{i \in [\rho^{\lambda}d]}\beta_i \mmu^{(i)}$ 
is a unit vector. We then examine whether (\ref{eqn:tail}) is satisfied for these $\beta_i$'s we created. If not, we re-generate a new set of $z_i$'s and $\beta_i$'s. 
We repeat this process until (\ref{eqn:tail}) is satisfied. 

Because $\beta_i$'s are normalized, setting the standard deviation of $z_i$ is unnecessary. We 
nevertheless do so because $S$ will be approximately a constant, which helps us simplify the calculation. 

We claim that there exists a constant $c$ such that for any $\ell$,

\begin{equation}\label{eqn:singletail}
\E\left[\mmu^2_{\ell} I\left(\mmu_{\ell} \geq \frac{c_5}{\sqrt{\rho^{\lambda+\eta}d}}\right)\right] \leq \frac{c\xi}{\rho^{\lambda}d}.
\end{equation}

We first show that (\ref{eqn:singletail}) implies Lemma~\ref{lem:tailbasis2}. Then we will show (\ref{eqn:singletail}). 

By linearity of expectation, (\ref{eqn:singletail}) implies 
\begin{align*}
\E\left[\sum_{\ell\leq \rho^{\lambda}d}\mmu^2_{\ell} I\left(\mmu_{\ell} \geq \frac{c_5}{\sqrt{\rho^{\lambda+\eta}d}}\right)\right] \leq c\xi.
\end{align*}
Then we use a Markov inequality and obtain
\begin{align*}
\Pr\left[\left(\sum_{\ell\leq \rho^{\lambda}d}\mmu^2_{\ell} I\left(\mmu_{\ell} \geq \frac{c_5}{\sqrt{\rho^{\lambda+\eta}d}}\right) \right) \geq 2c\xi\right] \leq \frac 1 2
\end{align*}

So our probabilistic method described above is guaranteed to find a $\mmu$ that satisfies (\ref{eqn:tail})

We next move to showing (\ref{eqn:singletail}). Recall that 
\begin{align*}
\mmu_{\ell} = \frac 1 S (\sum_{i \leq \kappa}z_i \mmu^{(i)}_{\ell}). 
\end{align*}

We also let $Z_{\ell} = \sum_{i \leq \kappa}z_i \mmu^{(i)}_{\ell}$. We can see that $Z_{\ell}$ 
is a Gaussian random variable with a standard deviation $\sqrt{\frac{1}{\rho^{\lambda}d}\sum_{i \leq \kappa}(\mmu^{(i)}_{\ell})^2} \leq \sqrt{\frac 1{\rho^{\lambda}d}}$. The inequality
uses the fact that $\mmu^{(i)}$'s are orthonormal to each other and therefore $\sum_{i \leq \kappa}(\mmu^{(i)}_{\ell})^2 \leq 1$. 

On the other hand, one can see that 

\begin{align*}
\E[S] = \E[\sum_{i \leq \kappa}z^2_i] = \left(\frac{1}{\sqrt{\rho^{\lambda}d}}\right)^2 \cdot \kappa \geq \frac 1 2. 
\end{align*}

Therefore, by a standard Chernoff bound, $\Pr\left[S \leq \frac 1 4\right] \leq \exp(-\Theta(\kappa))$. 

Next, because $\{\mz_i\}_{i \leq \kappa}$ collectively form a spherical distribution, we have $\left\{\frac{z_i}{S}\right\}_{i \leq \kappa}$ is independent to $S$. Therefore, 

\begin{equation}\label{eqn:conditioning}
\E\left[\mmu^2_{\ell} I\left(\mmu_{\ell} \geq \frac{c_5}{\sqrt{\rho^{\lambda+\eta}d}}\right)\right] = \E\left[\mmu^2_{\ell} I\left(\mmu_{\ell} \geq \frac{c_5}{\sqrt{\rho^{\lambda+\eta}d}}\right) \mid S \geq \frac 1 4\right]
\end{equation}

Conditioned on $S \geq \frac 1 4$, we use the fact that $\mmu_{\ell} = Z_{\ell}/S$ to get 
 $\mmu_{\ell} \leq 4 Z_{\ell}$ and 
\begin{equation}
I\left(\mmu_{\ell} \geq \frac{c_5}{\sqrt{\rho^{\lambda+\eta}d}}\right) = I\left(Z_{\ell} \geq \frac{c_5}{\sqrt{\rho^{\lambda+\eta}d}}S\right) \leq I\left(Z_{\ell} \geq \frac{c_5}{4\sqrt{\rho^{\lambda+\eta}d}}\right). 
\end{equation}

Therefore, 
\begin{align*}
   &  (\ref{eqn:conditioning}) \\
    & \leq 16\E\left[Z^2_{\ell} I\left(Z_{\ell} \geq \frac{c_5}{4\sqrt{\rho^{\lambda+\eta}d}}\right) \mid S \geq \frac 1 4\right] \\
    & = \frac{16}{\Pr\left[S \geq \frac 1 4\right]}\left(\E\left[Z^2_{\ell} I\left(Z_{\ell} \geq \frac{c_5}{4\sqrt{\rho^{\lambda+\eta}d}}\right)\right] - \E\left[Z^2_{\ell} I\left(Z_{\ell} \geq \frac{c_5}{4\sqrt{\rho^{\lambda+\eta}d}}\right) \mid S < \frac 1 4\right]\Pr\left[S \leq \frac 1 4\right]\right) \\
    & \leq 16(1+\exp(-\Theta(\kappa)))\left(\E\left[Z^2_{\ell} I\left(Z_{\ell} \geq \frac{c_5}{4\sqrt{\rho^{\lambda+\eta}d}}\right)\right]\right) \\
    & \leq 16(1+\exp(-\Theta(\kappa)))\E\left[Z^2_{\ell} I\left(Z_{\ell} \geq \frac{2c_5}{\sqrt{\rho^{\lambda+\eta}d}}\right)\right] \\
& \leq\frac{16(1+\exp(-\Theta(\kappa)))}{\sqrt{2\pi}\sigma_{Z_{\ell}}} \int_{\frac{2c_5}{\sqrt{\rho^{\lambda+\eta}d}}}^{\infty}z^2 \exp\left(\frac{z^2}{\sigma^{2}_{Z_{\ell}}}\right)dz \\
& \leq  \frac{c}{\rho^{\lambda+\eta}d}\exp\left(-\Theta(\rho^{-\eta})\right) \quad {\mbox{(using $\sigma_{Z_{\ell}} \leq \sqrt{\frac{1}{\rho^{\lambda}d}}$)}}  \\
& \leq O(\frac{\xi}{\rho^{\lambda}d})
\end{align*}
\end{proof}

\subsubsection{Proof of Proposition~\ref{prop:lowerbound}}
Now we are ready to use Theorem~\ref{thm:block} to prove Proposition~\ref{prop:lowerbound}. 
Let $\bbS$ be the set constructed from Step 1 and $\bbU^{\normf} = \{U^{\normf}: U^{\normf} = q(\vec R), \vec R \in \bbS\}$. Let $\bbN^{\normf} = \{U^{\normf}L^{\normf}: U^{\normf} \in \bbU^{\normf}\}$.
Let also $\mP_{N^{\normf}}$ be the distribution of $(\my, \mz^{\normf})$ generated from the model $\my = N^{\normf} \mz^{\normf} + \epsilon$. Let $\mP_{N^{\normf}\mid \mz^{\normf}}$ be the distribution of 
$\my$ from the normalized model when $\mz^{\normf}$ is given. Let $\mP_{N^{\normf}, n}$ be the product distribution when we observe $n$ samples from the model $\my = N^{\normf} \mz^{\normf} + \epsilon$. 
Let $f_{N^{\normf}}(\my, \mz^{\normf})$ be the pdf of for $\mP_{N^{\normf}}$, $f_{N^{\normf}}(\my \mid \mz^{\normf})$ be the pdf of $\my$ given $\mz^{\normf}$, and $f(\mz^{\normf})$ be the pdf of $\mz^{\normf}$. 

We need to show that  for any $N^{\normf}, \tilde N^{\normf} \in \bbN$ \emph{(i)} $\|N^{\normf} - \tilde N^{\normf}\|_F$ is large, and \emph{(ii)} the KL-divergence between
 $\mP_{N^{\normf}, n}$ and $\mP_{N^{\normf}, n}$ is bounded.

\begin{lemma}\label{lem:lowernorm} Let  $\bbS$, $\bbU^{\normf}$, and $\bbN^{\normf}$ be generated by the parameters $\lambda, \eta$, and $\zeta$ (see Steps 1 to 3). 
For any $N^{\normf}$ and $\tilde N^{\normf}$ in $\bbN$, we have 
\begin{equation}
\|N^{\normf} - \tilde N^{\normf}\|^2_F \geq \sum_{i \in [\lt, \ut]}\sigma^{\normf}_i (2 - c_8 \rho^{\lambda - \eta} - c_9\rho^{\zeta})
\end{equation}
for some constant $c_8$ and $c_9$. 
\end{lemma}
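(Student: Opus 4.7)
The plan is to decompose the squared Frobenius distance column-by-column via
$\|N^{\normf} - \tilde N^{\normf}\|^2_F = \sum_{i \in [d]} (\sigma^{\normf}_i)^2 \|U^{\normf}_{:,i} - \tilde U^{\normf}_{:,i}\|^2_2$,
and then exploit the three-region structure built into the procedure $q(\cdot)$. Columns with $i < \lt$ are identical by construction, so they contribute $0$; columns with $i > \ut$ are filled arbitrarily to make the matrices unitary, so their contribution is non-negative and we simply drop it from the lower bound. All the work lives in the middle block $i \in [\lt, \ut]$, where we need to show that on a column-by-column basis the two matrices are far apart.

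For indices $i \in [\lt, \ut]$ with $\vec R_i \neq \vec R'_i$, both $U^{\normf}_{:,i}$ and $\tilde U^{\normf}_{:,i}$ are unit vectors, so $\|U^{\normf}_{:,i} - \tilde U^{\normf}_{:,i}\|^2_2 = 2 - 2\langle U^{\normf}_{:,i}, \tilde U^{\normf}_{:,i}\rangle$, and I only need to upper bound this inner product in absolute value. Since the two vectors have disjoint supports outside $\vec R_i \cap \vec R'_i$, the inner product is supported on this overlap, which by Lemma~\ref{lem:randomsubset} has at most $c_2 \rho^{2\lambda}d$ coordinates on the good event. I split the overlap into a \emph{bulk} part (coordinates where both vectors have entries bounded by $c_5/\sqrt{\rho^{\lambda+\eta}d}$) and a \emph{tail} part. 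On the bulk, Cauchy--Schwarz together with the per-entry magnitude bound gives a contribution of at most $c_2 \rho^{2\lambda}d \cdot c_5^2/(\rho^{\lambda+\eta}d) = O(\rho^{\lambda-\eta})$. For the tail, Lemma~\ref{lem:tailbasis2} guarantees that the squared tail mass of each vector is at most $c_7\xi = \exp(-\Theta(\rho^{-\eta}))$, so another Cauchy--Schwarz controls the tail contribution by the same exponentially small quantity, easily absorbed into $O(\rho^{\lambda-\eta})$. This yields $\|U^{\normf}_{:,i} - \tilde U^{\normf}_{:,i}\|^2_2 \geq 2 - c_8 \rho^{\lambda-\eta}$ for a suitable constant $c_8$.

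For indices $i \in [\lt, \ut]$ with $\vec R_i = \vec R'_i$, the columns can coincide, so I forgo their contribution entirely. However, by Lemma~\ref{lem:hamming}, on the good event the total weight of such collisions satisfies $\sum_{i:\, \vec R_i = \vec R'_i}(\sigma^{\normf}_i)^2 = \mc(\vec R, \vec R') \leq \rho^\zeta \sum_{i \in [\lt,\ut]}(\sigma^{\normf}_i)^2$. Putting the two cases together,
\begin{equation*}
\|N^{\normf} - \tilde N^{\normf}\|^2_F \;\geq\; (2 - c_8\rho^{\lambda-\eta}) \!\!\sum_{i:\, \vec R_i \neq \vec R'_i}\!\! (\sigma^{\normf}_i)^2 \;\geq\; (2 - c_8\rho^{\lambda-\eta})(1 - \rho^\zeta) \!\sum_{i \in [\lt,\ut]}\! (\sigma^{\normf}_i)^2,
\end{equation*}
and expanding the product (with the cross term $c_8 \rho^{\lambda-\eta+\zeta}$ absorbed into a constant) delivers the claim with $c_9 = 2$.

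The main obstacle is the bulk/tail split for the overlap inner product: sparsity alone (Lemma~\ref{lem:randomsubset}) is insufficient, because a few shared coordinates could in principle carry most of the $\ell_2$ mass of each vector and drive the inner product close to $1$. This is precisely why Lemma~\ref{lem:tailbasis2} is engineered with the tunable $\eta$: the per-entry ceiling $c_5/\sqrt{\rho^{\lambda+\eta}d}$ converts sparsity of the overlap into an $O(\rho^{\lambda-\eta})$ bound on the bulk, while the exponentially small tail mass $\xi$ makes the magnitude-heavy coordinates harmless. One must then choose $\eta < \lambda$ so that $\rho^{\lambda-\eta} \to 0$, which is compatible with the density, packing, and weighted-overlap constraints imposed by the earlier steps.
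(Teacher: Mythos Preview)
Your proposal is correct and follows essentially the same route as the paper's proof: the same column decomposition, the same use of Lemma~\ref{lem:hamming} to control the collision weight, the same use of Lemma~\ref{lem:randomsubset} for the overlap size, and the same bulk/tail split via Lemma~\ref{lem:tailbasis2}. The only cosmetic difference is that the paper, instead of bounding $\langle U^{\normf}_{:,i},\tilde U^{\normf}_{:,i}\rangle$ directly, uses $2|ab|\le a^2+b^2$ to reduce to bounding $\sum_{j\in\vec R_i\cap\vec R'_i}\big((U^{\normf}_{j,i})^2+(\tilde U^{\normf}_{j,i})^2\big)$ and then applies the bulk/tail split to each sum of squares separately, which lets them invoke the tail bound of Lemma~\ref{lem:tailbasis2} verbatim without the extra Cauchy--Schwarz.
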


\begin{proof}
Let $U^{\normf}, \tilde U^{\normf} \in \bbU^{\normf}$ such that $N^{\normf} = U^{\normf}L^{\normf}$ and $\tilde N^{\normf} = \tilde U^{\normf}L^{\normf}$; let $\vec R, \vec R' \in \bbS$ be that $U^{\normf} = q(\vec R)$ and $\tilde U^{\normf} = q(\vec R')$. 
Also, recall that we let $\Psi = \sum_{i \in [\lt, \ut]}(\sigma^{\normf}_i)^2$. 

We have 
$$\|N^{\normf} - \tilde N^{\normf}\|^2_F = \sum_{i \in [d]} \|(U^{\normf}_{:, i} - \tilde U^{\normf}_{:, i})\sigma^{\normf}_i\|^2_2 \geq \sum_{i \in [\lt, \ut]} \|(U^{\normf}_{:, i} - \tilde U^{\normf}_{:, i})\sigma^{\normf}_i\|^2_2$$

Let also  $H = \{ \vec R_i = \vec R'_i, i \in [\lt, \ut]\}$, i.e., the set of coordinates that $\vec R$ and $\vec R'$ agree. Whp, we have 
\begin{equation}
\Psi = \sum_{i \in H}(\sigma^{\normf}_i)^2 + \sum_{{\scriptsize \substack{i \in [\lt, \ut] \\ i \notin H}}}(\sigma^{\normf}_i)^2 = \mc(\vec R, \vec R') + \sum_{{\scriptsize \substack{i \in [\lt, \ut] \\ i \notin H}}}(\sigma^{\normf}_i)^2 \leq \rho^{\zeta}\Psi + \sum_{{\scriptsize \substack{i \in [\lt, \ut] \\ i \notin H}}}(\sigma^{\normf}_i)^2
\end{equation}
The last inequality holds because of Lemma~\ref{lem:hamming}. Therefore, we have $ \sum_{{\scriptsize \substack{i \in [\lt, \ut] \\ i \notin H}}}(\sigma^{\normf}_i)^2 \geq (1-\rho^{\zeta})\Psi$. 

Now we have
$$
\sum_{i \in [\lt, \ut]}\|(U^{\normf}_{:, i} - \tilde U^{\normf}_{:, i})\sigma^{\normf}_i\|^2_2  \geq \sum_{{\scriptsize \substack{i \in [\lt, \ut] \\ i \notin H}}} \|(U^{\normf}_{:, i} - \tilde U^{\normf}_{:, i}\|^2_2 (\sigma^{\normf}_i)^2. 
$$

Next we bound $\| U^{\normf}_{:, i} - \tilde U^{\normf}_{:, i}\|^2_2$ when $\vec R_i \neq \vec R'_i$. We have
\begin{equation}
\|U^{\normf}_{:, i} - \tilde U^{\normf}_{:, i}\|^2_2 \geq 2 - \sum_{j \in \vec R_i \cap \vec R'_i} \left((U^{\normf}_{j, i})^2 + (\tilde U^{\normf}_{j, i})^2\right). 
\end{equation}
By Lemma~\ref{lem:randomsubset}, whp $|\vec R_i \cap \vec R'_i | \leq  c_2\rho^{2\lambda}d$. Next, we give a bound for $\sum_{j \in \vec R_i \cap \vec R'_i}(U^{\normf}_{j, i})^2$. 
The bound for  $\sum_{j \in \vec R_i \cap \vec R'_i}(\tilde U^{\normf}_{j, i})^2$ can be derived in a similar manner. 

For each $j \in |\vec R_i \cap \vec R'_i|$, we check whether $U^{\normf}_{j, i} \geq \frac{c_5}{\sqrt{\rho^{\lambda+\eta}d}}$: 
\begin{align*}
 \sum_{j \in \vec R_i \cap \vec R'_i} (U^{\normf}_{j, i})^2 & =  \sum_{j \in \vec R_i \cap \vec R'_i}\left[ (U^{\normf}_{j, i})^2 I(U^{\normf}_{j, i}\leq \frac{c_5}{\sqrt{\rho^{\lambda+\eta}d}}) + (U^{\normf}_{j, i})^2 I(U^{\normf}_{j, i}>  \frac{c_5}{\sqrt{\rho^{\lambda+\eta}d}})\right] \\
& \leq  O\left(\frac{\rho^{2\lambda}d}{\rho^{\lambda+\eta}d} + \exp(-\Theta(1/\rho^{\eta}))\right)  = O(\rho^{\lambda - \eta}). 
\end{align*}

Therefore, 
\begin{align*}
\sum_{i \in [\lt, \ut]}\|(U^{\normf}_{:, i} - \tilde U^{\normf}_{:, i})(\sigma^{\normf}_i)^2\|^2_2  & \geq   \sum_{i \in [\lt, \ut]} (\sigma^{\normf}_i)^2 (2- O(\rho^{\lambda - \eta})(1-\rho^{\zeta}))  \geq  \sum_{i \in [\lt, \ut]} (\sigma^{\normf}_i)^2 (2- c_8\rho^{\lambda - \eta} - c_9\rho^{\zeta}).  
\end{align*}
\end{proof}

We need two additional building blocks.

\begin{lemma}\label{lem:tailN} Consider the regression problem $\my = M\mx + \epsilon$, where $\|M\|\leq \Upsilon = O(1)$ and the eigenvalues $\sigma_i(\E[\mx \mx^{\transpose})]$ of the features follow a power law distribution with exponent $\omega$. Consider the problem $\my = N \mz + \epsilon$ in orthogonalized form. Let $\sigma^N_i$ be the $i$-th singular value of $N$. Let $t$ be an arbitrary value in $[0, \min \{d_1, d_2\}]$. There exists a constant $c_{10}$ such that 
$$\sum_{i \geq t}\sigma^2_{i}(N) \leq \frac{c_{10}}{t^{\omega - 1}}.$$
\end{lemma}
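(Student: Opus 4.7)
The plan is to reduce the tail sum over $\sigma_i^2(N)$ to a tail sum over eigenvalues of $C^*$, and then invoke the power law assumption together with a standard integral estimate. The key identity is that $N = M V^* (\Lambda^*)^{1/2}$ by the definition of the orthogonalized form, so $NN^{\transpose} = M C^* M^{\transpose}$.

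First I would pass from singular values of $N$ to eigenvalues of $C^*$ using the standard multiplicative inequality $\sigma_i(AB) \leq \|A\|_2 \cdot \sigma_i(B)$ (a consequence of the Courant--Fischer min-max characterization). Applying this with $A = M$ and $B = V^*(\Lambda^*)^{1/2}$, and using that $V^*$ is unitary so $\sigma_i(V^*(\Lambda^*)^{1/2}) = \sigma_i((\Lambda^*)^{1/2}) = \sqrt{\lambda_i^*}$, we obtain
\begin{equation*}
\sigma_i^2(N) \;\leq\; \|M\|_2^2 \cdot \lambda_i^*(C^*) \;\leq\; \Upsilon^2 \cdot \frac{c}{i^{\omega}},
\end{equation*}
where the last step invokes the assumed power-law decay $\lambda_i^*(C^*) \leq c \cdot i^{-\omega}$.

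Second, I would bound the tail sum by comparison to an integral. Summing over $i \geq t$,
\begin{equation*}
\sum_{i \geq t}\sigma_i^2(N) \;\leq\; \Upsilon^2 c \sum_{i \geq t} i^{-\omega} \;\leq\; \Upsilon^2 c \int_{t-1}^{\infty} x^{-\omega}\, dx \;=\; \frac{\Upsilon^2 c}{(\omega-1)(t-1)^{\omega-1}},
\end{equation*}
which for $t$ bounded away from $1$ (and in particular for $t \geq 2$; the $t < 2$ case can be absorbed into the constant since the full sum $\sum_i \sigma_i^2(N) = \trace(MC^*M^{\transpose}) \leq \Upsilon^2 \trace(C^*)$ is finite) gives the claimed bound with $c_{10}$ depending only on $c$, $\Upsilon$, and $\omega$.

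There is no real obstacle here: the only subtlety is justifying the singular value inequality and handling the boundary case of small $t$. The conclusion is essentially that $N$ inherits the power-law spectral decay of $C^*$, scaled by $\|M\|_2^2 = O(1)$.
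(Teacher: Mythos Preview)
Your proof is correct. The paper takes a slightly different route: it writes $\sum_{i \geq t}\sigma_i^2(N) = \|N - \mP_t(N)\|_F^2$ via Eckart--Young, then upper-bounds this by $\|N - [N_+, \mzero]\|_F^2 = \|N_-\|_F^2$, where $N_-$ consists of the last $d_1 - t$ columns of $N$, and finally computes $\|N_-\|_F^2 \leq \|M V^*\|_2^2\,\|(\Lambda^*_-)^{1/2}\|_F^2 \leq \Upsilon^2 \sum_{i > t}\lambda_i^*$ from $N_- = M V^* (\Lambda^*_-)^{1/2}$. Your pointwise inequality $\sigma_i(N) = \sigma_i\big(M\,V^*(\Lambda^*)^{1/2}\big) \leq \|M\|_2\,\sqrt{\lambda_i^*}$ reaches the same intermediate bound $\sum_{i \geq t}\sigma_i^2(N) \leq \Upsilon^2 \sum_{i \geq t}\lambda_i^*$ more directly, without constructing an explicit low-rank approximant. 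Both arguments are one-liners; yours is arguably cleaner, while the paper's column-truncation device is the same one it reuses in the upper-bound analysis (Lemma~\ref{lem:Nminus}), so it costs nothing extra there.
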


\begin{proof}[Proof of Lemma~\ref{lem:tailN}] 
Without loss of generality,  assume that $d_1 \geq d_2$. We first split the columns of $N$ into two parts, 
$N = [N_+, N_-]$, in which $N_+ \in \reals^{d_2 \times t}$ consists of the first $t$ columns of $N$ and $N_- \in \reals^{d_2 \times (d_1 - t)}$ consists of the remaining columns. Let $\mzero$ be a zero matrix in  $\reals^{d_2 \times (d_1-t)}$. $[N_+, \mzero]$ is a matrix of rank at most $t$.

Let us split other matrices in a similar manner. 
\begin{itemize}
\item  $M = [M_+, M_-]$, where $M_+ \in \reals^{d_2 \times t}$ and $M_- \in \reals^{d_2 \times (d_1 - t)}$,
\item $V^* = [V^*_+, V^*_-]$, where $V^*_+ \in \reals^{d_1 \times t}$, 
 and $V^*_- \in \reals^{d_1 \times (d_1 - t)}$, and
\item $\Lambda^* = [\Lambda^*_+, \Lambda^*_-]$, where $\Lambda^*_+ \in \reals^{d_1 \times t}$ and $\Lambda^*_- \in \reals^{d_1 \times (d_1 - t)}$. 
\end{itemize}
We have 
\begin{align*}
\sum_{i \geq t}(\sigma^N_{i})^2 & =  \| N - \mP_{t}(N)\|^2_F \\
& \leq  \| N - [N_+, \mzero]\|^2_F \quad \mbox{ ($\mP_{t}(N)$ gives an optimal rank-$t$ approximation of $N$).} \\
& =  \|N_-\|^2_F \leq  \|M_- V^*_-\|^2 \|(\Lambda^*_{-})^{\frac 1 2}\|^2_F  \leq  \frac{c_{10}}{t^{\omega - 1}}
\end{align*}
\end{proof}

We next move to our second building block. 

\begin{fact} \label{fact:kd}
\begin{equation}
\Kld(\mP_{N^{\normf}, n}, \mP_{\tilde N^{\normf}, n}) = \frac{n \|N^{\normf} - \tilde N^{\normf}\|^2_F}{2\sigma^2_{\epsilon}}.
\end{equation} 
\end{fact}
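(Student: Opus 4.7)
The plan is to reduce this to a completely standard KL computation for multivariate Gaussians using the chain rule, and I do not expect any real obstacle here since it is a routine calculation. I will treat the joint distribution over one sample $(\mz^{\normf}, \my)$ and then use independence across the $n$ samples.

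First, I would observe that under both $\mP_{N^{\normf}}$ and $\mP_{\tilde N^{\normf}}$, the marginal distribution of $\mz^{\normf}$ is the same standard Gaussian $N(0, I_{d_2 \times d_2})$ and does not depend on the parameter. Hence by the chain rule for KL divergence applied to the factorization $f_{N^{\normf}}(\my, \mz^{\normf}) = f_{N^{\normf}}(\my \mid \mz^{\normf}) f(\mz^{\normf})$, the marginal term cancels and
\begin{equation*}
\Kld(\mP_{N^{\normf}}, \mP_{\tilde N^{\normf}}) = \E_{\mz^{\normf}}\bigl[\Kld\bigl(\mP_{N^{\normf}\mid \mz^{\normf}}, \mP_{\tilde N^{\normf}\mid \mz^{\normf}}\bigr)\bigr].
\end{equation*}

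Next I would use that, conditionally on $\mz^{\normf}$, $\my$ is Gaussian with covariance $\sigma^2_{\epsilon} I_{d_2 \times d_2}$ and mean $N^{\normf}\mz^{\normf}$ (resp.\ $\tilde N^{\normf}\mz^{\normf}$). The KL divergence between two Gaussians with identical covariance $\sigma^2_{\epsilon}I$ and means $\mu_1,\mu_2$ is $\|\mu_1-\mu_2\|_2^2/(2\sigma^2_{\epsilon})$, so the conditional KL equals $\|(N^{\normf}-\tilde N^{\normf})\mz^{\normf}\|_2^2 / (2\sigma^2_{\epsilon})$. Taking the expectation over $\mz^{\normf}\sim N(0, I)$ and using the identity $\E[\|A\mz^{\normf}\|_2^2]=\trace(A^{\transpose}A)=\|A\|_F^2$ with $A = N^{\normf}-\tilde N^{\normf}$ gives, for a single sample,
\begin{equation*}
\Kld(\mP_{N^{\normf}}, \mP_{\tilde N^{\normf}}) = \frac{\|N^{\normf} - \tilde N^{\normf}\|_F^2}{2\sigma^2_{\epsilon}}.
\end{equation*}

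Finally, since $\mP_{N^{\normf}, n}$ and $\mP_{\tilde N^{\normf}, n}$ are product measures over $n$ i.i.d.\ samples, KL divergence is additive across coordinates, so the total KL is exactly $n$ times the single-sample quantity, yielding $\Kld(\mP_{N^{\normf}, n}, \mP_{\tilde N^{\normf}, n}) = n\|N^{\normf}-\tilde N^{\normf}\|_F^2/(2\sigma^2_{\epsilon})$, as claimed.
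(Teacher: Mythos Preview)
Your proposal is correct and follows essentially the same route as the paper's proof: use the chain rule to reduce to the expected conditional KL, then compute the Gaussian KL and take the expectation over $\mz^{\normf}$, and finally use additivity over the $n$ i.i.d.\ samples. If anything, your write-up is slightly more explicit than the paper's, which leaves the final step $\E_{\mz}[\Kld(\mP_{N^{\normf}\mid \mz^{\normf}}, \mP_{\tilde N^{\normf}\mid \mz^{\normf}})] = \|N^{\normf}-\tilde N^{\normf}\|_F^2/(2\sigma_{\epsilon}^2)$ as a one-line assertion.
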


\begin{proof}[Proof of Fact~\ref{fact:kd}]
 \begin{eqnarray*}
 & & \Kld(\mP_{N^{\normf}, n}, \mP_{\tilde N^{\normf}, n}) \\
 & = & n \Kld(\mP_{N^{\normf}}, \mP_{\tilde N^{\normf}}) \\
 & = & n \E_{\my = N^{\normf}\mz^{\normf} + \epsilon} \left[\log\left(\frac{ f_{N^{\normf}}(\my,\mz^{\normf})}{ f_{\tilde N^{\normf}}(\my, \mz^{\normf})}\right)\right] \\
 & = & n \E_{\mz^{\normf}} \left[\E_{\my = N^{\normf}\mz^{\normf} + \epsilon}\left[\log\left(\frac{ f_{N^{\normf}}(\my, \mz^{\normf})}{ f_{\tilde N^{\normf}}(\my, \mz^{\normf}) }\right)\mid\mz^{\normf}\right]\right] \\
 & = &  n \E_{\mz^{\normf}} \left[\E_{\my = N^{\normf}\mz^{\normf} + \epsilon}\left[\log\left(\frac{ f_{N^{\normf}}(\my \mid \mz^{\normf})f(\mz^{\normf})}{ f_{\tilde N^{\normf}}(\my \mid \mz^{\normf})f(\mz^{\normf}) }\right)\mid\mz^{\normf}\right]\right] \\
 & = & n \E_{\mz}\E_{\my = N^{\normf}\mz^{\normf} + \epsilon}\left[\left[\log\left(\frac{f_{N^{\normf}}(\my\mid \mz^{\normf})}{f_{\tilde N^{\normf}}(\my \mid \mz^{\normf})}\right) \mid \mz^{\normf}\right]\right] \\
 & = & n \E_{\mz}\left[\Kld(\mP_{N^{\normf}\mid \mz^{\normf}}, \mP_{\tilde N^{\normf} \mid \mz^{\normf}})\right]  =  \frac{n \|N^{\normf} -\tilde N^{\normf}\|^2_F}{2\sigma^2_{\epsilon}}
 \end{eqnarray*}
 \end{proof}

We now complete the proof for Proposition~\ref{prop:lowerbound}. First, define $\psi \triangleq \frac{\sum_{i \geq \ut + 1}(\sigma^{\normf}_i)^2}{\sum_{i \in [\lt, \ut]}(\sigma^{\normf}_i)^2}$. For any 
$N^{\normf}$ and $\tilde N^{\normf}$ in $\bbN$, we have 
$$\|N^{\normf} - \tilde N^{\normf}\|^2_F = \sum_{i \geq \lt} (\sigma^{\normf}_i)^2 = (1+\psi)\Psi,$$
where recall that $\Psi = \sum_{i \in [\lt, \ut]}(\sigma^{\normf}_i)^2$. Using Lemma~\ref{lem:lowernorm}, we have $\Kld(\mP_{N^{\normf}, n}, \mP_{\tilde N^{\normf}, n}) = n (1+\psi)\Psi$. 
 
Next, we find a smallest $\alpha$ such that  
\begin{equation}\label{eqn:findalpha}
\max_{N^{\normf}, \tilde N^{\normf}} \Kld(\mP_{N^{\normf}, n}, \mP_{\tilde N^{\normf}, n}) \leq \alpha \log |\bbN|. 
\end{equation}
 
By Lemma~\ref{lem:hamming}, we have $|\bbN| = \exp(c_3 \frac{n\rho^{2\lambda+\zeta-2}}{\sigma^2_{\epsilon}}\Psi)$. (\ref{eqn:findalpha}) is equivalent to 
requiring 
$$\frac{n (1+\psi)\Psi}{2\sigma^2_{\epsilon}} \leq \frac{\alpha c_3 n \rho^{2\lambda + \zeta - 2} \Psi}{\sigma^2_{\epsilon}}.$$
We may thus set $\alpha = O(\rho^{2 - 2\lambda + \zeta }(1+\psi))$. Now we may invoke Theorem~\ref{thm:block} and get
\begin{align*}
\mr(\mx, M, n, \sigma_{\epsilon}) & \geq \Psi\left(1-\frac{1}{\sqrt{|\bbN|}}\right)\underbrace{(2-c_8\rho^{\lambda - \eta} - c_9 \rho^{\zeta})}_{\mbox{Lemma~\ref{lem:lowernorm}}}\left(1-O(\rho^{2 - 2\lambda - \zeta})(1+\psi)\right) \\
& \geq \Psi (1 - O(\rho^{\lambda - \eta} + \rho^{\zeta} + \rho^{2 - 2\lambda - \zeta})) - \underbrace{\Psi \psi}_{= \sum_{i > \ut}(\sigma^{\normf}_i)^2} \rho^{2 - 2\lambda - \zeta} \\
& \geq \Psi(1- O(\rho^{\lambda - \eta} + \rho^{\zeta} + \rho^{2 - 2\lambda - \zeta})) - O\left(\frac{1}{(\rho^{\lambda}d)^{\omega - 1}}\right)\rho^{2 - 2\lambda - \zeta} \\
&  \quad \quad \mbox{ (Use Lemma~\ref{lem:tailN} to bound $\sum_{i > \ut}(\sigma^{\normf}_i)^2$ )} 
\end{align*}

We shall set $\xi = \eta$ be a small constant (say $0.001$), $\lambda = \frac 1 2 + \xi$, and $\zeta = \frac 1 2$. This gives us 
$$ \mr(\mx, M, n, \sigma_{\epsilon}) \geq \Psi(1-O(\rho^{\frac 1 2} + \rho^{\frac 1 2 - 2\xi})) - \frac{\rho^{\frac 1 2 - 2\xi - (\omega - 1)(\frac 1 2 + \xi)}}{d^{\omega - 1}}$$ 
Together with the fact that $\sum_{i > \ut}(\sigma^{\normf}_i)^2 = O\left(\frac{1}{(\rho^{\lambda} \omega)^{\omega - 1}}\right)$, we complete the proof of Proposition~\ref{prop:lowerbound}.



\fi

\ifnum\full=0

\fi
\section{Related work and comparison}\label{sec:related}
\ifnum\full=0

\fi

In this section, we compare our results to other regression algorithms that
make low rank constraints on $M$.  Most existing MSE results are parametrized
by the rank or spectral properties of $M$, e.g.
\ifnum\full=0
~\cite{negahban2011estimation} \else~\cite{negahban2011estimation}\fi  defined a generalized notion of rank 
\ifnum\full=0
$
    \bbB_q(R^A_q) \in \bigl\{A \in \reals^{d_2 \times d_1}: \sum_{i = 1}^{d_2}|\sigma^A_i|^q \leq R_q\bigr\},
$
where $q \in [0, 1], A \in \{N, M\}$,
\else
\begin{equation}
    \bbB_q(R^A_q) \in \bigl\{A \in \reals^{d_2 \times d_1}: \sum_{i = 1}^{\min\{d_1, d_2\}}|\sigma^A_i|^q \leq R_q\bigr\}, \quad q \in [0, 1], A \in \{N, M\}, 
\end{equation}
\fi
i.e. $R^N_q$ characterizes the generalized rank of $N$ whereas $R^M_q$ characterizes that of $M$. When $q = 0$, $R^N_q = R^M_q$ is the rank of the $N$ because $\Rank(N) = \Rank(M)$ in our setting. 
In their setting, the MSE is parametrized by $R^M$ and is shown to be
$O\Bigl(R^M_q\Bigl(\frac{\sigma^2_{\epsilon}\lambda^*_1(d_1+d_2)}{(\lambda^*_{\min})^2n}\Bigr)^{1-q/2}\Bigr)$.
In the special case when $q = 0$, this reduces to $O\left(\frac{\sigma^2_{\epsilon}\lambda^*_1\Rank(M)(d_1+d_2)}{(\lambda^*_{\min})^2\cdot n}\right)$.
On the other hand, the MSE in our case is bounded by (cf. Thm.~\ref{thm:main_upper}). We have 
$ \E[\|\hat \my - \my\|^2_2] = O\bigl(R^N_q (\frac{\sigma^2_{\epsilon}d_2}{n})^{1-q/2} + n^{-c_0}\bigr).
$
%
When $q = 0$, this becomes $O\bigl(\frac{\sigma^2_{\epsilon}\Rank(M)d_2}{n} + n^{-c_0}\bigr)$.

The improvement here is twofold. First, our bound is directly characterized by $N$ in orthogonalized form, whereas result of
\ifnum\full=0~\cite{negahban2011estimation} \else~\cite{negahban2011estimation}\fi 
needs to examine the interaction between $M$ and $C^*$, so their MSE depends on
both $R^M_q$ and $\lambda^*_{\min}$. Second, our bound no longer depends on
$d_1$ and pays only an additive factor $n^{-c_0}$, thus, when $n < d_1$, our
result is significantly better. Other works have different parameters in the
upper bounds, but all of these existing results require that $n > d_1$
 to obtain non-trivial upper
bounds~\cite{koltchinskii2011nuclear,bunea2011optimal,chen2013reduced,koltchinskii2011nuclear}. Unlike these prior work, we require a stochastic assumption on $\mX$ (the rows are i.i.d.) to ensure that the model is identifiable when $n < d_1$, e.g. there could be two sets of disjoint features that fit the training data equally well. 
Our algorithm produces an adaptive model whose complexity is controlled by
$k_1$ and $k_2$, which are adjusted dynamically depending on the sample size
and noise level. \cite{bunea2011optimal} and \cite{chen2013reduced} also point
out the need for adaptivity; however they still require $n > d_1$ and make
some strong assumptions. For instance, 
\ifnum\full=0
\cite{bunea2011optimal}
\else\cite{bunea2011optimal}\fi assumes that there is a gap between $\sigma_i(\mX M^{\transpose})$ and $\sigma_{i + 1}(\mX M^{\transpose})$ for some $i$. In comparison, our sufficient condition, the decay of $\lambda^*_i$, is more natural.
Our work is not directly comparable to standard variable selection techniques such as LASSO~\cite{tibshirani1996regression} because they handle univariate $\my$. 
Column selection algorithms~\cite{dan2015low}  generalize variable selection methods for vector responses, but they cannot address the identifiability concern.



\ifnum\full=1

\subsection{Missing proof in comparison}

This section proves the following corollary. 
\begin{corollary}\label{cor:mserank} Use the notation appeared in Theorem~\ref{thm:main_upper}. Let the ground-truth matrix $N \in \bbB_q(R^N_q)$ for $q \in [0, 1]$. We have whp
\begin{equation}
    \E[\|\hat \my - \my\|^2_2] = O\left(R^N_q\left(\frac{\sigma^2_{\epsilon}d_2}{n}\right)^{1-q/2} + n^{-c_0}\right).
\end{equation}
\end{corollary}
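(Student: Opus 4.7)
The plan is to derive Corollary~\ref{cor:mserank} as a direct consequence of Theorem~\ref{thm:main_upper} by optimizing the bound using the $\ell_q$-ball constraint on the singular values of $N$. Let $\tau = \theta \sigma_{\epsilon}\sqrt{d_2/n}$, so that $\ell^*$ in Theorem~\ref{thm:main_upper} is precisely the number of singular values of $N$ exceeding $\tau$. Starting point:
\begin{equation*}
  \E[\|\hat \my - \my\|^2_2] \leq \sum_{i > \ell^*}(\sigma^N_i)^2 + O\Bigl(\frac{\ell^* d_2 \theta^2 \sigma^2_{\epsilon}}{n}\Bigr) + O(n^{-c_0}),
\end{equation*}
so it suffices to bound the first two terms by $O\bigl(R^N_q(\sigma^2_{\epsilon}d_2/n)^{1-q/2}\bigr)$.

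First, I would bound $\ell^*$. Since $\sigma^N_i > \tau$ for all $i \leq \ell^*$, the constraint $\sum_i (\sigma^N_i)^q \leq R^N_q$ gives $\ell^* \tau^q \leq R^N_q$, i.e.\ $\ell^* \leq R^N_q \tau^{-q}$. Substituting into the second term and using $\tau^2 = \theta^2 \sigma^2_{\epsilon} d_2/n$ yields
\begin{equation*}
  \frac{\ell^* d_2 \theta^2 \sigma^2_{\epsilon}}{n} \leq R^N_q \tau^{-q}\cdot \tau^2 = R^N_q\tau^{2-q} = O\Bigl(R^N_q\bigl(\tfrac{\sigma^2_{\epsilon}d_2}{n}\bigr)^{1-q/2}\Bigr).
\end{equation*}

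Next, I would bound the tail $\sum_{i>\ell^*}(\sigma^N_i)^2$ using the standard $\ell_q$ truncation trick: for $i>\ell^*$, $\sigma^N_i \leq \tau$, so $(\sigma^N_i)^{2-q}\leq \tau^{2-q}$ and
\begin{equation*}
  \sum_{i>\ell^*}(\sigma^N_i)^2 = \sum_{i>\ell^*}(\sigma^N_i)^{2-q}(\sigma^N_i)^q \leq \tau^{2-q}\sum_{i>\ell^*}(\sigma^N_i)^q \leq R^N_q \tau^{2-q},
\end{equation*}
which is of the same order as the previous term. Adding the two contributions and the $O(n^{-c_0})$ residual gives the claim.

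There is no real obstacle here; the proof is essentially a one-line application of the $\ell_q$-ball machinery standard in high-dimensional statistics. The only mild subtlety is checking that Theorem~\ref{thm:main_upper}'s bound is in the form $\sum_{i>\ell^*}(\sigma^N_i)^2$ rather than the stronger $\sum_{i\geq\ell^*}$ variant, which is harmless for the calculation above since $(\sigma^N_{\ell^*})^2 \leq R^N_q\tau^{2-q}$ by the same argument. Choosing $q=0$ recovers the rank bound $O(\sigma^2_{\epsilon}\Rank(N) d_2/n + n^{-c_0})$ mentioned in the comparison.
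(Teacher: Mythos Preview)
Your proof is correct and follows essentially the same approach as the paper: both split the Theorem~\ref{thm:main_upper} bound into the tail $\sum_{i>\ell^*}(\sigma^N_i)^2$ and the head term $\ell^*\tau^2$, and control each via the $\ell_q$ constraint together with the threshold property $\sigma^N_i\gtrless\tau$. The only cosmetic difference is that the paper phrases the tail bound as a constrained maximization and then cancels the head term against the unused portion $\sum_{i\leq\ell^*}(\sigma^N_i)^q\,\tau^{2-q}$ of the budget, whereas you use the direct factorization $(\sigma^N_i)^2=(\sigma^N_i)^{2-q}(\sigma^N_i)^q$ and the separate bound $\ell^*\leq R^N_q\tau^{-q}$; both routes yield the same $O(R^N_q\tau^{2-q})$.
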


\begin{proof} We first prove the case $q = 0$ as a warmup. 
Observe that 
$$\|N\|^2_F - \sum_{i \leq \ell^*}(\sigma^N_i)^2 = \sum_{\ell^* < i \leq r}(\sigma^N_i)^2 \leq \frac{\theta^2 \sigma^2_{\epsilon} d_2 (r- \ell)}{n} .$$
The last inequality uses $\sigma^N_i \leq \theta \sigma_{\epsilon}\sqrt{\frac{d_2}{n}}$ for $i > \ell^*$. Therefore, we have
$$\E[\|\hat \my - \my \|^2)2] \leq O\left(\frac{(r - \ell^*)\theta^2 \sigma^2_{\epsilon}d_2}{n} + \frac{\ell^* d_2 \theta^2 \sigma^2_{\epsilon}}{n} + n^{-c_0}\right)= O\left(\frac{r\theta^2 \sigma^2_{\epsilon}d_2}{n} + n^{-c_0}\right).$$

Next, we prove the general case $q \in (0, 1]$. We can again use an optimization view to give an upper bound of the MSE. We view $\sum_{i \leq d_2}(\sigma^N_i)^q \leq R_q$ as a constraint. We aim to maximize the uncaptured signals, i.e., solve the following optimization problem

\begin{align*}
    \mbox{maximize:} \quad \quad   & \sum_{i > \ell^*} (\sigma^N_i)^2 \\
    \mbox{subject to:} \quad \quad  & \sum_{i > \ell}(\sigma^N_i)^q \leq R_-, \quad \mbox{ where $R_- = R_q - \sum_{i \leq \ell^*}(\sigma^N_i)^q$} \\
    & \sigma^N_i \leq \frac{\theta^2 \sigma^2_{\epsilon}d_2}{n}, \quad \mbox{ for $i \geq \ell^*$.}
\end{align*}

The optimal solution is achieved when $(\sigma^N_i)^2 = \frac{\theta^2 \sigma^2_{\epsilon}d_2}{n}$ for $\ell^* < i \leq \ell^* + k$, where $k = \frac{R_-}{\left(\frac{\theta^2 \sigma^2_{\epsilon}d_2}{n}\right)^{\frac q 2}}$, and $\sigma_i^N = 0$ for $i > \ell^* + k$. We have 

\begin{align}
    & \E[\|\hat \my - \my\|^2_2] \nonumber \\
    & \leq \sum_{i > \ell^*}(\sigma^N_i)^2 + O\left(\frac{\ell^* d_2 \theta^2 \sigma^2_{\epsilon}}{n} + n^{-c_0}\right) \nonumber \\
    & = \left(R_q - \sum_{i \leq \ell^*}(\sigma^N_i)^q\right)\left(\frac{\theta^2 \sigma^2_{\epsilon}d_2}{n}\right)^{1-q/2} + O\left(\frac{\ell^* d_2 \theta^2 \sigma^2_{\epsilon}}{n} + n^{-c_0}\right) \label{eqn:genrank} 
\end{align}
We can also see that 
$$\sum_{i \leq \ell^*}(\sigma^N_i)^q \left(\frac{\theta^2 \sigma^2_{\epsilon}d_2}{n}\right)^{1-q/2} \geq \sum_{i \leq \ell^*}\left(\frac{\theta^2 \sigma^2_{\epsilon}d_2}{n}\right)^{q/2}\left(\frac{\theta^2 \sigma^2_{\epsilon}d_2}{n}\right)^{1-q/2} = \ell^*\left(\frac{\theta^2\sigma^2_{\epsilon}d_2}{n}\right).$$
Therefore,
$$(\ref{eqn:genrank})\leq  O\left(R_q\left(\frac{\theta^2 \sigma^2_{\epsilon}d_2}{n}\right)^{1-q/2} + n^{-c_0}\right).$$
\end{proof}

\fi

\section{Experiments}\label{sec:exp}
\ifnum\full=1
\begin{table*}[!htbp]
\centering 
\resizebox{0.8\textwidth}{!}
{\begin{minipage}{\textwidth}
\centering 
\begin{tabular}{l|l|l|l|l|l|l|l} \hline
Model                 & MSE$_{out}$         & MSE$_{in}$         & $\text{MSE}_{out - in}$ & $R^2_{out}$(bps)        & $R^2_{in}$ (bps)           & Sharpe             & $t$-statistic     \\ \hline
ARRR, N= 983       & \textbf{0.9935} & 1.0140 & \textbf{-0.0205}  & \textbf{46.3761} & 158.2564  & \textbf{2.4350} & \textbf{8.3268}  \\
Lasso              & 1.1158 & 0.3953 & 0.7205 & 6.6049  & 7147.0116 & 2.1462 & 0.0601  \\ 
Ridge              & 1.2158 & 0.1667 & 1.0491 & 9.8596  & 8511.9076 & 0.6603 & -0.0497 \\ 
Reduced ridge & 1.0900 & 0.8687 & 0.2213 & 13.0321 & 1555.5136 & 0.3065 & -0.3275 \\ 
RRR &  1.2200 & 0.5867 & 0.6332 & 7.0830 & 4121.2548 & 0.3647 & -0.6626\\ 
Nuclear norm &  1.2995 & 0.12078 & 1.1787 & 4.7297 & 8789.0625 & 0.6710 & 0.2340\\
PCR &  1.0259 & 0.8456 & 0.1802 & 1.1278 & 1544.7258 & 1.8070 & 0.3947\\

\hline

ARRR, N= 2838      & \textbf{1.0056} & 0.9050 & \textbf{0.1006} & \textbf{18.5761} & 689.0625  & \textbf{1.6239} & \textbf{15.4134} \\ 
Lasso              & 1.0625 & 0.5286 & 0.5339 & 1.1236  & 6029.5225 & 0.5954 & 0.0179  \\ 
Ridge              & 1.0289 & 0.6741 & 0.3548 & 0.2116  & 5342.1481 & 0.5739 & 0.0670  \\ 
Reduced ridge & 1.9722 & 0.7373 & 1.2349 & 1.0816  & 2416.7056 & 1.5482 & 0.0619  \\ 
RRR & 1.0873 & 0.61376 & 0.4735 & 4.5795 & 3844.124 & -0.477 & 0.6399\\
Nuclear norm & 1.1086 & 0.15346 & 0.9551 & 2.2097 & 8461.2402 & -0.3698 & -0.8986 \\
PCR & 1.0263  & 0.5336 & 0.4927 & 5.233 &  4653.9684& 1.2799 & 0.6990 \\
\hline
\end{tabular}
\captionsetup{font=normal}
\captionsetup{justification=centering}
\caption{{Summary of results for equity return forecasts. $R^2$ are measured by basis points (bps). $1\mathrm{bps} = 10^{-4}$.  Bold font denotes the best \emph{out-of-sample} results and smallest gap.}}
\label{table:exp_summary}
\end{minipage}}
\end{table*}
\begin{table*}[h!]
\centering
\resizebox{0.8\textwidth}{!}{\begin{minipage}{\textwidth}
\centering
\begin{tabular}{l|l|l|l|l|l}
\hline
Model    & MSE$_{in}$ & MSE$_{out}$ & MSE$_{out-in}$ & Corr$_{in}$ & Corr$_{out}$        \\ \hline
ARRR      & 5.0104 $\pm$ {0.38}  &  \textbf{9.4276} $\pm$ 2.31 &  \textbf{4.4172}  & 0.7425 $\pm$ 0.07 & \textbf{0.6730} $\pm$ 0.13 \\
Lasso     & 2.3755  $\pm$ 1.95 & 14.8279 $\pm$ 4.81 & 12.4524 &  0.9171  $\pm$ 0.09 & 0.4754  $\pm$  0.15\\
Ridge     &1.3974 $\pm$ {0.53}  & 13.6244 $\pm$ 4.39 & 12.2270 & 0.9555  $\pm$ 0.04 & 0.4742 $\pm$ 0.17 \\
Reduced ridge & 4.5260 $\pm$ {1.93}  & 12.2339 $\pm$ 2.70  & 7.7079  & 0.7905 $\pm$ 0.09 & 0.4972 $\pm$ 0.18 \\
RRR       & 4.3456 $\pm$ {0.47} & 13.0768 $\pm$ 2.63 & 8.7313  & 0.7725 $\pm$ 0.12 & 0.3820 $\pm$ 0.22\\
Nuclear norm  & 4.9190 $\pm$ {2.04} & 13.0532 $\pm$ 4.38 & 8.6677 & 0.7872  $\pm$ 0.10 & 0.4869 $\pm$ 0.16\\
PCR   & 6.4037 $\pm$ {1.99} &13.0847$\pm$ 4.19 & 8.8892  & 0.7199 $\pm$ 0.05 & 0.4861 $\pm$ 0.15 \\
 \hline
\end{tabular}
\captionsetup{font=normal}
\captionsetup{justification=centering}
\caption{Average results for Twitter dataset from 10 random samples. Bold font denotes the best \emph{out-of-sample} results and smallest gap}
\label{table:main_tweets}
\end{minipage}}
\end{table*}
\fi
\ifnum\full=0
\begin{table}
\centering
\captionsetup{font=small}
\captionsetup{width=1.0\linewidth}
\captionsetup{justification=centering}
\caption{{Summary of results for equity return forecasts (left) and average results for Twitter (right) from 10 random samples. $R^2$ are measured by basis points (bps). $1\mathrm{bps} = 10^{-4}$.  Bold font denotes the best \emph{out-of-sample} results and smallest gap. $out-in$ denotes MSE$_{out-in}$}}. 
\label{table:exp_summary}
\begin{adjustbox}{width=0.95\columnwidth,center}
\begin{tabular}{c|r|r|r|r|r|r|r|r} 
\hline
\multicolumn{1}{c}{}   & \multicolumn{5}{c|}{Equity return}                                                                                                                                & \multicolumn{3}{c}{Twitter dataset}                                                                     \\ 
\hline
Model                  & \multicolumn{1}{c|}{$R^2_{out}$ } & \multicolumn{1}{c|}{Sharp} & \multicolumn{1}{c|}{t-stat} & \multicolumn{1}{c|}{MSE$_{out}$ } & \multicolumn{1}{c|}{$out-in$ } & \multicolumn{1}{c|}{corr$_{out}$ } & \multicolumn{1}{c|}{MSE$_{out}$ } & \multicolumn{1}{c}{$out-in$ }  \\ 
\hline
$\proc{Adaptive-RRR}$  & \textbf{18.576}                   & \textbf{1.623}             & \textbf{15.413}             & \textbf{1.005}                    & \textbf{0.1006}                & \textbf{0.67} $\pm$ 0.13           & \textbf{9.42} $\pm$ 2.31          & \textbf{4.417}                 \\
Lasso                  & 1.124                             & 0.595                      & 0.018                       & 1.063                             & 0.534                          & 0.47 $\pm$ 0.15    & 14.82 $\pm$ 4.81                  & 12.452                         \\
Ridge                  & 0.212                             & 0.574                      & 0.067                       & 1.029                             & 0.355                          & 0.47 $\pm$ 0.17                    & 13.62 $\pm$ 4.39                  & 12.2\textbf{} 27               \\
Reduced ridge          & 1.082                             & 1.548                      & 0.062                       & 1.972                             & 1.235                          & 0.49 $\pm$ 0.18                    & 12.23 $\pm$ 2.70                  & 7.708                          \\
RRR                    & 4.580                             & -0.477                     & 0.640                       & 1.087                             & 0.474                          & 0.38 $\pm$ 0.22                    & 13.07 $\pm$ 2.63                  & 8.731                          \\
Nuclear norm           & 2.210                             & -0.370                     & -0.899                      & 1.109                             & 0.955                          & 0.48 $\pm$ 0.16                    & 13.05 $\pm$ 4.38                  & 8.668                          \\
PCR                    & 5.233                             & 1.280                      & 0.699                       & 1.026                             & 0.493                          & 0.48 $\pm$ 0.15                    & 13.08 $\pm$ 4.19                  & 8.889                          \\
\hline
\end{tabular}
\end{adjustbox}
\vspace{-5mm}
\end{table}

\fi
We apply our algorithm on an equity market and a social network dataset
to predict equity returns and user popularity respectively.
Our baselines include ridge regression (``Ridge''), reduced rank ridge regression \cite{mukherjee2011reduced} (``Reduced ridge''), LASSO (``Lasso''), nuclear norm regularized regression (``Nuclear norm''), reduced rank regression \cite{velu2013multivariate} (``RRR''), and principal component regression~\cite{agarwal2019robustness} (``PCR'').

\myparab{Predicting equity returns.}
We use a stock market dataset from an emerging market that consists of approximately 3600 stocks between 2011 and 2018. We focus on predicting the \emph{next 5-day returns}. For each asset in the universe, we compute its past 1-day, past 5-day, and past 10-day returns as features.  We use a standard approach to translate forecasts into positions~\cite{ang2014asset,zhou2014active}. We examine two  universes in this market: (i) \emph{Universe 1} is equivalent to S$\&$P 500 and consists of 983 stocks, and (ii) \emph{Full universe} consists of all stocks except for illiquid ones. 

\ifnum\full=0
\mypara{Results.} Table~\ref{table:exp_summary} (left)
reports the forecasting power and portfolio return for \emph{out-of-sample} periods in \emph{Full universe} (see our full version for \emph{Universe 1}).  We observe that \emph{(i)} The data has a low signal-to-noise ratio. The out-of-sample $R^2$ values of all the methods are close to 0. \emph{(ii)} $\proc{Adaptive-RRR}$ has the highest forecasting power. 
\emph{(iii)} $\proc{Adaptive-RRR}$ has the smallest in-sample and out-of-sample gap (see column ${out - in}$), suggesting that our model is better at avoiding spurious signals. 

\myparab{Predicting user popularity in social networks.}
We collected tweet data on political topics from Oct. 2016 to Dec. 2017.
Our goal is to predict a user's \emph{next 1-day} popularity, which is defined as the sum of retweets, quotes, and replies received by the user. There are a total of 19 million distinct users, and due to the huge size, we extract the subset of 2000 users with the most interactions for evaluation.
For each user in the 2000-user set, we use its past 5 days' popularity as features. We further randomly sample 200 users and make predictions for them, i.e., setting $d_2 = 200$ to make $d_2$ of the same magnitude as $n$. 

\ifnum\full=0
\fi
\mypara{Results.}
We randomly sample users for 10 times and report the average MSE and correlation (with standard deviations) for both \emph{in-sample} and \emph{out-of-sample} data (see  full version for more results). In Table~\ref{table:exp_summary} (right) we can see results consistent with the equity returns experiment: \emph{(i)}
$\proc{Adaptive-RRR}$ yields the best performance in out-of-sample MSE and correlation. \emph{(ii)}
$\proc{Adaptive-RRR}$ achieves the best generalization error by having a much smaller gap between training and test metrics.
\fi

\ifnum\full=1

\myparab{Predicting equity returns.}
We use a stock market dataset from an emerging market that consists of approximately 3600 stocks between 2011 and 2018. We focus on predicting the \emph{next 5-day returns}. For each asset in the universe, we compute its past 1-day, past 5-day and past 10-day returns as features.  We use a standard approach to translate forecasts into positions~\cite{ang2014asset,zhou2014active}. We examine two  universes in this market: (i) \emph{Universe 1} is equivalent to S$\&$P 500 and consists of 983 stocks, and (ii) \emph{Full universe} consists of all stocks except for illiquid ones. 

\mypara{Results.} Table~\ref{table:exp_summary} reports the forecasting power and portfolio return for \emph{out-of-sample} periods in two  universes.  We observe that \emph{(i)} The data has a low signal-to-noise ratio. The out-of-sample $R^2$ values of all the methods are close to 0. \emph{(ii)} $\proc{Adaptive-RRR}$ has the highest forecasting power. 
\emph{(iii)} $\proc{Adaptive-RRR}$ has the smallest in-sample and out-of-sample gap (see column MSE$_{out - in}$), suggesting that our model is better at avoiding spurious signals.

\myparab{Predicting user popularity in social networks.}
We collected tweet data on political topics from October 2016 to December 2017.
Our goal is to predict a user's \emph{next 1-day} popularity, which is defined as the sum of retweets, quotes, and replies received by the user. There are a total of 19 million distinct users, and due to the huge size, we extract the subset of 2000 users with the most interactions for evaluation.
For each user in the 2000-user set, we use its past 5 days' popularity as features. We further randomly sample 200 users and make predictions for them, i.e., setting $d_2 = 200$ to make $d_2$ of the same magnitude as $n$. 

\mypara{Results.}
We randomly sample users for 10 times and report the average MSE and correlation (with standard deviations) for both \emph{in-sample} and \emph{out-of-sample} data. In Table~\ref{table:main_tweets} we can see  results consistent with the equity returns experiment: \emph{(i)}
$\proc{Adaptive-RRR}$ yields the best performance in out-of-sample MSE and correlation. \emph{(ii)}
$\proc{Adaptive-RRR}$ achieves the best generalization error by having a much smaller gap between training and test metrics.

\subsection{Setup of experiments}\label{asec:exp}
\subsubsection{Equity returns}
We use daily historical stock prices and volumes from an emerging market to build our model. Our license agreement prohibits us to redistribute the data so we include only a subset of samples. Full datasets can be purchased by standard vendors such as quandl or algoseek. Our dataset consists of approximately 3,600 stocks between 2011 and 2018. 

\myparab{Universes.} We examine two different universes in this market \emph{(i) Universe 1} is equivalent to S\&P 500. It consists of 800 stocks at any moment. Similar to S\&P 500, the list of stocks appeared in universe 1 is updated every 6 months. A total number of 983 stocks have appeared in universe 1 at least once. \emph{(ii) Universe 2} consists of all stocks except for illiquid ones. This excludes the smallest 5\% stocks in capital and the smallest 5\% stocks in trading volume. Standard procedure is used to avoid universe look-ahead and survival bias~\cite{zhou2014active}.  

\myparab{Returns.} We use return information to produce both features and responses. Our returns are the ``log-transform'' of all open-to-open returns~\cite{zhou2014active}. For example, the next 5-day return of stock $i$ is $\log(p_{i, t+5}/p_{i,t})$, where $p_{i,t}$ is the open price for stock $i$ on trading day $t$. Note that all non-trading days need to be removed from the time series $p_{i,t}$. Similarly, the past 1-day return is $\log(p_{i,t}/p_{i,t-1})$.

\myparab{Model.} We focus on predicting the \emph{next 5-day returns} for different universes. 
Let $\mr_t = (r_{1, t}, r_{2, t}, \dots, r_{d_2, t})$, where $r_{i,t}$ is the next 5-day return of stock $i$ on day $t$.  
Our regression model is 
\begin{align}
\mr_{t+1} = M \mx_t + \epsilon. 
\end{align}

The features consist of the past 1-day, past 5-day, and past 10-day returns of all the stocks in the same universe. For example, in \emph{Universe 1}, the number of responses is $d_2 = 800$. The number of features is $d_1 = 800 \times 3 = 2,400$. We further optimize the hyperparameters $k_1$ and $k_2$ by using a validation set because our theoretical results are asymptotic ones (with unoptimized constants). Baseline models use the same set of features and the same hyper-parameter tuning procedure.

We use three years of data for training, one year for validation, and one year for testing. 
The model is re-trained every test year. 
For example, the first training period is May 1, 2011 to May 1, 2014. The corresponding validation period is from June 1, 2014 to June 1, 2015.
We use the validation set to determine the hyperparameters and build the model, and then we use the trained model to forecast returns of equity in the same universe from July 1, 2015 to July 1, 2016. Then the model is retrained by using data in the second training period (May 1, 2012 to May 1, 2015). This workflow repeats. To avoid looking-ahead, there is a gap of one month between training and validation periods, and between validation and test periods. 

We use standard approach to translate forecasts into positions~\cite{grinold2000active,prigent2007portfolio,ang2014asset,zhou2014active}. Roughly speaking, the position is proportional to the product of forecasts and a function of average dollar volume. We allow short-selling. We do not consider transaction cost and market impact. We use Newey-West estimator to produce $t$-statistics of our forecasts. 
\subsubsection{User popularity}
We use Twitter dataset to build models for predicting a user's \emph{next 1-day} popularity, which is defined as the sum of retweets, quotes, and replies received by the user.

\myparab{Data collection.}
We collected 15 months Twitter data from October 01, 2016 to December 31, 2017, using the Twitter streaming API. We tracked the tweets with topics related to politics with keywords ``trump”, ``clinton”, ``kaine”, ``pence”, and ``election2016”. There are a total of 804 million tweets and 19 million distinct users. User $u$ has one interaction if and only if he or she is retweeted/replied/quoted by another user $v$. Due to the huge size, we extract the subset of 2000 users with the most interactions for evaluation.

\myparab{Model.} Our goal is to forecast the popularity of a random subset of 200 users.
Let $\my_t = (y_{1,t}, \dots, y_{d_2, t})$, where $d_2 = 200$ and $y_{i,t}$ is the popularity of user $i$ at time $t$. Our regression model is 
\begin{equation}
    \my_{t+1} = M\mx_t + \epsilon.
\end{equation}

\myparab{Features.} For each user, we compute his/her daily popularity for 5 days prior to day $t$. Therefore, the total number of features is $d_1 = 2000 \times 5 = 10,000$. 

We remark that there are $n = 240$ observations. This setup follows our assumption $d_1 \gg d_2 \approx n$. 

\myparab{Training and hyper-parameters.}
We use the period from October 01, 2016 to June 30, 2017 as the training dataset, the period from July 01, 2017 to October 30 as validation dataset to optimize the hyper-parameters, and the rest of the period, from September 10, 2017 to December 31, 2017, is used for the performance evaluation.

\fi
\section{Conclusion}
\vspace{-0.3cm}
This paper examines the low-rank regression problem under the high-dimensional setting. We design the first learning algorithm with provable statistical guarantees under a mild condition on the features' covariance matrix. Our algorithm is simple and computationally more efficient than low rank methods based on optimizing nuclear norms. Our theoretical analysis of the upper bound and lower bound can be of independent interest. Our preliminary experimental results demonstrate the efficacy of our algorithm. The full version explains why our (algorithm) result is unlikely to be known or trivial. 

\section*{Broader Impact}
The main contribution of this work is theoretical. Productionizing downstream applications stated in the paper may need to take six months or more so there is no immediate societal impact from this project.

\section*{Acknowledgement}
We thank anonymous reviewers for helpful comments and suggestions. Varun Kanade is supported in part by the Alan Turing Institute under the EPSRC grant EP/N510129/1.
Yanhua Li was supported in part by NSF grants IIS-1942680 (CAREER), CNS-1952085, CMMI-1831140, and DGE-2021871.
Qiong Wu and Zhenming Liu are supported by NSF grants NSF-2008557, NSF-1835821, and NSF-1755769.
The authors acknowledge William \& Mary Research Computing for providing computational resources and technical support that have contributed to the results reported within this paper.

 
\bibliographystyle{plain}
\bibliography{relate}

\end{document}